%%% This is the arXiv version of the paper for initial submission %%%%

\documentclass[12pt, letterpaper, twoside]{article}
\usepackage[square,sort,comma,numbers]{natbib}
\usepackage{lineno}
\usepackage{lmodern}
\usepackage{amsmath,amssymb}
\usepackage{amsthm}
\usepackage{hyperref}
\usepackage{amsfonts}
\usepackage{verbatim}
\usepackage{url}
\usepackage{graphicx}
\usepackage{float}
\usepackage{booktabs}
\usepackage{mathrsfs}
\usepackage{epstopdf}
\usepackage{cancel}
\usepackage{MnSymbol}
\usepackage[dvipsnames]{xcolor}
\usepackage[margin=1in]{geometry}
\usepackage{algorithm}% http://ctan.org/pkg/algorithms
\usepackage{algpseudocode}% http://ctan.org/pkg/algorithmicx
\algrenewcommand\algorithmicrequire{\textbf{Input:}}
\algrenewcommand\algorithmicensure{\textbf{Output:}}
\setcounter{MaxMatrixCols}{10}

\newtheorem{theorem}{Theorem}[section]
\newtheorem{lemma}[theorem]{Lemma}

\newtheorem{definition}{Definition}[section]

\modulolinenumbers[5]
\allowdisplaybreaks

\usepackage{xcolor}
\usepackage[font=small]{caption}

\title{Characterizing Attitudinal Network Graphs through Frustration Cloud%\thanks{Grants or other notes
%about the article that should go on the front page should be
%placed here. General acknowledgments should be placed at the end of the article.}
}
%\subtitle{Texas State University}
%\titlerunning{Short form of title}        % if too long for running head

\author{Lucas Rusnak \and  Jelena Te\v{s}i\'{c}  %etc.
}
%\authorrunning{Short form of author list} % if too long for running head

        %  \\
%             \emph{Present address:} of F. Author  %  if needed
 %          \and
%           S. Author \at
 %             second address

\date{March 15 2021}

\begin{document}

%\numberwithin{theorem}{section}
%\numberwithin{lemma}{section}
%\numberwithin{corollary}{section}
%\numberwithin{definition}{section}

\maketitle

\begin{abstract}
%Sentence of subjectivity and challenges and bias of social and content networks here} 

Attitudinal Network Graphs are signed graphs where edges capture an expressed opinion; two vertices connected by an edge can be agreeable (positive) or antagonistic (negative). A signed graph is called balanced if each of its cycles includes an even number of negative edges. Balance is often characterized by the frustration index or by finding a single convergent balanced state of network consensus. In this paper, we propose to expand the measures of consensus from a single balanced state associated with the frustration index to the set of nearest balanced states. We introduce the \emph{frustration cloud} as a set of all nearest balanced states and use a graph-balancing algorithm to find all nearest balanced states in a deterministic way. Computational concerns are addressed by measuring consensus probabilistically, and we introduce new vertex and edge metrics to quantify \emph{status}, \emph{agreement}, and \emph{influence}. We also introduce a new global measure of controversy for a given signed graph and show that vertex status is a zero-sum game in the signed network. We propose an efficient scalable algorithm for calculating frustration cloud-based measures in social network and survey data of up to 80,000 vertices and half-a-million edges. We also demonstrate the power of the proposed approach to provide discriminant features for community discovery when compared to spectral clustering and to automatically identify dominant vertices and anomalous decisions in the network. 

%These metrics are shown to distinguish between promotion and promotability. 
%A Law of Conservation of Controversy is also proven that proves that the increase in an individuals status comes at the cost of others, and can but used to predict a \emph{potential} maximum status in a given social network. 

% Practical considerations of spanning tree sampling methods, tree sample quantities, and social network size are examined. Finally, status/influence are examined relative to spectral clustering where radial distance in the status/influence-cone implies a correlation to spectral clustering over various $k$.  

%\keywords{Frustration cloud \and Balanced Graph \and Consensus \and Status \and Controversy \and Signed Graph }
% \PACS{PACS code1 \and PACS code2 \and more}
%\subclass{MSC 05C22 \and MSC 05C38 \and MSC 05C85 \and MSC 05C90 \and MSC 91D30}
\end{abstract}

%
% For tables use
%\begin{table}
% table caption is above the table
%\caption{Please write your table caption here}
%\label{tab:1}       % Give a unique label
% For LaTeX tables use
%\begin{tabular}{lll}
%\hline\noalign{\smallskip}
%first & second & third  \\
%\noalign{\smallskip}\hline\noalign{\smallskip}
%number & number & number \\
%number & number & number \\
%\noalign{\smallskip}\hline
%\end{tabular}
%\end{table}

\section{Introduction}

Signed graph network representations of socio-technical networks offer richer modeling of relations between people, AI agents, products, and content. Attitudes captured by edges between two vertices can be agreeable (positive) or antagonistic (negative). Some social signed graph examples include team member evaluations in a company, student evaluations of instructors, movie recommendations based on common interests, or the ``trustworthiness'' of a product reviewer or seller in online stores. If there is a group decision to be made, consensus or majority voting guides the decisions and the final outcome in such networks. These sentiments have tangible real-world effects, such as annual performance scores and promotions in a corporation. Graph decision algorithms are rarely scrutinized, as consensus and majority voting are established social constructs \cite{snapnets}. Only when the outcomes are known can an individual's status be perceived as elevated or diminished relative to their peers, and even then only anecdotally questioned or explained. Such algorithms' sensitivity to bias, fraud, and falsehood has been put under a magnifying glass in the last couple of years, as state-of-art research examined the controversy of decisions in cases of status quo \cite{Li2005} or subgroup mobilization against other groups \cite{Kumar2018}. Research in the domain of consensus in signed graphs focuses on frustration index computation and algorithmic convergence to \emph{some} balanced state \cite{2019Altafini,She2020} and is one dimensional, as described in Section~\ref{sec:Related}. 

In this paper, we focus on finding multiple balanced states of the signed graph, as they represent different consensus outcomes. The main contribution of this paper is that it generalizes the notion of the frustration index to the {\bf frustration cloud}. The \emph{frustration cloud} is a set of all balanced states obtained by a minimal number of edge sign inversions. In comparison, the \emph{frustration index} characterizes the distance between the original signed graph and a single nearest balancing set. We also propose a spanning tree-based graph balancing algorithm that focuses on finding balanced states from spanning trees. The proposed approach works on {\bf any} signed graph, avoids the NP-hardness of finding the frustration index, and focuses on determining a basis of fundamental cycles to produce balanced states \cite{OH1}. A greedy approach to finding a basis of fundamental cycles is NP-hard in general, and Deo et al.~proposed some polynomial-time algorithms \cite{deo1982algorithms}. The \emph{graphB} algorithm introduces a deterministic methodology that finds all the nearest balanced states of a signed graph. To quantify levels of agreement in the network, we sample the frustration cloud via the associated family of balanced matroidal bases (spanning trees) \cite{SG}.  This statistically meaningful sampling of the frustration cloud produces a robust way to handle the brittleness of the data space for signed graph data and avoid challenges presented in \cite{Selbst2018}. 

The proposed spanning tree-based balancing method combines the requirement for statistical parity across the nearest balanced states with the requirement to consider \emph{all} vertices instead of few selected ones; this method relies on the spanning trees, not random walks \cite{Garimella2017}. The sentiments are reconstructed around a spanning tree to produce a set of nearest balanced states. The resulting balanced states are generalizations of bipartite graphs \cite{Berge1, Har1}, and the resulting negative edge cut defines two consensus-based sets. We use these consensus-based sets to characterize the importance of specific vertices and edges necessary to produce a majority consensus: \emph{status} measures an individual's contribution to reaching consensus over the frustration cloud; \emph{agreement} measures the edge's contribution to belonging to the majority consensus; and \emph{influence} measures the vertex based on its edge agreement scores.

\subsection{Contribution}
\label{sec:contr}

Social network analysis has not converged on how to assess the robustness, resilience, and reliability of the network algorithm outcomes, or how to identify anomalies in large signed network graphs. There is a clear need to measure the performance of algorithms that define outcomes, characterize consensus in social or multi-agent attitudinal networks as a unit, and assess vertex and edge contributions to graphs as a whole.  We propose a process that characterizes the impact of every vertex and edge in its entirety, and it may be used outside of social network analysis on any binary decision paradigm to examine the reliability of decision-making processes relative to some given ground state.  Researchers in multi-agent networks have focused on techniques to produce a \emph{single} convergent balanced state \cite{8062440,2019Altafini,7798380,Hu2013,Jiang2016,She2020}.

We propose a new discrete alternative to Laplacian dynamics, and we identify all nearest balanced states of a signed graph. Our main contribution is a novel signed graph methodology that (1) determines all the nearest balanced states via basis sampling via spanning trees, (2) quantifies the importance of each balanced state relative to the likelihood it will be become the consensus state, (3) quantifies an individual's \emph{status} relative to their peers, (4) characterizes the \emph{potential} maximum status of an individual over tie-break scenarios, (5) provides a constant metric of \emph{controversy} for the entire network that is subject to a Conservation Law, (6) quantifies an individual decision or opinion based on \emph{agreement}, (7) aggregates agreement to each individual to quantify \emph{influence} over others, (8) compares status and influence to quantify the positive/negative relationship of the entire network and provide a spectrum of status-influence, and (9) scales the tree-based balancing algorithm to {\bf graphB:} graph balancing using statistically significant sample of spanning trees \cite{graphB}. 

In the event that the sentiment data provided is related to promotions, and the outcome of those promotions are known, the research examines the efficacy of this new methodology by outcomes as status separates ``promoted'' from ``not promoted'' and identifies any outliers in either case, and where the influence separates ``voters'' from those ``voted on''. We show proof-of-concept implementation results on several large social networks and how status-influence measures of the vertex can discover contentious outcomes on Wikipedia administrator promotions. We also identify anomalous actors when outcomes are not known in the Slashdot dataset \cite{snapnets}, and we analyze the approach on a small survey dataset \cite{1954Read} which shows that in new status-influence space, vertices can be fully characterized in terms of community without the need to specify the number of clusters $k$ for spectral clustering.

\section{Background and Related Work}
\label{sec:Related}

In this section, we describe the state-of-art work in mathematical sociology (\ref{ssec:soc}) and signed graph frustration (\ref{ssec:frustration}), and we present related work in the fields of social network analysis and control (\ref{ssec:SNA}) and signed graph clustering (\ref{ssec:sign}).

\subsection{Related Work in Mathematical Sociology}
\label{ssec:soc}

A \emph{signed graph} consists of a collection of vertices that are linked together with undirected edges; positive sentiment between two vertices is modeled as positive edge ``+1'', and negative sentiment between two vertices is modeled as a negative edge ``-1''.  Fritz Heider introduced Balance Theory in 1948 \cite{Heider}. Balance theory examined consensus in triadic relationships in a signed triangle graph. Figure~\ref{fig:BalEx} shows all possible edge signs for a signed triangle graph. These eight graphs are different as they have different edge signs between specific vertices. Each of the eight signings is a \emph{state} of the triangle graph. \emph{Balance Theory} is the base model of attitude change analysis among three persons in a signed graph \cite{Heider}. A triangle state is considered \emph{balanced} if the product of the edge signs is positive and \emph{unbalanced} if the product of edge signs is negative. A balanced triadic relationship in a signed graph is captured as ``the enemy of my enemy is my friend'' paradigm in mathematical social modeling \cite{Leskovec2010a}.

\begin{figure}[!ht]
    \centering
    \includegraphics{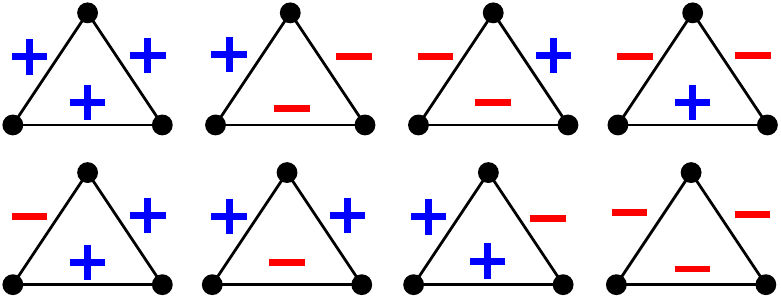}
    \caption{Sign graph triangle where the top row are balanced states, and the bottom row are unbalanced states.}
    \label{fig:BalEx}
\end{figure}

There are four different ways to achieve a balanced state in the triangle, as depicted in Figure~\ref{fig:BalEx}, and we emphasize that there is \emph{more than one} balanced state. These multiple consensus scenarios all capture different aspects of reaching consensus. A specific balanced state is a snapshot of the balanced assessment of the network, and it is \emph{not sufficient} as all sentiments are not necessarily equal, as illustrated in Figure~\ref{fig:BalEx} (top row).  In this paper, we implement an algorithm that considers the collection of nearest consensus states to characterize network graph behavior.  This type of analysis considers all possible consensus outcomes of attitudinal network graphs for a complete consensus characterization. 

\emph{Mathematical sociology} was introduced when Rashevsky characterized large social networks as graphs, where vertices are persons and edges measure the level of acquaintanceship \cite{Rashevsky1984}. The mathematical foundation of signed graphs \cite{Har0} and social balance theory \cite{Heider, Abelson1958} introduced the concepts of modeling balance and agreement in social networks using more complex mathematical models.  Harary introduced the \emph{frustration index} of a signed graph as a measure of how far the network graph is from a state of structural balance \cite{Har1}.  Harary's proposed measure is the smallest number of edges whose negation in the network graph results in a balanced signed graph.  Harary's attitudinal balanced model was formalized in graph-theoretic terms \cite{Har2} and fully characterized by Zaslavsky \cite{SG} in matroid-theoretic terms. Davis \cite{Davis1967} studied the necessary and sufficient conditions for clustering of attitudinal graphs. Mathematical sociological modeling has evolved to address sociological phenomena in various fields of social science and helps in understanding, evaluating, and predicting patterns of social relationships and interactions \cite{Hunter1984}. 

\subsection{Balance and Frustration in Signed Graph}
\label{ssec:frustration}

A \emph{signed graph} $\Sigma$ is a pair $(G,\sigma)$ that consists of a graph $G = (V,E)$ and an edge-signing function $\sigma : E \rightarrow \{+1,-1\}$. For a set of edges $E$ in a signed graph $\Sigma$, let $E^{+}$ (resp. $E^{-}$) denote the set of positive (resp. negative) edges of $G$ --- the signs of the edges are regarded as sentiments between two vertices. The \emph{sign of a subgraph} is the product of the signs of the edges in that subgraph. A signed graph is \emph{balanced} if the sign of every circle is positive \cite{Har0,Har2}. If the graph $\Sigma$ is not balanced, there exists a set of edges whose sign reversal produces a balanced signed graph, and that set is called a \emph{balancing set}.

\begin{figure}[!ht]
    \centering
    \includegraphics[width=3.5in]{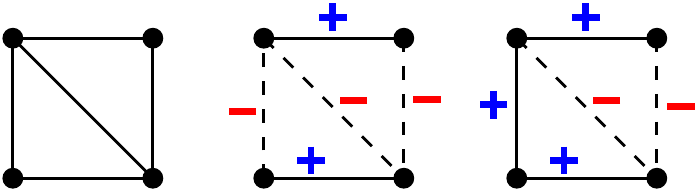}
    \caption{left: an underlying graph $G$; middle: an example balanced signing of $G$; right: an unbalanced signing of $G$, $\Sigma$. $G$ and $\Sigma$ are used as signed graph examples in the rest of the paper.}
    \label{fig:MainEx}
\end{figure}

A balancing set is \emph{minimal} if no proper subset is a balancing set. The \emph{frustration index} of a signed graph $\Sigma$, denoted $Fr(\Sigma)$, is the smallest number of edges whose change in sign can result in a balanced signed graph \cite{Har1}. All balanced signed graphs necessarily have a frustration of $0$. The frustration index has applications in various areas including physics \cite{Spin4, Spin5}, economics \cite{2011Yoshikawa}, negative feedback loops in Boolean networks \cite{SONTAG2008518}, and statistical mechanics \cite{StatisticalMechanics}. Each balanced state represents a consensus outcome, meaning all paths between two vertices have the same sign. These concepts are related by a Theorem of Harary and motivate our proposed probabilistic consensus model to examine all the nearest balanced states.

\begin{theorem}[\cite{Har0, Har1}] For a signed graph $\Sigma'$, the following are equivalent:
\begin{enumerate}
 \setlength{\leftmargin}{0pt}
    \item $\Sigma'$ is balanced. (All circles are positive.)
    \item For every vertex pair $(v_i,v_j)$ with $v_i,v_j \in V$ all $(v_i,v_j)$-paths have the same sign. (Agreement or consensus)
    \item There exists a bipartition of the vertex set into sets $U$ and $W$ such that an edge is negative if, and only if, it has one vertex in $U$ and one in $W$. The bipartition ($U$,$W$) is called the \emph{Harary-bipartition}.) 
    \item $Fr(\Sigma') = 0$. ($0$ frustration.)
\end{enumerate}
\label{t:HararyCut}
\end{theorem}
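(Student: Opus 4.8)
The plan is to prove all four statements equivalent via the cyclic chain $(1) \Rightarrow (2) \Rightarrow (3) \Rightarrow (1)$, supplemented by the essentially definitional equivalence $(1) \Leftrightarrow (4)$. Throughout we may assume $\Sigma'$ is connected: otherwise each item can be checked component by component, and in $(2) \Rightarrow (3)$ the bipartitions of the components are combined (with an arbitrary choice of which side is which in each component). The equivalence $(1) \Leftrightarrow (4)$ requires essentially no work: $Fr(\Sigma') = 0$ says that the minimum number of edge sign changes needed to reach a balanced signing is zero, i.e. $\Sigma'$ is already balanced.

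For $(1) \Rightarrow (2)$, take two $(v_i,v_j)$-paths $P$ and $Q$ and consider the edge set $P \triangle Q$. Every vertex has even degree in it (the two paths share the same endpoints), so it decomposes into edge-disjoint circles, each positive by hypothesis; hence the product of the signs of the edges of $P \triangle Q$ is $+1$. Since edges common to $P$ and $Q$ contribute a square, this product equals $\sigma(P)\sigma(Q)$, so $\sigma(P) = \sigma(Q)$. For $(2) \Rightarrow (3)$, first observe that $(2)$ already forces every circle to be positive: splitting a circle at two of its vertices gives two paths of equal sign whose product is the circle's sign, hence $+1$. Consequently the sign of an arbitrary closed walk is $+1$ (the edges traversed an odd number of times form an even subgraph, a disjoint union of positive circles), and therefore the sign of any walk between two fixed vertices is well defined. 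Fix a root $r$ and set $\mu(v)$ to be the common sign of all $r$--$v$ walks, with $U = \mu^{-1}(+1)$ and $W = \mu^{-1}(-1)$. Concatenating an $r$--$u$ walk with an edge $uv$ shows $\sigma(uv) = \mu(u)\mu(v)$ for every edge, so $\sigma(uv) = -1$ exactly when $u$ and $v$ lie on opposite sides of $(U,W)$; this is the Harary-bipartition. Finally, $(3) \Rightarrow (1)$ is a parity argument: given such a bipartition, the negative edges of any circle are precisely the edges of the circle crossing between $U$ and $W$, and a closed walk crosses an even number of times, so every circle has an even number of negative edges and is positive.

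I expect the only genuinely delicate point to be the bookkeeping inside $(2) \Rightarrow (3)$: naively one wants to define $\mu(v)$ as the sign of an $r$--$v$ \emph{path} and then append the edge $uv$, but the resulting object need not be a simple path (the chosen $r$--$u$ path may already visit $v$), which threatens well-definedness and the identity $\sigma(uv) = \mu(u)\mu(v)$. The clean way around this, as sketched above, is to extract ``all circles are positive'' from $(2)$ first and thereby upgrade the statement ``paths between fixed endpoints have a common sign'' to ``walks between fixed endpoints have a common sign''; once walks are allowed, the telescoping identity holds with no case analysis. An alternative that avoids the issue altogether is to define $\mu$ through a fixed spanning tree of $\Sigma'$ and use that balance makes every fundamental circle positive, giving $\sigma(uv) = \mu(u)\mu(v)$ directly for tree edges and then for all edges; we will present whichever version reads more cleanly.
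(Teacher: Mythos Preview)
The paper does not supply a proof of this theorem: it is quoted as a classical result of Harary with citations to \cite{Har0,Har1}, and the rest of the paper uses it as a black box. Your argument is a correct self-contained proof of the equivalence, and the care you take in $(2)\Rightarrow(3)$---first extracting positivity of all circles from $(2)$ so that signs of \emph{walks}, not just paths, are well defined before setting $\mu(v)$---cleanly sidesteps the only real pitfall in the standard proof.
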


\begin{figure}[!ht]
    \centering
    \includegraphics{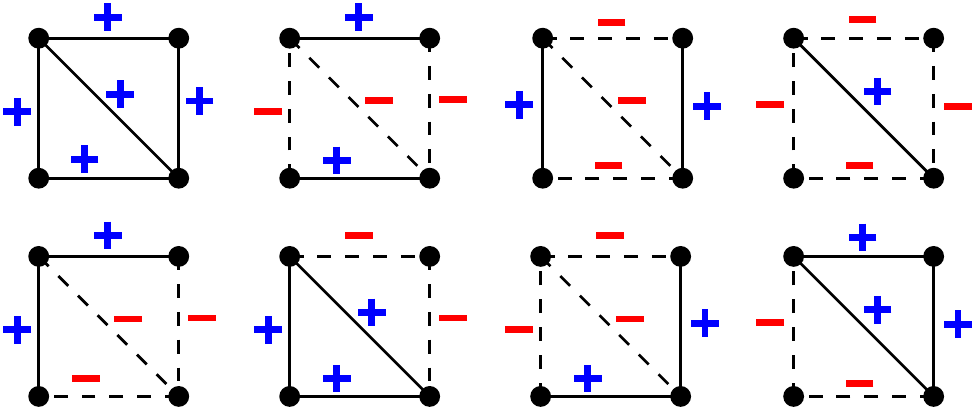}
    \caption{All $8$ balanced signed graphs of the underlying graph $G$ in Figure~\ref{fig:MainEx}. Harary-cut of negative edges is represented by dashed edges.}
    \label{fig:HCut}
\end{figure}

Figure~\ref{fig:HCut} shows all possible balanced states $\Sigma'$ on the given underlying graph $G$ from Figure~\ref{fig:MainEx} (left).  Each of the balanced states satisfies all conditions in Theorem \ref{t:HararyCut}. Once balanced, the set of negative edges whose deletion produces the Harary-bipartition is called the \emph{Harary-cut} of the balanced graph. The Harary-cuts are emphasized by representing the negative edges with dashed edges. 

The frustration index is the size of the smallest Harary-cut; for example in Figure~\ref{fig:MainEx}, $Fr(\Sigma) = 1$.  Computing the frustration index of a signed graph is an NP-hard problem. There exist scenarios that are solvable in polynomial time and for which exact large-scale solutions are possible. Wu and Chen proposed a branch-and-bound algorithm to balance signed graphs by editing edges and deleting vertices, demonstrating its efficiency over trivial and heuristic algorithms on inputs with up to $n=40$ vertices \cite{2013Wu}. In control of multi-agent systems, Altafini analyzed the convergence to a balanced state in the decision-making process and presented an effective way to compute the average consensus for a network with up to 100 vertices \cite{2019Altafini}.

Aref et al.~developed three binary linear programming models to compute the frustration index quickly and exactly as the solution to a global optimization problem. They demonstrated the efficiency of their techniques for inputs with up to 15,000 edges \cite{Aref2016,Aref2020} with an extension that allows for the incorporation of weights in the interval $[-1,+1]$ to determine weighted minimum frustration \cite{Aref2020}. The computational complexity of \cite{Aref2016,Aref2020} algorithms is bounded by a polynomial function of the size of the underlying graph.  

As balancing only requires the sign (sentiment) of the edge, and not the intensity (weight), we demonstrate that the balancing applications introduced in this paper produce a quantitative spectrum of vertex and edge metrics that drive balanced/consensus outcomes. By maintaining the separation of signs and weights, as suggested by Zaslavsky in \cite{BSG}, we are able to preserve the matroidal structure, and our approach immediately generalizes to any edge weight value by replacing each tree with the weight-product as in \cite{Tutte}, which is expected in future work. Another critical difference in our approach is the relaxation on determining the frustration index (or weighted frustration index). We are producing a set of balanced states with minimal (in containment) sign alterations, instead of the minimum value (in cardinality) for frustration. We propose an approach in Section \ref{sec:Balancing} that focuses on multiple nearest balanced states to avoid the NP-hardness of determining the frustration index while simultaneously analyzing multiple possible nearest consensus outcomes that scale with the size of social network. 

\subsection{Related Work in Social Network Analysis and Cybernetics}
\label{ssec:SNA}

Large virtual communities and decision networks of the $21^{st}$ century initiated the explosive growth of social network analysis and network science fields.  The analysis of largely digital traces of \emph{social networks at scale} expanded well-studied mathematical algorithms for reinforcement, information processing, social judgment, balance, and dissonance \cite{Hunter1984}. Wasserman et al. introduced social network analysis as algebraic graph representations and proposed a series of statistical tests \cite{wasserman1994social}. The domain research has focused on \emph{predicting} the existence and/or sentiments of edges in the graph, recommending content or a product, or identifying unusual trends. Baseline signed graph theory was used to explain the relative status that individuals hold within in a social network \cite{Leskovec2010a,Leskovec2010b} and focused on socially-conscious science to help understand bias, controversy, conflict, and trust \cite{Mishra2011,Guha2004}.  All mathematical models in network science that model intents and trends in online social networks have relied on aspects of well-established consensus-based models in  signed graph theory \cite{Garimella2017, Yuan2017, Chen2018} and balanced modeling \cite{Javed2018,Lu2011,Ruby2017,Tang2016,Zhao2018,Zhou2018}.

A multitude of measures have been proposed to access the rich information coded in signed graphs. Mishra et al. \cite{Mishra2011} introduces \emph{trustworthiness} and \emph{deserve} as local vertex-based measures of bias to reflect the expected weights of out- and in-edges. \emph{Controversy} was introduced by Garimella et al. \cite{Garimella2017} as the likelihood a random walk will return to the same side of the network. This method improved the examination of triangles in \cite{Leskovec2010b} by including pendant vertices and proposed to reduce controversy by bridging opposing viewpoints. \emph{Conflict} is defined in Chen et al. \cite{Chen2018} by examining the Laplacian matrix to produce a ``Conservation Law of Conflict'' reminiscent of Kirchhoff's laws --- we provide our own Conservation Law of Controversy in subsection \ref{ssec:controversy}. Kumar et al. \cite{Kumar2018} discuss group mobilization against other groups to describe conflict in intra-community interactions, and Guha et al. \cite{Guha2004} examines \emph{trust} through an iterative build of belief matrix. Yuan et al. \cite{Yuan2017} introduces a sign prediction model for sparse data edge prediction in which they convert the original graph into a edge-dual graph and apply machine learning to predict signs in sparse graphs. Established methods of network graph analysis focus on endorsement analysis through local topology analytics and strive for agreement by changing \cite{Leskovec2010b}, adding \cite{Garimella2017} or removing \cite{Guha2004} edges in the graph. We propose to analyze the signed graph in its entirety and characterize the vertices and edges through {\bf frustration cloud}-based attributes.

In cybernetics, a multi-agent network dynamic is often too complex for existing tools to analyze the entire network and collective dynamic reactions. It is known \cite{OH1} that our methodology for balancing a signed graph works on any signed graph and certain classes of hypergraphs. Altafini \cite{2013Altafini,2019Altafini} proposes controllability and consensus algorithms in networks by examining the effects of the bipartite consensus of Harary \cite{Har0}. Pan et al. \cite{7798380} examine the bipartite structure of Laplaican dynamics and node decomposition. Hu et al. show that the ideal state of the multi-agent system can modeled as a balanced graph, and that the system converges to the optimal state through the bipartite consensus iterations \cite{Hu2013}, while uncontrollability and stabilizability is examined by Alemzadeh et al. in \cite{8062440}. Algorithms for the characterization of the status quo have been examined in \cite{Li2005} for transitive graphs. Jiang et al. propose a sign-driven consensus as a control protocol measured via Laplacian dynamics \cite{Jiang2016}. She et al. \cite{She2020} examine consensus in terms of graphical characterizations of the controllability of signed networks and offers a heuristic algorithm for leader selection based on balance theory. This prior work focuses on producing a single balanced state.  In this paper, we propose a discrete alternative to find all nearest balanced states of the network via frustration cloud graph analysis.

\subsection{Related Work in Sign Graph Clustering}
\label{ssec:sign}

We compare the methods introduced in this paper to standard spectral clustering on only positive edges for community detection. Researchers have only recently started mining negative links in networks for community detection \cite{Esmailian2015}.  Spectral clustering for signed graphs was introduced by Kunegis et. al \cite{Kunegis2010} by the way of a positive semi-definite modified Laplacian matrix approach. The approach essentially counts positive edges between clusters and negative edges within clusters. Multiple signed graph clustering methods have been proposed since \cite{Tang2016}, normalizing Lapacian in different ways. A more recent survey is begin prepared that compares multiple signed spectral clustering methods in terms of effectiveness for the community finding and scalability for large network graphs.
%\cite{Tomasso2021}. 

\section{The Frustration Cloud Graph Analysis}
\label{sec:Balancing}

In this section, we expand the notion of the frustration index to frustration cloud analysis and propose a tree-based graph balancing algorithm  to discover the nearest balanced states of a signed graph. This methodology improves on the singular focus of the frustration index while avoiding the tedious calculations of finding all balanced states, some of which are only present by passing through another balanced state.

We define \emph{frustration cloud} as a set of all balanced states of an underlying graph that are achievable by a minimal number of edge sign changes. If a balanced state belongs to the frustration cloud, that means no subset of its edges can balance the underlying signed graph. While balanced states for an underlying unsigned graph $G$ are always the same, the nearest balanced states for a signed graph $\Sigma$ depend on the $\Sigma$ and the minimal number of edge signs that need to be changed to achieve a balanced state.  Balanced states that produce the frustration index are those with a minimal number of edge changes to reach a balance, and are always part of the frustration cloud. The nearness of these states for discovery of the frustration index are discussed in \cite{Aref2016}. We use spanning trees as matroidal bases to balance the signed graph.  A \emph{spanning tree} $T$ of a graph $G$ is a maximal acyclic subgraph that contains all the vertices of $G$. For a spanning tree $T$ in graph $G$ and an edge $e$ not in the spanning tree, $e \in E(G) \setminus E(T)$, the \emph{fundamental cycle of $e$ with respect to $T$ in $G$} is the unique cycle in $T \cup e$. The number of edges outside a spanning tree is a known constant called the \emph{cyclomatic number}. Spanning trees form a basis for the balanced signed-graphic matroid \cite{SG}. A spanning tree plus an additional edge whose fundamental cycle is negative is the base for the unbalanced signed graph.

\subsection{Balancing via Spanning Trees}
\label{ssec:Balancing}

For a connected graph $G$, let $\Sigma = (G,\sigma)$ be the signed graph of $G$, and  $\mathcal{T}_{G}$ be the set of spanning trees of $G$. We propose the graph-balancing algorithm that constructs the nearest balanced states of $\Sigma$ from spanning trees of the underlying graph $G$. The underlying graph $G$ is assumed to be connected.  If it is not, the algorithm is applied to connected components of $G$. Algorithm ~\ref{alg:Balance} produces one balanced state $\Sigma_T$ per spanning tree $T$. 

\begin{algorithm}[!ht]
  \caption{Signed Graph Tree-Balancing Algorithm:}
  \label{alg:Balance}
  \begin{algorithmic}
    \Require Input signed graph  $\Sigma = (G,\sigma)$. 
    \ForAll {$T \in \mathcal{T}$, $T$ is a spanning tree of $\Sigma$}
     \ForAll  {edges $e$,  $e \in \Sigma \setminus T$}
            \If {fundamental cycle $T \cup e$ is negative}
            \State       change edge sign: $e^- -> e^+; e^+ -> e^-$
            \EndIf
            \EndFor 
            \State   Construct new balanced signed graph $\Sigma_T$
        \EndFor
        \Ensure Set of nearest balanced states $\Sigma_T, T \in \mathcal{T}$
        \end{algorithmic}
\end{algorithm}

Algorithm ~\ref{alg:Balance} is illustrated in Figure~\ref{fig:BalStates0}, where the process is illustrated for the signed graph $\Sigma$ (left) and a single spanning tree (second left). Edges outside the spanning tree are dashed. As the fundamental cycles are found, the edges outside the spanning tree (grey) are examined. The edge sign is not changed if the fundamental cycle is positive (top), and it is changed if the fundamental cycle is negative (bottom) in Figure~\ref{fig:BalStates0} (second right).  The balanced signed graph is produced with these signing changes in Figure~\ref{fig:BalStates0} (right). 

\begin{figure}[!ht]
    \centering
    \includegraphics[width=4in]{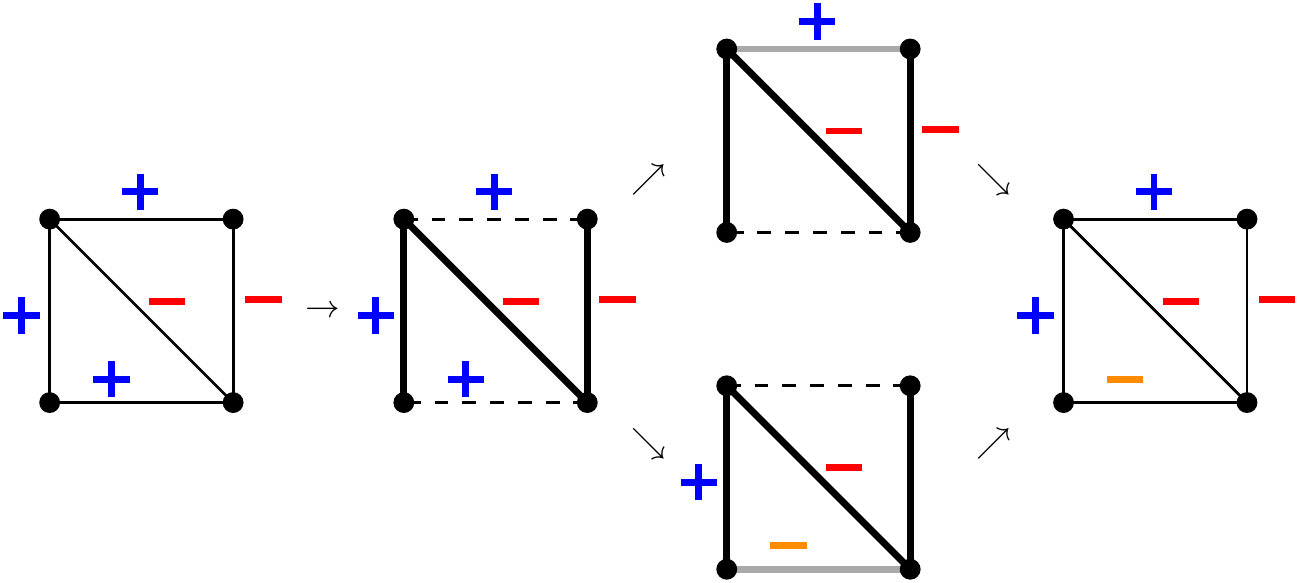}
    \caption{The spanning tree balancing process via fundamental cycles. Changed edges appear lighter.}
    \label{fig:BalStates0}
\end{figure}

The base signed graph $\Sigma$ in Figure~\ref{fig:BalStates0} (left) has a total of $8$ spanning trees, marked with darker edges in Figure~\ref{fig:BalStates1} (right). The edges outside each spanning tree are indicated by dashed edges. For any spanning tree $T$ and an edge $e$ outside of that tree, the sub-graph $T \cup e$ contains a unique fundamental cycle $C$. The sign of $e$ is chosen so that $C$ is \emph{positive}. The Algorithm~\ref{alg:Balance} result for $8$ spanning trees and signed graph $\Sigma$ is in Figure~\ref{fig:BalStates1} (right).

\begin{figure}[!ht]
    \centering
    \includegraphics[width=4.5in]{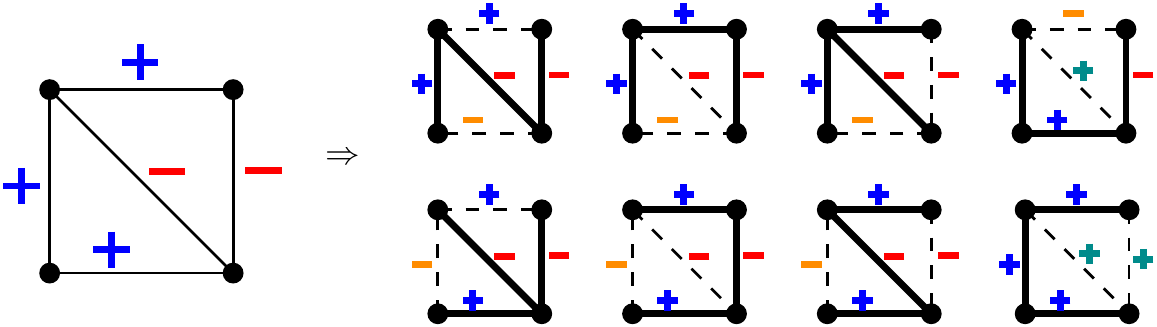}
    \caption{The spanning trees of a signed graph (bold) produce balanced signed graphs. Edges outside each spanning tree are dashed, and re-signed edges are labelled in orange and teal (lighter). The negative edges form a cut-set in each balanced graph.}
    \label{fig:BalStates1}
\end{figure}

For an underlying graph $G$, there are $8$ possible balanced graphs as shown in Figure \ref{fig:HCut}. However, only four of the eight balanced states are achievable by Algorithm~\ref{alg:Balance}, as shown in Figure~\ref{fig:BalStates2}.  Not every balanced signed graph is obtainable by a balancing algorithm that uses spanning trees, \emph{only the nearest balanced states} are. 

\begin{figure}[!ht]
    \centering
    \includegraphics[width=4in]{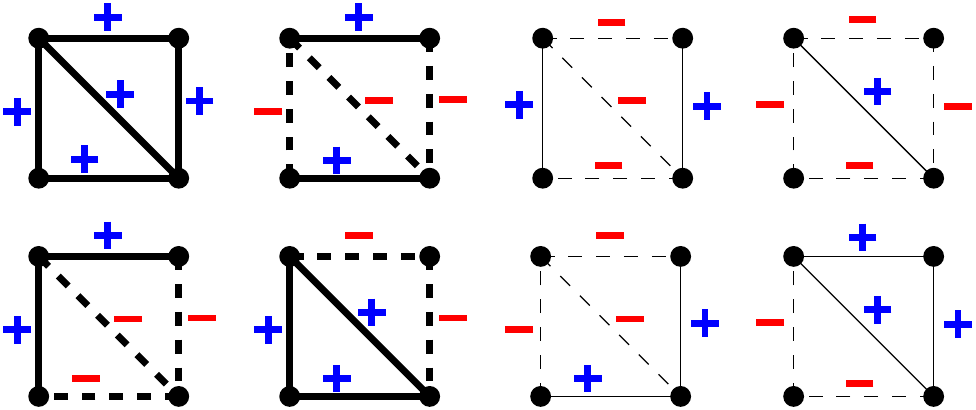} 
    \caption{Out of the eight balanced graphs for signed graph in Figure \ref{fig:BalStates1}. The four in bold can be reconstructed using the tree-balancing algorithm (Alg.~\ref{alg:Balance}).} 
    \label{fig:BalStates2}
\end{figure}

\begin{theorem}
If $\Sigma = (G,\sigma)$ is a signed graph of $G$, then the tree-balancing algorithm outlined in Alg.~\ref{alg:Balance} produces a minimal balancing set for $\Sigma$.
\label{t:BalSetMin}
\end{theorem}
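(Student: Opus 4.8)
The plan is to show two things: first, that the edge-sign changes produced by Algorithm~\ref{alg:Balance} on a fixed spanning tree $T$ actually yield a balanced signed graph $\Sigma_T$; and second, that the set of changed edges is \emph{minimal} (no proper subset of it is a balancing set for $\Sigma$). The first part is essentially a structural observation about fundamental cycles; the second is where the real content lies.

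For balancedness, I would argue via condition~(1) of Theorem~\ref{t:HararyCut}: every circle in $\Sigma_T$ is positive. The key fact is that the fundamental cycles $\{C_e : e \in E(G)\setminus E(T)\}$ form a cycle basis for the cycle space of $G$ (over $\mathrm{GF}(2)$), and in a signed graph the sign of a circle is determined by which of these basis cycles it is a symmetric-difference of — more precisely, the sign of any circle is the product of the signs of the fundamental cycles in its basis decomposition, because each tree edge appearing in the decomposition is traversed an even number of times and hence its sign cancels, while each non-tree edge $e$ appears exactly once and contributes $\sigma(e)$ together with the tree-path sign that makes up $C_e$. Since Algorithm~\ref{alg:Balance} re-signs exactly those non-tree edges whose fundamental cycle was negative, in $\Sigma_T$ every fundamental cycle is positive, hence every circle is positive, hence $\Sigma_T$ is balanced. (Alternatively one can exhibit the Harary-bipartition directly: root $T$, and put each vertex $v$ into $U$ or $W$ according to the sign of the unique $T$-path from the root to $v$; one checks that in $\Sigma_T$ an edge is negative iff it crosses this bipartition.)

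For minimality, let $B \subseteq E(G)\setminus E(T)$ be the set of edges re-signed by the algorithm, i.e. $B = \{e \notin E(T) : C_e \text{ is negative in } \Sigma\}$. Suppose $B' \subsetneq B$ is also a balancing set for $\Sigma$, and pick $e_0 \in B \setminus B'$. Because $e_0 \in B$, its fundamental cycle $C_{e_0}$ is negative in $\Sigma$. Now consider the signed graph $\Sigma'$ obtained from $\Sigma$ by flipping the edges of $B'$. The cycle $C_{e_0}$ consists of the single non-tree edge $e_0$ together with a path inside $T$; since $B' \subseteq E(G)\setminus E(T)$, the set $B'$ contains \emph{no} tree edges, and it does not contain $e_0$ either, so $B'$ meets $C_{e_0}$ in no edges at all. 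Hence the sign of $C_{e_0}$ is unchanged in $\Sigma'$: it is still negative, so $\Sigma'$ is not balanced — contradicting the assumption that $B'$ is a balancing set. Therefore $B$ is minimal, and this proves the theorem.

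I expect the main obstacle to be stating the fundamental-cycle/cycle-basis fact cleanly in the \emph{signed} setting — namely that "the sign of every circle is the product of the signs of the fundamental cycles in its $\mathrm{GF}(2)$-decomposition," so that checking positivity on the basis suffices. This is standard (it is the signed-graph analogue of a cycle space being spanned by fundamental cycles, and is implicit in Zaslavsky's matroid framework \cite{SG}), but it needs to be invoked carefully because the sign of a walk is multiplicative only after one verifies that repeated edges cancel in pairs. Once that lemma is in hand, both halves of the proof are short: balancedness is immediate, and minimality follows from the simple observation that the re-signed edges lie outside $T$ while each fundamental cycle meets $E(G)\setminus E(T)$ in exactly one edge, so re-signing any proper subset leaves at least one negative fundamental cycle intact.
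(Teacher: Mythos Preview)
Your proposal is correct, and the minimality argument is essentially the same as the paper's: both pivot on the observation that each fundamental cycle $C_e$ meets $E(G)\setminus E(T)$ in the single edge $e$, so any proper subset $B'\subsetneq B$ omitting some $e_0$ leaves $C_{e_0}$ negative. You are more thorough than the paper in that you also verify that $\Sigma_T$ is actually balanced (via the cycle-basis / sign-multiplicativity argument), a point the paper takes for granted.
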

\begin{proof}
Let $\Sigma$ be  the signed graph of graph $G$, $T \in \mathcal{T}$ a spanning tree of $\Sigma$, and $B_T$ be the balancing set produced by the tree-balancing algorithm (Alg.~\ref{alg:Balance}). If $B_T$ is not minimal, then there exists a smaller balancing set $S \subset B_T$ and an element $e \in B_T \setminus S$ whose reversal is not necessary to balance $\Sigma$. However, $T \cup e$ has a unique fundamental circle $C$, and the only edge of $C$ outside of $T$ is $e$, so $e$ is required to balance and $B_T \setminus S$ must be empty.
\end{proof}

We explain the notion of nearest balanced states and the construction of the frustration cloud in Section~\ref{ssec:aReal}.

\subsection{The Frustration Cloud and Consensus}
\label{ssec:aReal}

In this section we formalize the notion of the \emph{frustration cloud} as the set of nearest balanced states. See Definition~\ref{d:FrustrationCloud}. 

\begin{definition} The \emph{frustration cloud} of a signed graph $\Sigma$, denoted $\mathcal{F}_{\Sigma}$, is the set of all balanced signed graphs obtained by graph B, the tree-balancing algorithm~\ref{alg:Balance} on $\Sigma$.
\label{d:FrustrationCloud}
\end{definition}

All balanced states of the underlying signed graph have $0$ frustration, per Theorem \ref{t:HararyCut}. They all represent different views of graph consensus. The frustration index is the smallest number of edge sign switches so the signed graph achieves a balanced state. If the frustration index of signed graph $\Sigma$ is $Fr(\Sigma)$, that means that $\Sigma$ is $Fr(\Sigma)$ many sign changes from being balanced. 

Let us extend that notion to all balanced states. For a signed graph $\Sigma = (G,\sigma)$, the set of edge-signing functions $\{+,-\}^{E}$ form a Boolean lattice $\mathcal{L}$ ordered by negative edge subset containment. Thus, the all positive edge signing $(G,+)$ is the $\mathbf{0}$ element, the all negative edge signing $(G,-)$ is the $\mathbf{1}$ element, and it is graded by the number of negative edges. Let $\Sigma_1 = (G,\sigma_1)$ and $\Sigma_2 = (G,\sigma_2)$ be two signings of the same underlying graph $G$. The \emph{distance} between $\Sigma_1$ and $\Sigma_2$, $d(\Sigma_1 , \Sigma_2)$ is the Hamming distance between them in $\mathcal{L}$. This is equivalent to the length of the shortest path between $\Sigma_1$ and $\Sigma_2$ in $\mathcal{L}$ when regarded as a graph. The Boolean lattice for a signed triangle graph is illustrated in Figure~\ref{fig:LatticeK3}. 

\begin{theorem}
Let $\Sigma$ be a signed graph, and let $\Sigma'$ be a balanced state of $\Sigma$. $\Sigma' \in \mathcal{F}_{\Sigma}$ if, and only if, $\Sigma'$ can be obtained by the minimal balancing set whose size is less than or equal to the cyclomatic number.
\label{t:MinReal}
\end{theorem}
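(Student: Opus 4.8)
The plan is to prove the two implications separately, exploiting the fact that once the target balanced state $\Sigma'$ is fixed, the balancing set that produces it is uniquely determined: it must be $B := E^-(\Sigma)\,\Delta\,E^-(\Sigma')$, the set of edges on which $\Sigma$ and $\Sigma'$ disagree (flipping a prescribed edge set yields a uniquely determined signing, so ``$\Sigma'$ can be obtained by a minimal balancing set of size $\le$ cyclomatic number'' means precisely that this $B$ is minimal and $|B|\le\mu$). Throughout, write $\mu = |E(G)| - |V(G)| + 1$ for the cyclomatic number, which is the number of edges lying outside any spanning tree of the connected graph $G$, and recall from Theorem~\ref{t:BalSetMin} that whatever Algorithm~\ref{alg:Balance} flips is always a minimal balancing set.

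For the forward implication, suppose $\Sigma' \in \mathcal{F}_\Sigma$, so $\Sigma' = \Sigma_T$ for some spanning tree $T$. The set $B_T$ of edges flipped by Algorithm~\ref{alg:Balance} is contained in $E(G)\setminus E(T)$, hence $|B_T|\le \mu$; and $B_T$ is minimal by Theorem~\ref{t:BalSetMin}. Since flipping $B_T$ sends $\Sigma$ to $\Sigma_T = \Sigma'$, uniqueness gives $B = B_T$, so $B$ is a minimal balancing set of size at most $\mu$, as required.

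For the converse, suppose the balancing set $B$ of $\Sigma'$ is minimal with $|B|\le\mu$. The crucial step is to show that the spanning subgraph $(V,\,E(G)\setminus B)$ is connected. I would argue by contradiction: if it were disconnected, there would be a bipartition $(X, V\setminus X)$ with $\emptyset\ne X\ne V$ all of whose crossing edges lie in $B$, i.e.\ a nonempty edge cut $D \subseteq B$. Every cycle of $G$ meets $D$ in an even number of edges, and a set $S$ is a balancing set of $\Sigma$ exactly when it meets each cycle $C$ in the same parity as the number of negative edges of $C$ in $\Sigma$ (since $\mathrm{sign}_{\Sigma_S}(C)=(-1)^{|C\cap S|}\mathrm{sign}_\Sigma(C)$); hence $B\setminus D$ meets every cycle with the same parity as $B$ does, so $B\setminus D$ is again a balancing set, contradicting minimality of $B$. (Equivalently, $B\setminus D$ is obtained from $B$ by switching the vertices of $X$, which preserves balancedness.) Granting this, $(V,\,E(G)\setminus B)$ contains a spanning tree $T$ of $G$ with $E(T)\cap B = \emptyset$, so $B\subseteq E(G)\setminus E(T)$; and for each $e\in E(G)\setminus E(T)$ its fundamental cycle $C_e$, being contained in $E(T)\cup\{e\}$, satisfies $C_e\cap B\subseteq\{e\}$. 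Since $\Sigma$ and $\Sigma'$ differ exactly on $B$ and $\Sigma'$ is balanced, $\mathrm{sign}_\Sigma(C_e) = (-1)^{|C_e\cap B|}\,\mathrm{sign}_{\Sigma'}(C_e) = (-1)^{|C_e\cap B|}$, which is negative precisely when $e\in B$. Therefore Algorithm~\ref{alg:Balance} applied to $\Sigma$ with this $T$ flips exactly the edges of $B$ and outputs $\Sigma$ with $B$ flipped, namely $\Sigma'$; hence $\Sigma'\in\mathcal{F}_\Sigma$.

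I expect the connectivity step to be the only real obstacle: it is the single place where minimality of $B$ is genuinely used, and it is what links the combinatorial hypothesis to the existence of a suitable spanning tree. I would also note in passing that the bound $|B|\le\mu$ is in fact automatic from that step — a connected spanning subgraph has at least $|V(G)|-1$ edges, so $|B|\le|E(G)|-(|V(G)|-1)=\mu$ — so the size condition in the statement is a corollary of minimality, and it yields the clean reading that every state in $\mathcal{F}_\Sigma$ lies within Hamming distance $\mu$ of $\Sigma$ in the lattice $\mathcal{L}$.
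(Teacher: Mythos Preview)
Your proof is correct and follows the same overall strategy as the paper: for the forward direction, invoke Theorem~\ref{t:BalSetMin} and the cyclomatic bound on non-tree edges; for the converse, show $G\setminus B$ is connected, extract a spanning tree $T$ from it, and check that Algorithm~\ref{alg:Balance} run on $T$ flips exactly $B$. The paper's own proof is terser and simply asserts ``Observe that $G\setminus B$ is connected'' without argument; your cut--parity contradiction (any edge cut $D\subseteq B$ meets every cycle evenly, so $B\setminus D$ would still balance) is precisely the justification that step needs, and your verification that the algorithm with tree $T$ outputs $\Sigma'$ rather than merely \emph{some} balanced state is likewise a detail the paper leaves implicit. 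Your closing remark that the size bound $|B|\le\mu$ is already forced by minimality (via connectivity of $G\setminus B$) is a genuine sharpening not present in the paper and worth keeping.
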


\begin{proof}
If $\Sigma' \in \mathcal{F}_{\Sigma}$, it is obtained by the tree-balancing algorithm, which cannot change more edge signs than the cyclomatic number. By Theorem \ref{t:BalSetMin}, this is a minimal balancing set.

Now, suppose $\Sigma'$ is obtained by the minimal balancing $B$ set whose size is less than or equal to the cyclomatic number. Observe that $G \setminus B$ is connected, and any spanning tree in $G \setminus B$ will also be spanning in $G$. Thus, $B$ is obtained by a spanning tree in $G \setminus B$. 
\end{proof}

The frustration cloud is the set of balanced states resulting from $\Sigma$ that have no more than the cyclomatic number of edge sign changes. It has a simple interpretation using the Boolean lattice of the signings of the underlying graph $G$. Consider the eight possible signings of the triangle graph in  Figure \ref{fig:LatticeK3} (left), ordered by negative edge set containment. Out of these eight signings, exactly four of them are balanced; these are marked with black boxes in Figure \ref{fig:LatticeK3} (center).

\begin{figure}[!ht]
    \centering
    \includegraphics{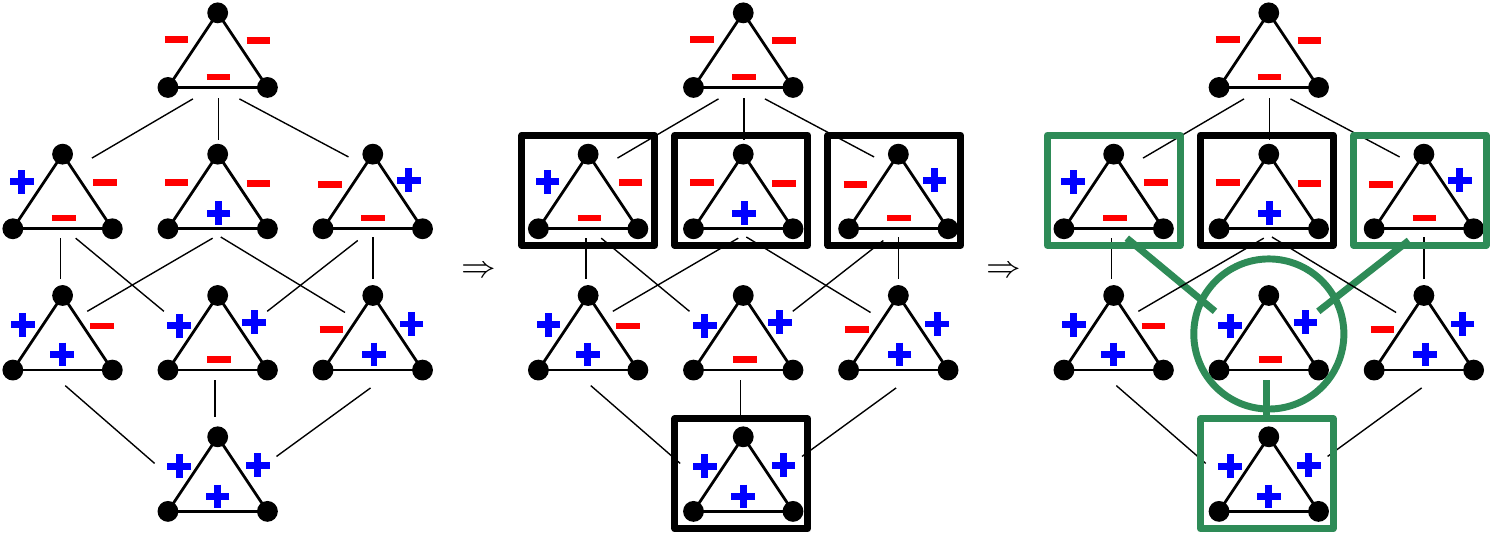}
    \caption{The Boolean lattice for a signed triangle graph (left); black boxes mark the balanced signed graphs (center); if the underlying signed graph $\Sigma$ is in green circle, then the green boxes mark a frustration cloud $\mathcal{F}_{\Sigma}$ (right).} 
    \label{fig:LatticeK3}
\end{figure}

Consider the signed graph $\Sigma$ to be the graph in Figure \ref{fig:LatticeK3} (right) with a green circle around it. Since the triangle graph has the cyclomatic number $1$, we search for all balanced states that are distance $1$ or less away from $\Sigma$; these are marked with green squares. Observe that the balanced state in the black box in Figure \ref{fig:LatticeK3} (right) is not in $\mathcal{F}_{\Sigma}$, as it requires a path that exceeds the cyclomatic number --- one must also travel through another balanced state to reach it.

More complicated graphs may produce balanced states of varying distance from the given signed graph. Consider the underlying graph given in Figure~\ref{fig:AggRelEx} (left) and its corresponding Boolean lattice of signings in Figure~\ref{fig:AggRelEx} (middle) where the negative edge sets are listed. Consider the signed graph $\Sigma$ where edges $e_2$ and $e_5$ are negative, and the rest are positive (Figure~\ref{fig:BalStates1}. This is marked with the open square in Figure~\ref{fig:AggRelEx} (middle) labeled $25$. All the balanced signings are marked with closed squares. Since the cyclomatic number of the underlying graph is $2$, we search for all balanced states of distance less than or equal to $2$ from the open square; these are indicated by the dark paths in Figure~\ref{fig:AggRelEx} (right).  

\begin{figure}[!ht]
    \centering
    \includegraphics[scale=1]{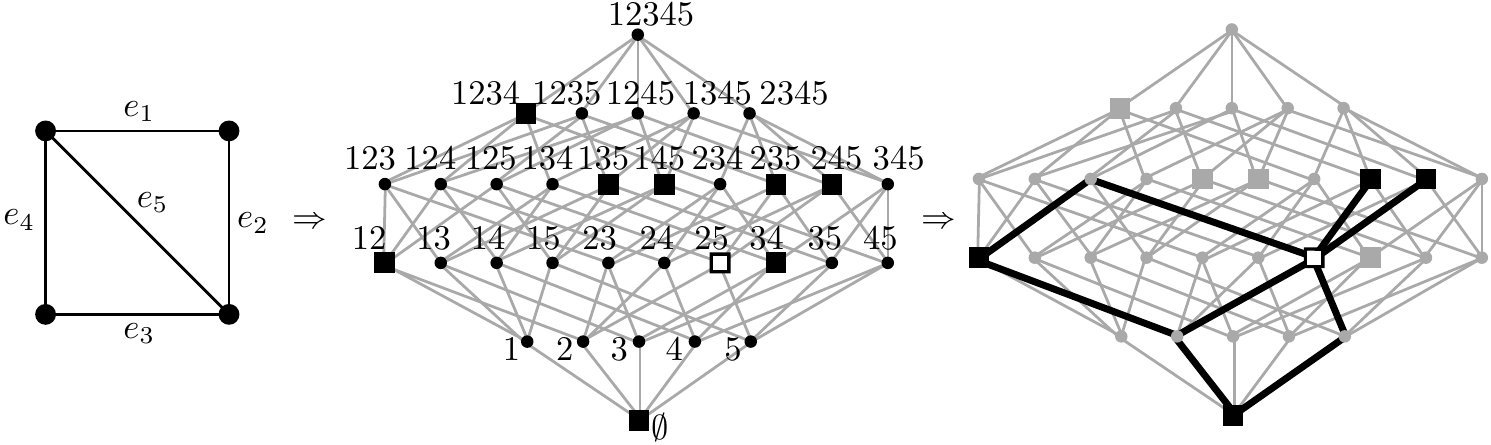}
    \caption{Left: The underlying graph $G$ from Figure \ref{fig:BalStates1}. Middle: The Hasse diagram of all signings of $G$, with balanced states as closed squares, and the given signed graph from Figure \ref{fig:BalStates1} as the open square. Right: The four elements of $\mathcal{F}_{\Sigma}$ and their shortest paths to $\Sigma$ from Figure \ref{fig:BalStates2} (bold).}
    \label{fig:AggRelEx}
\end{figure}

The example in Figure \ref{fig:AggRelEx} illustrates that the frustration index is obtainable by analyzing the frustration cloud. The signed graph $\Sigma$ from Figure \ref{fig:AggRelEx} has $Fr(\Sigma)=1$, as the balanced state of minimum distance is of distance $1$ away from $\Sigma$. It is trivial to verify that the frustration cloud of a balanced graph consists only of itself.

\section{Probabilistic Consensus Model}
\label{sec:consensus}

Consensus for social networks is community resolution when opposing parties set aside their differences and barely agree on a statement \cite{2011Hartnett}. State-of-art consensus modeling in social network analysis has focused on the locality of the agreement \cite{Guha2004, Garimella2017}, and it did not consider an entire graph. As illustrated in Section~\ref{sec:Balancing}, there can be multiple balanced states of the same graph, meaning there are multiple ways to achieve global consensus. In this section, we formalize the measures of vertices, edges, and the entire graph stemming from frustration cloud-based analysis. 

There are multiple ways in which a minimum set of sentiments can be changed to result in \emph{identical outcomes of consensus}. Different spanning trees in the tree-balancing algorithm (Alg.~\ref{alg:Balance}) can result in the same nearest balanced state, as illustrated in Figure~\ref{fig:BalStates1}. We weigh each element of the frustration cloud by the number of times it is produced by a basis, as spanning trees are bases for the balanced states of a signed graph \cite{SG}.

%The benefit for exchanging the frustration index calculation for fundamental cycle bases, is that we produce more balanced states and are able to quantify the likelihood a given balanced state will be achieved without needing iterative applications of Laplacian dynamics. Specifically, we build an approach that produces more usable data for the same cost, and probabilistic sampling can be used to increase performance.

\begin{definition}
For a signed graph $\Sigma = (G,\sigma)$ and balanced signed graph $\Sigma' \in \mathcal{L}$, let $w_{\Sigma'}$ be \emph{weight of $\Sigma'$ relative to $\Sigma$} and defined to be the number of spanning trees of $G$ that balance $\Sigma$ into $\Sigma'$.
\label{d:weights}
\end{definition}

An unbalanced signed graph is always assigned a weight of $0$. Figure~\ref{fig:Status} illustrates the balanced signed graphs in Figure~\ref{fig:BalStates1} grouped by identical balanced states. The weights of these balanced states are equal to $3$, $3$, $1$, and $1$, as indicated by the boxed groupings of the balanced states. The weight captures the frequency of \emph{appearance} of each balanced state using different underlying spanning trees. The weight of the balanced state is a measurement of the likelihood a given consensus will occur, as illustrated in Figure~\ref{fig:Status1}.

\begin{figure}[!ht]
    \centering
    \includegraphics[width=4.5in]{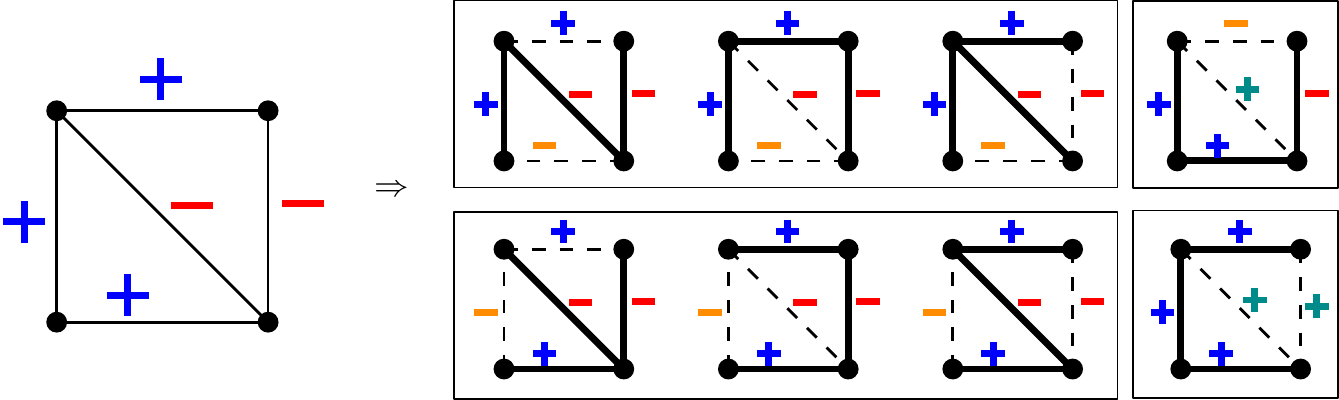}
    \caption{Tree-balancing algorithm (Alg.~\ref{alg:Balance}) on signed graph $\Sigma$ produces $4$ balanced states. Different spanning trees can produce same balanced state.  The edges outside each spanning tree are indicated as dashed lines.}
    \label{fig:Status}
\end{figure}

\begin{figure}[!ht]
    \centering
    \includegraphics[width=4.5in]{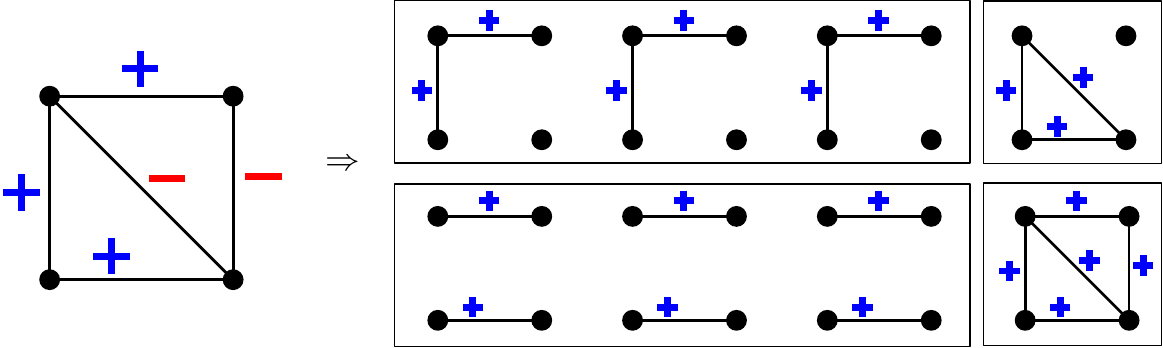}
    \caption{Harary cuts per balanced state: The deletion of the negative edges in each balanced state in the frustration cloud of the signed graph in Figure~\ref{fig:Status}.}
    \label{fig:Status1}
\end{figure}

\subsection{Global Vertex Status}
\label{ssec:status}

Let $G$ be a graph whose set of spanning trees is $\mathcal{T}_{G}$. Given a signed graph $\Sigma = (G,\sigma)$ and a spanning tree $T \in \mathcal{T}_{G}$, recall that $\Sigma'_T$ is a balanced signed graph obtained by the tree-balancing algorithm (Alg.~\ref{alg:Balance}). The bipartition $(U_T,W_T)$ from Theorem~\ref{t:HararyCut} results in two induced subgraphs, as illustrated in Figure \ref{fig:Status1},  $\Sigma'_{U_T}$ and $\Sigma'_{W_T}$.  The subgraphs are named so that $\vert U_T \rvert \leq \vert W_T \rvert$; thus, $\Sigma'_{W_T}$ always has the majority of vertices. \emph{Status} (Defn.~\ref{d:status}) is the likelihood that a majority of the vertices in a network can be convinced to agree with a specific node's position over all nearest balanced states, with multiplicity determined by the weight. 

\begin{definition}
The \emph{status} of a vertex $v$ in $\Sigma = (G,\sigma)$ is defined as the normalized sum of step functions if vertex $v$ is in the larger subgraph $\Sigma'_{W_T}$:
\begin{align*}
status(v)=\frac{1}{|\mathcal{T}_{G}|}\dsum_{T \in \mathcal{T}_{G}}\delta_{\Sigma'_{W_T}}(v), \text{ where }
 \delta_{\Sigma'_{W_T}}(v) =
    \begin{cases}
      1 & \text{if } v \in \Sigma'_{W_T}\\
      0.5 & \lvert W_T \rvert = \lvert U_T \rvert  \text{~~~tie-break}\\
      0 & \text{otherwise.}
    \end{cases} 
\end{align*}
\label{d:status}
\end{definition}

\begin{lemma}
Let $\Sigma = (G,\sigma)$ be a signed graph with frustration cloud $\mathcal{F}_{\Sigma}$. Status can be defined as a sum of step function if vertex $v$ is in larger balanced sub-graph $\Sigma'_{W_T}$, weighted by $w_{\Sigma'}$, the number of spanning trees of G that balance $\Sigma$ into $\Sigma'$.
\begin{align*}
status(v)=\frac{1}{|\mathcal{T}_{G}|}\dsum_{\Sigma' \in \mathcal{F}_{\Sigma}}w_{\Sigma'}\delta_{\Sigma'_{W_T}}(v).
\end{align*}
\end{lemma}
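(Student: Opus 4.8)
The plan is to prove the identity by regrouping the sum over spanning trees according to which balanced state each tree produces. First I would recall that, by Algorithm~\ref{alg:Balance}, every spanning tree $T \in \mathcal{T}_{G}$ determines exactly one balanced signed graph $\Sigma'_T$, and by Definition~\ref{d:FrustrationCloud} every such $\Sigma'_T$ lies in $\mathcal{F}_{\Sigma}$. Hence the map $T \mapsto \Sigma'_T$ partitions $\mathcal{T}_{G}$ into fibers $\mathcal{T}_{G}(\Sigma') := \{T \in \mathcal{T}_{G} : \Sigma'_T = \Sigma'\}$ indexed by $\Sigma' \in \mathcal{F}_{\Sigma}$, and by Definition~\ref{d:weights} the size of the fiber over $\Sigma'$ is exactly $w_{\Sigma'}$.

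The key step --- and the only real content --- is to check that the summand $\delta_{\Sigma'_{W_T}}(v)$ is constant on each fiber, i.e.\ depends on $\Sigma'$ alone and not on the particular tree that produced it. Since $G$ is connected, each balanced state $\Sigma'$ is a connected balanced signed graph, so its Harary bipartition $\{U,W\}$ from Theorem~\ref{t:HararyCut}(3) is unique as an unordered pair (two such bipartitions are either equal or complementary, and connectedness forces equality). Therefore the unordered pair $\{U_T, W_T\}$ is a function of $\Sigma'_T = \Sigma'$ only. When $\lvert U_T\rvert \neq \lvert W_T\rvert$, the labeling convention $\lvert U_T\rvert \leq \lvert W_T\rvert$ pins down the larger side $W$ from $\Sigma'$, so $\delta_{\Sigma'_{W_T}}(v) \in \{0,1\}$ is determined by $\Sigma'$; when $\lvert U_T\rvert = \lvert W_T\rvert$, Definition~\ref{d:status} assigns the value $0.5$ regardless of the labeling. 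In all cases $\delta_{\Sigma'_{W_T}}(v)$ equals a well-defined quantity depending only on $\Sigma' \in \mathcal{F}_{\Sigma}$.

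With these two observations the computation is immediate. Starting from Definition~\ref{d:status} and splitting the index set along the fibers,
\begin{align*}
status(v) = \frac{1}{|\mathcal{T}_{G}|}\sum_{T \in \mathcal{T}_{G}}\delta_{\Sigma'_{W_T}}(v) = \frac{1}{|\mathcal{T}_{G}|}\sum_{\Sigma' \in \mathcal{F}_{\Sigma}}\ \sum_{T \in \mathcal{T}_{G}(\Sigma')}\delta_{\Sigma'_{W_T}}(v) = \frac{1}{|\mathcal{T}_{G}|}\sum_{\Sigma' \in \mathcal{F}_{\Sigma}} w_{\Sigma'}\,\delta_{\Sigma'_{W_T}}(v),
\end{align*}
which is the claimed formula. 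I expect the main (and only minor) obstacle to be the bookkeeping around the tie-break case and stating the well-definedness of the side $W$ precisely; beyond that, the argument is just a finite interchange of summation order, with the weights $w_{\Sigma'}$ appearing as fiber cardinalities.
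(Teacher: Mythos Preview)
Your proposal is correct and follows the same approach as the paper: the paper's proof is a two-sentence remark that $w_{\Sigma'}$ counts the spanning trees producing $\Sigma'$ and that separating the sum in Defn.~\ref{d:status} by balanced state gives the result. Your version is in fact more careful, since you explicitly verify that $\delta_{\Sigma'_{W_T}}(v)$ depends only on the balanced state (via uniqueness of the Harary bipartition and the tie-break convention), a point the paper leaves implicit.
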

\begin{proof}
From Defn.~\ref{d:weights}, $w_{\Sigma'}$ counts the number of spanning trees that contribute to a given balanced state. Separating these into individual spanning trees and using Defn.~\ref{d:status} gives the result.
\end{proof}

Figure \ref{fig:Status0} uses the components of Figure \ref{fig:Status1} to determine the status.

\begin{figure}[!ht]
    \centering
    \includegraphics[width=4.5in]{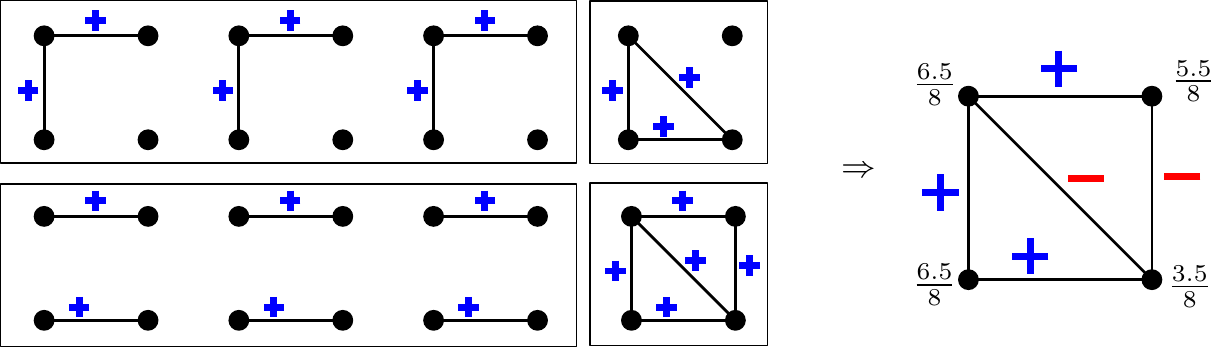}
    \caption{Harary-cut for the nearest balanced states weighted by their occurrence (left) and the calculated status values for signed graph $\Sigma$.} 
    \label{fig:Status0}
\end{figure}

The top-left vertex of Figure \ref{fig:Status0} has status $[3(1) + 1(1) + 3(0.5)  + 1(1.0)]/8 = 6.5/8$; the bottom-left is the same at $6.5/8$; the top right is $[3(1) + 3(0.5) + 1(1)]/8 = 5.5./8$; and the bottom-right is $[1(1) + 3(0.5)  + 1(1.0)]/8 = 3.5/8$. Next, let us consider a \emph{sum} of all statuses of all vertices in a graph and Definition~\ref{d:status}.  

\begin{lemma}
For signed graph $\Sigma = (G,\sigma)$, vertex set $V$, and tree-spanning set $\mathcal{T}_{G}$, the sum of statuses of all vertices in $\Sigma$ equals the normalized sum of cardinality of the larger component of the Harary-cut over all spanning trees $T \in \mathcal{T}_{G}$.
\begin{align*}
\lvert \mathcal{T}_{G} \rvert \dsum_{v \in V} status(v) = \dsum_{T \in \mathcal{T}_{G}} \lvert V(\Sigma'_{W_T}) \rvert.
\end{align*}
\label{c:statusSum}
\end{lemma}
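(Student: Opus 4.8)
The plan is to swap the order of summation. Starting from the definition
\[
\lvert \mathcal{T}_{G}\rvert \dsum_{v\in V} status(v) = \dsum_{v\in V}\dsum_{T\in\mathcal{T}_{G}} \delta_{\Sigma'_{W_T}}(v),
\]
I would interchange the two finite sums (this is legitimate since both index sets are finite) to obtain
\[
\dsum_{T\in\mathcal{T}_{G}}\dsum_{v\in V}\delta_{\Sigma'_{W_T}}(v).
\]
The remaining task is to show that for each fixed spanning tree $T$, the inner sum $\dsum_{v\in V}\delta_{\Sigma'_{W_T}}(v)$ equals $\lvert V(\Sigma'_{W_T})\rvert$.

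For the inner identity I would split into two cases according to whether the Harary-cut of $\Sigma'_T$ is balanced in size. In the non-tie case $\lvert U_T\rvert < \lvert W_T\rvert$: every vertex of $W_T$ contributes $1$ and every vertex of $U_T$ contributes $0$, so the sum is exactly $\lvert W_T\rvert = \lvert V(\Sigma'_{W_T})\rvert$. In the tie case $\lvert U_T\rvert = \lvert W_T\rvert$: every one of the $\lvert U_T\rvert + \lvert W_T\rvert = 2\lvert W_T\rvert$ vertices contributes $0.5$, so the sum is $\lvert W_T\rvert = \lvert V(\Sigma'_{W_T})\rvert$. In both cases the inner sum equals $\lvert V(\Sigma'_{W_T})\rvert$, and summing over $T$ gives the claim. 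I would also remark that $U_T$ and $W_T$ partition $V$ (they come from the Harary-bipartition of Theorem~\ref{t:HararyCut}), which is what makes the two cases exhaustive and the vertex counts add up.

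I do not expect any serious obstacle here; the statement is essentially a bookkeeping identity. The only point requiring a little care is the tie-break convention: one must check that assigning $0.5$ to \emph{both} sides of a balanced cut still recovers $\lvert V(\Sigma'_{W_T})\rvert$ (rather than, say, $\lvert V(\Sigma')\rvert/2$ written ambiguously), and that the labelling convention $\lvert U_T\rvert \le \lvert W_T\rvert$ is consistently applied so that $\Sigma'_{W_T}$ is well-defined even when the cut is balanced (in which case either side may be called $W_T$, but its cardinality is unambiguous). Once that is pinned down, the proof is the two-line Fubini swap followed by the case check above.
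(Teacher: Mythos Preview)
Your proposal is correct and follows essentially the same approach as the paper: interchange the two finite sums, then observe that the inner sum over $V$ equals $\lvert V(\Sigma'_{W_T})\rvert$, with a brief remark that the tie-break value $0.5$ makes this work even when the two parts have equal size. Your case analysis is slightly more explicit than the paper's, but the argument is the same Fubini swap plus tie-break check.
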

\begin{proof}
Summing both sides of the definition of status over all vertices gives
\begin{align*}
\dsum_{v \in V} status(v) &= \dsum_{v \in V}\frac{1}{\lvert \mathcal{T}_{G} \rvert}\dsum_{T \in \mathcal{T}_{G}} \delta_{\Sigma'_{W_T}}(v) = \frac{1}{\lvert \mathcal{T}_{G} \rvert}\dsum_{v \in V}\dsum_{T \in \mathcal{T}_{G}} \delta_{\Sigma'_{W_T}}(v) \\
&= \frac{1}{\lvert \mathcal{T}_{G} \rvert}\dsum_{T \in \mathcal{T}_{G}}\dsum_{v \in V} \delta_{\Sigma'_{W_T}}(v) = \frac{1}{\lvert \mathcal{T}_{G} \rvert}\dsum_{T \in \mathcal{T}_{G}}\lvert V(\Sigma'_{W_T})) \rvert.
\end{align*}

The last equality holds even in the event of having two components of equal size, as we have defined status. Since $\delta_{\Sigma'_{W_T}}(v)$ treats them as an equal split ($0.5$), there are the same number of vertices in both the new majority as well as the minority --- which is equivalent to counting the size of the tied majority.  The proof is completed by multiplying by $\lvert \mathcal{T}_{G} \rvert$. 
\end{proof}

\subsection{Global Vertex Influence} 
\label{ssec:influence} 

We now consider the variation of attitudinal strength captured by edge signs in attitudinal networks.  When students assign a strong rating score for an instructor evaluation, it is hard to separate affective, behavioral, and cognitive components of the attitude expressed in that one sentiment.  Did the student take all of their other instructors into consideration? What is the subjective evaluation range? How much of the rating is based on students' own subjective performance in the class? How likely is the student to change his mind when he talks to his peers? We examine the strength of the beliefs held within these edges. We propose another new measure for edges similar to status, termed \emph{agreement}, to measure how strongly held a given edge-sentiment is. It is likely that an edge will be positive and contribute to the consensus decision in all near balanced states produced by the tree- balancing algorithm. See Definition~\ref{d:agreement}.

\begin{definition}
The \emph{agreement} of an edge $e$ in a signed graph is a normalized sum of all occurrences of an edge in the largest component of a Harary-cut over all spanning trees. 
\begin{align*}
agreement(e)=\frac{1}{|\mathcal{T}_{G}|}\dsum_{T \in \mathcal{T}_{G}}\delta_{\Sigma'_{W_T}}(e), \text{ where }
 \delta_{\Sigma'_{W_T}}(e) =
    \begin{cases}
      1 & \text{if } e \in \Sigma'_{W_T}\\
      0.5 & \lvert W_T \rvert = \lvert U_T \rvert  \text{~~~tie-break}\\
      0 & \text{otherwise.}
    \end{cases} 
\end{align*}
\label{d:agreement}
\end{definition}

Figure \ref{fig:Agreement} shows agreement for the given example. The larger agreement an edge has, the more likely it will appear in the final, majority decision. 

\begin{figure}[!ht]
    \centering
    \includegraphics[scale=1]{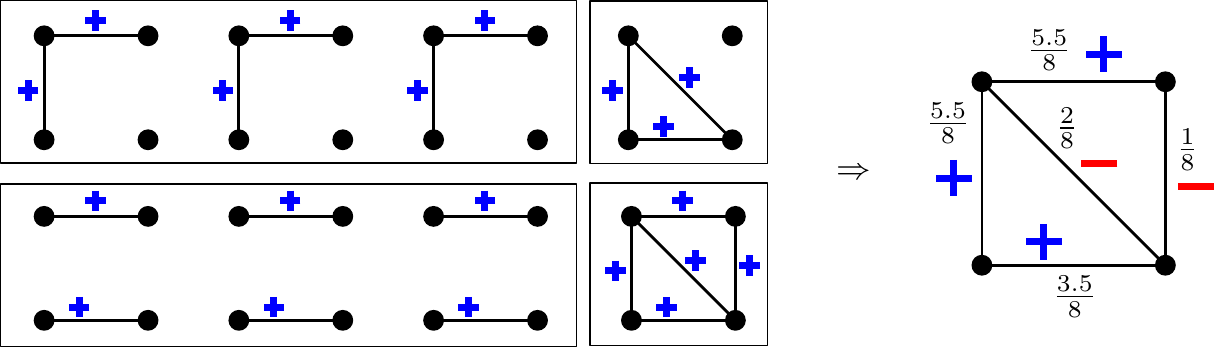}
    \caption{Left: Harary-cut; Right: Edge Agreement values.}
    \label{fig:Agreement}
\end{figure}

Agreement is calculated dually to status. Figure \ref{fig:Status0} demonstrated that the bottom-left and top-left vertices both have the largest status values. However, the agreement in Figure \ref{fig:Agreement} helps to quantify the differences between them. 

Parallel to Lemma \ref{c:statusSum} we immediately have:

\begin{lemma}
For signed graph $\Sigma = (G,\sigma)$, edge set $E$, and tree-spanning set $\mathcal{T}_{G}$, the sum of the agreement of all edges $e, e\in E$, in $\Sigma$ equals the normalized sum of edge cardinality of the larger component of the Harary-cut over all spanning trees $T \in \mathcal{T}_{G}$.
\begin{align*}
\lvert \mathcal{T}_{G} \rvert \dsum_{e \in E} agreement(e) = \dsum_{T \in \mathcal{T}_{G}} \lvert E(\Sigma'_{W_T}) \rvert.
\end{align*}
\label{c:agreeSum}
\end{lemma}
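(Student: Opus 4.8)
The plan is to follow the proof of Lemma~\ref{c:statusSum} essentially verbatim, with edges in place of vertices and the indicator $\delta_{\Sigma'_{W_T}}(e)$ in place of $\delta_{\Sigma'_{W_T}}(v)$. First I would expand the left-hand side using Definition~\ref{d:agreement} and pull the constant out of the sum:
\[
\sum_{e \in E} agreement(e) \;=\; \frac{1}{\lvert \mathcal{T}_{G}\rvert}\sum_{e \in E}\sum_{T \in \mathcal{T}_{G}} \delta_{\Sigma'_{W_T}}(e).
\]
Since $E$ and $\mathcal{T}_{G}$ are both finite, I would reorder the two summations to get $\frac{1}{\lvert \mathcal{T}_{G}\rvert}\sum_{T \in \mathcal{T}_{G}}\sum_{e \in E}\delta_{\Sigma'_{W_T}}(e)$, then observe that for a fixed spanning tree $T$ the inner sum counts exactly the edges of $\Sigma$ both of whose endpoints lie in the larger Harary component $W_T$, i.e.\ it equals $\lvert E(\Sigma'_{W_T})\rvert$. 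Multiplying through by $\lvert \mathcal{T}_{G}\rvert$ yields the stated identity.

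The step I expect to require the most care, and hence the main obstacle, is the tie-break case $\lvert U_T\rvert = \lvert W_T\rvert$, just as in Lemma~\ref{c:statusSum}. For vertices that case was painless because $V$ splits only into $U_T$ and $W_T$, so giving every vertex weight $0.5$ reproduces $\tfrac12\lvert V\rvert = \lvert W_T\rvert$. For edges the partition has a third block: $E$ decomposes into the edges induced on $U_T$, the edges induced on $W_T$, and the Harary-cut edges joining the two sides, and a cut edge lies in neither induced subgraph and so contributes $0$ to $\delta_{\Sigma'_{W_T}}$. I would therefore make the tie convention explicit: when the two sides are equal neither is canonically ``the larger,'' so both $\Sigma'_{U_T}$ and $\Sigma'_{W_T}$ take the role of $\Sigma'_{W_T}$ with weight $\tfrac12$ while the cut edges still contribute $0$; then $\sum_{e \in E}\delta_{\Sigma'_{W_T}}(e) = \tfrac12\bigl(\lvert E(\Sigma'_{U_T})\rvert + \lvert E(\Sigma'_{W_T})\rvert\bigr)$, which is precisely what the symbol $\lvert E(\Sigma'_{W_T})\rvert$ on the right-hand side must denote in the tied case. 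Spelling out this reading is exactly the bookkeeping needed for the two sides to agree, and it parallels the sentence already used in the proof of Lemma~\ref{c:statusSum}.

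Finally, I would remark in passing that, since agreement is defined dually to status, this lemma is the exact edge-analogue of Lemma~\ref{c:statusSum}, and that the weighted reformulation in terms of the frustration cloud follows immediately by regrouping the spanning trees by the balanced state $\Sigma'\in\mathcal{F}_{\Sigma}$ they produce and inserting the weights $w_{\Sigma'}$ of Definition~\ref{d:weights}, exactly as in the weighted form of status; this is optional and not needed for the lemma as stated.
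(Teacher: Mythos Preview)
Your proposal is correct and matches the paper's approach: the paper does not write out a proof but simply states ``Parallel to Lemma~\ref{c:statusSum} we immediately have'' before the lemma, so the intended argument is exactly the edge-for-vertex substitution you describe. Your treatment of the tie-break case is in fact more careful than the paper's, since you explicitly note that cut edges form a third block of $E$ and spell out the convention needed for the identity to hold there; the paper leaves this implicit.
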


Edge agreement is now averaged around each vertex to compare to status. This metric is called \emph{influence}.

\begin{definition}
The \emph{influence} of a vertex $v$ in a signed graph is the average agreement of all edges incidental to the vertex $v$,
\begin{align*}
influence(v)=\frac{1}{deg(v)}\dsum_{e \sim v}agreement(e).
\end{align*}
\label{d:influence}
\end{definition}

Comparing the influence to the status in our examples, we see that the influence is always less than or equal to status.

\begin{figure}[!ht]
    \centering
    \includegraphics[width=5in]{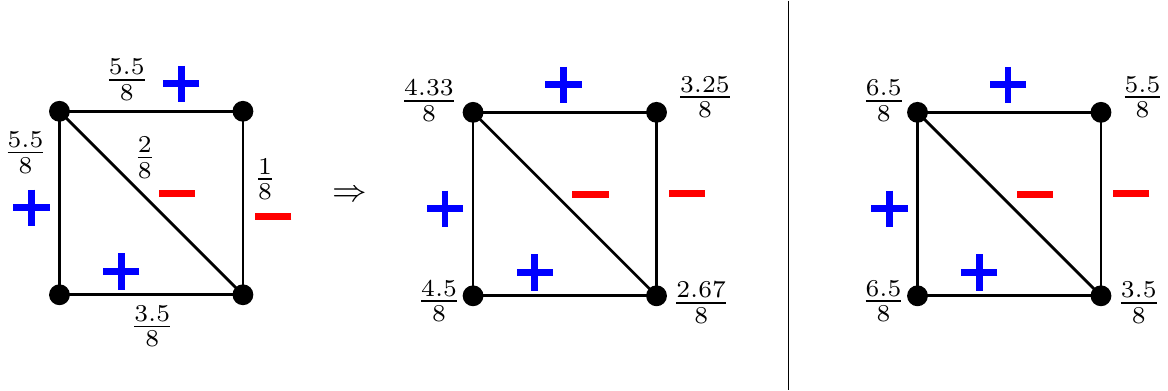}
    \caption{For the signed graph $\Sigma$ in Figure~\ref{fig:BalStates1}, its edge agreement values (left) and vertex influence (center) are compared to its vertex status (right).}
    \label{fig:AAS}
\end{figure}

Status and influence provide two measures of vertex influence in the attitudinal network graph, as illustrated in Figure~\ref{fig:AAS}. The relationship of \emph{status} and \emph{influence} measures for vertex $v$ as outlined in Lemma~\ref{l:StatusCone}. Their relation stems from Defn.~\ref{d:status}, Defn.~\ref{d:influence}, and from comparing the totality of edge counts around each vertex.

\begin{lemma}
For an unbalanced signed graph $influence(v) \leq status(v)$. Moreover, equality holds when $v$ is a pendant vertex whose edge is positive; influence is $0$ when $v$ is a pendant vertex whose edge is negative.
\label{l:StatusCone}
\end{lemma}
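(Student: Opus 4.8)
The plan is to reduce the inequality to a pointwise comparison of the edge- and vertex-indicator functions appearing in the definitions of $agreement$ and $status$, and then to handle the two pendant cases using the observation that a pendant edge is a bridge. \textbf{Step 1.} Fix a spanning tree $T \in \mathcal{T}_G$ and let $(U_T,W_T)$, $\lvert U_T \rvert \le \lvert W_T \rvert$, be the Harary bipartition of the balanced state $\Sigma'_T$. I claim that for every edge $e$ incident to $v$, $\delta_{\Sigma'_{W_T}}(e) \le \delta_{\Sigma'_{W_T}}(v)$. Indeed, $e \in \Sigma'_{W_T}$ (the subgraph induced on $W_T$) requires \emph{both} endpoints of $e$ to lie in $W_T$, so in particular it forces $v \in W_T$; checking the remaining possibilities — $e$ not inside $W_T$ in the non-tied case, and the tie $\lvert U_T\rvert = \lvert W_T\rvert$, where every vertex contributes $0.5$ while a crossing edge (lying in neither $\Sigma'_{U_T}$ nor $\Sigma'_{W_T}$) contributes $0$ — the inequality holds in each case. \textbf{Step 2.} Summing over the $deg(v)$ edges incident to $v$ gives $\sum_{e \sim v}\delta_{\Sigma'_{W_T}}(e) \le deg(v)\,\delta_{\Sigma'_{W_T}}(v)$ for every $T$; dividing by $deg(v)$, averaging over $\mathcal{T}_G$, and interchanging the two finite sums (exactly as in the proof of Lemma~\ref{c:statusSum}) yields $influence(v) \le status(v)$. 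Unbalancedness is not actually used: if $\Sigma$ is balanced, every spanning tree returns $\Sigma$ itself and the statement becomes degenerate, so the hypothesis merely marks the interesting regime.

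For the equality claims, the structural input is that when $v$ is pendant its unique incident edge $e = uv$ is a bridge of $G$, hence lies in every spanning tree $T$ and is therefore never re-signed by Algorithm~\ref{alg:Balance}; thus $e$ keeps its original sign $\sigma(e)$ in every balanced state $\Sigma'_T$. By Theorem~\ref{t:HararyCut}(3), a positive edge of $\Sigma'_T$ has both endpoints on the same side of $(U_T,W_T)$ and a negative edge has its endpoints on opposite sides. If $\sigma(e) = +1$, then $u$ and $v$ always lie on the same side, so $\delta_{\Sigma'_{W_T}}(e) = \delta_{\Sigma'_{W_T}}(v)$ for every $T$ (in the tie both equal $0.5$); since $deg(v) = 1$, this gives $influence(v) = agreement(e) = status(v)$. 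If $\sigma(e) = -1$, then $e$ is a crossing edge of $\Sigma'_T$ for every $T$, hence belongs to no induced subgraph $\Sigma'_{W_T}$, so $agreement(e) = 0$ and therefore $influence(v) = 0$.

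The hard part is not conceptual but a matter of fixing conventions: the tie-break value must be applied so that a negative (crossing) edge of a tied balanced state gets $\delta_{\Sigma'_{W_T}}(e) = 0$, not $0.5$, since such an edge lies in neither $\Sigma'_{U_T}$ nor $\Sigma'_{W_T}$. With that reading, the pointwise inequality of Step~1 and the pendant-negative conclusion both hold without exception, and the only remaining work is the routine interchange of finite summations in Step~2.
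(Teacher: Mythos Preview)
Your argument is correct and is precisely the natural unpacking of the paper's own justification, which is not a proof environment at all but the single sentence ``Their relation stems from Defn.~\ref{d:status}, Defn.~\ref{d:influence}, and from comparing the totality of edge counts around each vertex.'' Your Step~1 pointwise comparison $\delta_{\Sigma'_{W_T}}(e)\le\delta_{\Sigma'_{W_T}}(v)$ for $e\sim v$ is exactly that ``comparison of edge counts around each vertex,'' and Step~2 is the routine averaging the paper leaves implicit. The bridge observation for the pendant clauses is the right structural input and is not spelled out in the paper either.

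One point worth highlighting: your closing paragraph about the tie-break convention is not a mere technicality but a genuine observation the paper does not address. As Definition~\ref{d:agreement} is literally written, a crossing edge in a tied balanced state receives $0.5$, which would make the ``influence is $0$'' clause fail whenever ties occur. Your proposed reading (a crossing edge lies in neither induced part, so it should get $0$ even in a tie) is the one under which the lemma holds as stated; the paper silently assumes this but never says so. You have, in effect, both proved the lemma and identified the convention needed to make it true.
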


\subsection{Conservation of Controversy}
\label{ssec:controversy}

Consensus is a general agreement that can be achieved without unanimous voting. If consensus in the signed graph is unanimous, then the Harary-cut produces one partition consisting of the entire connected graph; the nearest balanced state has all positive edges. On the other end of the spectrum, the nearest balanced state can result in a Harary-cut that has bipartitions of equal size, and the entire graph is deadlocked in indecision. Controversy (from Latin \emph{controversia} meaning "turn in opposite direction") occurs anytime there are conflicting opinions in the group. Controversy in balanced graph states occurs when consensus is achieved but the voting was not unanimous. Every balanced state but one (the all positive signed graph) has a certain level of controversy associated with it. The measure of the average status of the nearest balanced states can quantify controversy for the underlining signed graph, and the graph status definition is Definition~\ref{d:graphStatus}. 

\begin{definition}
Let $status(\Sigma)$ denote the graph status measure, and $\lvert V(G) \rvert$ is the number of vertices in the graph.  Then, an average status of a signed graph $\Sigma$ is defined as
\begin{align*}
status(\Sigma) = \frac{1}{\lvert V(G) \rvert}\dsum_{v \in V} status(v).
\end{align*}
\label{d:graphStatus}
\end{definition}

\begin{lemma}
Let $\Sigma = (G,\sigma)$ be a signed graph, then $
0.5  \leq status(\Sigma) \leq 1$.
\label{t:status2}
\end{lemma}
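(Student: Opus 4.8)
The plan is to establish the bounds $0.5 \le status(\Sigma) \le 1$ by bounding $status(\Sigma)$ in terms of the quantity $\sum_{T \in \mathcal{T}_G} \lvert V(\Sigma'_{W_T}) \rvert$ supplied by Lemma \ref{c:statusSum}, which gives $\lvert V(G) \rvert \cdot \lvert \mathcal{T}_G \rvert \cdot status(\Sigma) = \sum_{T \in \mathcal{T}_G} \lvert V(\Sigma'_{W_T}) \rvert$. So the whole problem reduces to bounding $\lvert V(\Sigma'_{W_T}) \rvert$, the size of the \emph{larger} side of the Harary-bipartition, for a single spanning tree $T$, and then summing over $\mathcal{T}_G$.

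First I would observe the upper bound. For each $T$, the Harary-bipartition $(U_T, W_T)$ satisfies $\lvert U_T \rvert + \lvert W_T \rvert = \lvert V(G) \rvert$, so trivially $\lvert V(\Sigma'_{W_T}) \rvert \le \lvert V(G) \rvert$. Summing over all $T \in \mathcal{T}_G$ and dividing by $\lvert V(G) \rvert \lvert \mathcal{T}_G \rvert$ gives $status(\Sigma) \le 1$; note equality would require every balanced state to be the all-positive one, i.e. the trivial one-block partition. Next I would handle the lower bound, which is the more interesting half: since $W_T$ is by convention the larger (or tied) block, $\lvert V(\Sigma'_{W_T}) \rvert \ge \lceil \lvert V(G) \rvert / 2 \rceil \ge \lvert V(G) \rvert / 2$ for each $T$. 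Summing over $\mathcal{T}_G$ and normalizing yields $status(\Sigma) \ge 1/2$. One has to be slightly careful with the tie-break convention in Definition \ref{d:status}: when $\lvert U_T \rvert = \lvert W_T \rvert$, the $\delta$-function assigns $0.5$ to every vertex on both sides, so the contribution of that tree to $\lvert \mathcal{T}_G \rvert \sum_v status(v)$ is $0.5 \lvert V(G) \rvert = \lvert V(G) \rvert / 2$, which is exactly the lower bound — this is consistent with the remark already made in the proof of Lemma \ref{c:statusSum} that the tied case is ``equivalent to counting the size of the tied majority,'' so no separate argument is needed.

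The main (minor) obstacle is simply making the tie-break bookkeeping rigorous so that the inequality $\lvert V(\Sigma'_{W_T}) \rvert \ge \lvert V(G) \rvert / 2$ is interpreted correctly as ``the weighted count of vertices counted as being in the majority, per tree, is at least $\lvert V(G) \rvert / 2$''; once Lemma \ref{c:statusSum} is invoked in the form it is already stated, this is immediate. I would therefore write the proof as: invoke Lemma \ref{c:statusSum}; bound each summand $\lvert V(\Sigma'_{W_T}) \rvert$ from above by $\lvert V(G)\rvert$ and from below by $\lvert V(G)\rvert/2$ (covering the tie case via the convention); sum and normalize. If desired I would close with the observation that the lower bound $0.5$ is attained exactly when every spanning tree produces a perfectly balanced (tied) Harary-cut, tying back to the ``maximally controversial / deadlocked'' interpretation in Subsection \ref{ssec:controversy}.
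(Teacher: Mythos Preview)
Your proposal is correct and follows essentially the same argument as the paper: bound each term $\lvert V(\Sigma'_{W_T}) \rvert$ between $\lvert V(G)\rvert/2$ and $\lvert V(G)\rvert$, then sum over spanning trees and normalize via Lemma~\ref{c:statusSum}. Your treatment of the tie-break case is in fact more careful than the paper's, which simply appeals to the ``definition of majority'' without spelling out the $0.5$ convention.
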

\begin{proof}
This lemma sets the bounds of status sum in Lemma \ref{c:statusSum}. From the definition of majority, we have that for every spanning tree $T$ we have
\begin{align*}
\frac{\lvert V(G) \rvert}{2} \leq \lvert V(\Sigma'_{W_T}) \rvert \leq  \lvert V(G) \rvert.
\end{align*}
Summing over all spanning trees and normalizing the sum using Lemma \ref{c:statusSum}, we get the result. \end{proof}

From Theorem \ref{c:statusSum} we know $\lvert \mathcal{T}_{G} \rvert \dsum_{v \in V} status(v)$ is a sum of the sizes of the majority, so it must be an integer. The bounds are from Theorem \ref{t:status2} and multiplying by $\lvert \mathcal{T}_{G} \rvert$.  Combining Lemmas \ref{c:statusSum} and \ref{t:status2} we have:

\begin{theorem}
For a signed graph $\Sigma = (G,\sigma)$  and for all spanning trees $T$ of $\Sigma$:
\begin{enumerate}
    \item $status(\Sigma)$ is minimal ($=0.5$) if, and only if, $\lvert V(\Sigma'_{W_T}) \rvert = \lvert V(\Sigma'_{U_T}) \rvert, \forall T$,
    \item $status(\Sigma)$ is maximal ($=1$) if, and only if, $\lvert V(\Sigma'_{W_T}) \rvert = \lvert V(G) \rvert, \forall T$.
\end{enumerate}
\label{t:controversy}
\end{theorem}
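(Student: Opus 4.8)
The plan is to read the theorem off directly from Lemma~\ref{c:statusSum} and Lemma~\ref{t:status2}, which between them already present $status(\Sigma)$ as a normalized average of the sizes $\lvert V(\Sigma'_{W_T}) \rvert$ taken over all spanning trees. First I would combine Definition~\ref{d:graphStatus} with Lemma~\ref{c:statusSum} to write
\[
status(\Sigma) = \frac{1}{\lvert V(G)\rvert \,\lvert \mathcal{T}_G\rvert} \sum_{T \in \mathcal{T}_G} \lvert V(\Sigma'_{W_T}) \rvert,
\]
so that $status(\Sigma)$ is exactly the arithmetic mean, over the spanning trees $T \in \mathcal{T}_G$, of the fractions $\lvert V(\Sigma'_{W_T}) \rvert / \lvert V(G) \rvert$.

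Next I would invoke the per-tree bounds already used in the proof of Lemma~\ref{t:status2}: for every spanning tree $T$, since $\Sigma'_{W_T}$ is by convention the larger side of the Harary-bipartition, one has $\lvert V(G)\rvert/2 \le \lvert V(\Sigma'_{W_T})\rvert \le \lvert V(G)\rvert$, hence each averaged fraction lies in $[\tfrac12,1]$. The key step is then the elementary fact that an average of real numbers each lying in a closed interval $[a,b]$ equals $a$ (respectively $b$) if and only if every one of those numbers equals $a$ (respectively $b$). Applying this with $a=\tfrac12$, $b=1$ yields both equivalences at once: $status(\Sigma)=0.5$ iff $\lvert V(\Sigma'_{W_T})\rvert = \lvert V(G)\rvert/2$ for all $T$, which is the same as $\lvert V(\Sigma'_{W_T})\rvert = \lvert V(\Sigma'_{U_T})\rvert$ for all $T$; and $status(\Sigma)=1$ iff $\lvert V(\Sigma'_{W_T})\rvert = \lvert V(G)\rvert$ for all $T$.

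I do not anticipate a genuine obstacle here; the only point that deserves a sentence of care is the tie-break clause in Definition~\ref{d:status}. When $\lvert W_T\rvert = \lvert U_T\rvert$, every vertex contributes $0.5$, so that tree contributes $\lvert V(G)\rvert/2$ to $\lvert \mathcal{T}_G\rvert \sum_{v} status(v)$ — precisely the value $\lvert V(\Sigma'_{W_T})\rvert$ in the tied case — so the averaging identity above is valid verbatim (this consistency is exactly what the proof of Lemma~\ref{c:statusSum} establishes), and no separate case analysis is needed. I would close by remarking that the universal quantifier ``$\forall T$'' in the statement is essential: if some trees are tied and others are not, the strict inequalities $0.5 < status(\Sigma) < 1$ hold, so neither extreme is attained unless the relevant equality holds uniformly over $\mathcal{T}_G$.
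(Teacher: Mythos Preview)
Your proposal is correct and follows essentially the same route as the paper: the paper simply states that Theorem~\ref{t:controversy} is obtained by combining Lemmas~\ref{c:statusSum} and~\ref{t:status2}, and your write-up is precisely the natural unpacking of that combination---express $status(\Sigma)$ as an average of the per-tree majority fractions and note that an average over $[\tfrac12,1]$ hits an endpoint iff every term does. Your extra sentence handling the tie-break clause is a welcome clarification, but introduces no new idea beyond what Lemma~\ref{c:statusSum} already establishes.
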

%%%%%%%%%%%%%%%%%%%%%%%%%%%%%%%%%%%%%%%%%%%%%%%%%%%%%%%%%%%

We define average status over all vertices in the graph as a measurement of \textbf{controversy} (Theorem \ref{t:controversy}). The maximum value of $status(\Sigma)$ is $1.0$; this is the case when all nearest balanced states have all positive edges, and everyone agrees all the time. The minimum value of  $status(\Sigma)$ is $0.5$, and the balancing consistently splits the set in two equally sized subsets. In between, if $status(\Sigma)$ is closer to $1$, the entire graph has low controversy, and if it is trending to $0.5$, the entire graph has higher controversy.  One way to resolve a tie-break in Section~\ref{ssec:status} is to assign status and agreement values of 0.5 if the Harary-cut bipartitions are equal size.  In the Human Resource Scenario, consider that a ``reliable'' or ``reputable'' vertex exists, and have that person (vertex in signed graph) break all ties in its own favor when the Harary-cut bipartitions are of equal size.  We define \emph{vertical status} in  Definition \ref{d:verticalStatus}.

\begin{definition}
The \emph{vertical status} of a vertex $v$ in $\Sigma = (G,\sigma)$ with respect to designated vertex $t$ is
\begin{align*}
status_{t}(v)=\frac{1}{|\mathcal{T}_{G}|}\dsum_{T \in \mathcal{T}_{G}}\delta^t_{\Sigma'_{W_T}}(v),~~~
 \delta^t_{\Sigma'_{W_T}}(v) =
    \begin{cases}
      1 & \text{if } v \in \Sigma'_{W_T},\\
      1 & \lvert W_T \rvert = \lvert U_T \rvert, \text{~and~}  v, t \text{~in the same partition,}\\
      0 & \text{otherwise.}
    \end{cases}    
\end{align*}
\label{d:verticalStatus}
\end{definition}

Definition~\ref{d:verticalStatus} states that all tie-breaks increase the status of vertex $t$ and vertices in the same subset as $t$. Figure \ref{fig:StatusEx} illustrates how vertical status compares to status for the signed graph from Figure \ref{fig:BalStates1}.  Let us consider the top-left vertex (now a closed box) as the tie-breaker in Figure \ref{fig:StatusEx} in case 1, and the bottom-right vertex (now an open box) as the tie-breaker in case 2. \emph{Case 1}: the top-left vertex is used to break any ties, and the vertical status values are now $(8/8, 7/8, 2/8,5/8)$, as illustrated in Figure \ref{fig:StatusEx}(middle). \emph{Case 2}: the bottom-right vertex is used to break ties, and the vertical status values are now $(5/8,4/8,5/8,8/8)$. In both cases, the status of the chosen vertex increased over the original status in Figure~\ref{fig:Status0}.

\begin{figure}[!ht]
    \centering
    \includegraphics[scale=1]{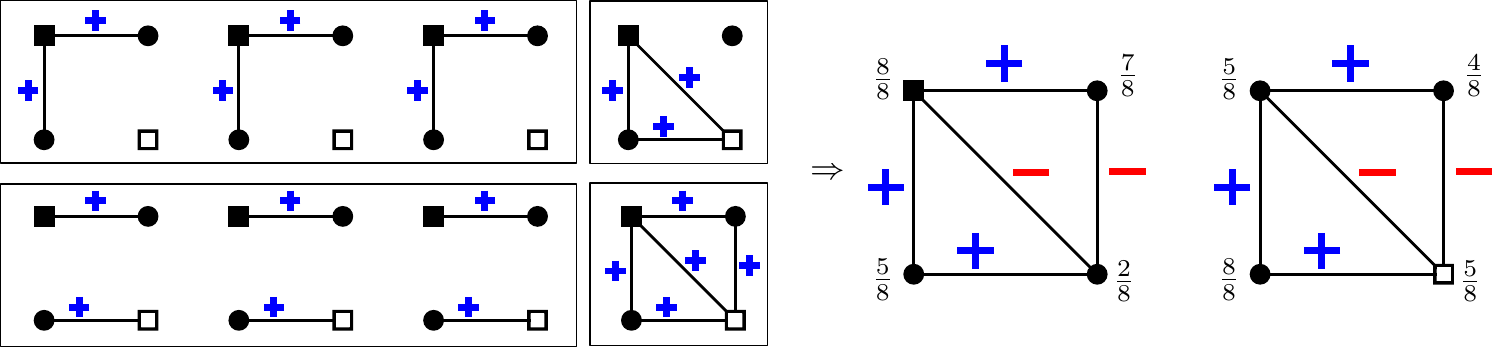}
    \caption{Calculating the vertical status using the top-left vertex (middle); and bottom-right vertex (right).}
    \label{fig:StatusEx}
\end{figure}

\begin{lemma} For a signed graph $\Sigma = (G,\sigma)$ and vertex $v \in V$, the $status_t(v)$ is maximized when $t=v$.
\label{l:maxStatus}
\end{lemma}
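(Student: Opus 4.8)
The plan is to prove the statement tree by tree, using the observation that the only spanning trees whose contribution to $status_t(v)$ depends on the designated vertex $t$ are those whose balanced state has a tied Harary-cut. First I would fix a spanning tree $T \in \mathcal{T}_G$ and analyze $\delta^t_{\Sigma'_{W_T}}(v)$ as a function of $t$. If $\lvert W_T \rvert > \lvert U_T \rvert$, then by Definition~\ref{d:verticalStatus} the value is $1$ when $v$ lies in the majority part and $0$ otherwise, and in either case it is independent of $t$. If $\lvert W_T \rvert = \lvert U_T \rvert$, the operative clause is the tie-break clause, so the value is $1$ precisely when $v$ and $t$ lie in the same part of the Harary-cut and $0$ otherwise; its maximum possible value $1$ is attained, in particular, by $t = v$, since every vertex lies in the same part as itself.

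Next I would sum over all spanning trees. Partition $\mathcal{T}_G = \mathcal{T}^{>} \cup \mathcal{T}^{=}$, where $\mathcal{T}^{>}$ consists of the trees balanced to a state with a strict majority and $\mathcal{T}^{=}$ of those balanced to a state with equal parts. For $T \in \mathcal{T}^{>}$ the summand does not depend on $t$; for $T \in \mathcal{T}^{=}$ we have $\delta^t_{\Sigma'_{W_T}}(v) \le 1 = \delta^v_{\Sigma'_{W_T}}(v)$. Hence, term by term,
\begin{align*}
\delta^t_{\Sigma'_{W_T}}(v) \le \delta^v_{\Sigma'_{W_T}}(v) \qquad \text{for every } T \in \mathcal{T}_G \text{ and every } t \in V,
\end{align*}
and summing over $\mathcal{T}_G$ and dividing by $\lvert \mathcal{T}_G \rvert$ yields $status_t(v) \le status_v(v)$ for all $t$, which is the assertion. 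If desired, one can record the sharper fact that $status_v(v) - status_t(v) = \lvert \mathcal{T}_G \rvert^{-1}$ times the number of trees in $\mathcal{T}^{=}$ whose balanced state separates $v$ from $t$.

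There is no serious obstacle here beyond careful bookkeeping: the one point requiring attention is that when the Harary-cut is balanced the labels $W_T$ and $U_T$ are not canonically assigned, so the clause "$v \in \Sigma'_{W_T}$" is not well defined in that case and must be read as the symmetric tie-break clause, which depends only on whether $v$ and $t$ share a part. Once that is dispatched, the proof is simply the remark that a finite sum of $[0,1]$-valued step functions, each pointwise maximized at $t = v$, is itself maximized at $t = v$.
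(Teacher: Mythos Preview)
Your proof is correct and follows the same approach as the paper, whose argument is the single-line remark that when $t=v$ every tie-break contribution of $0.5$ in $\delta_{\Sigma'_{W_T}}(v)$ becomes a $1$ in $\delta^v_{\Sigma'_{W_T}}(v)$. Your version is more carefully articulated---partitioning $\mathcal{T}_G$ into strict-majority and tied trees and noting the labeling ambiguity in the tied case---but the underlying idea is identical.
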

\begin{proof}
If $t=v$, vertex determines its own tie-breakers, and every $0.5$ in $\delta_{\Sigma'_{W_T}}(v)$ is replaced with a $1$ in $\delta^t_{\Sigma'_{W_T}}(v)$.
\end{proof}

\begin{definition}
Let $status_t(\Sigma)$ denote the average vertical status of a signed graph $\Sigma$ with distinguished vertex $t \in V$.
\begin{align*}
status_t(\Sigma) = \frac{1}{\lvert V(G) \rvert}\dsum_{v \in V} status_t(v)
\end{align*}
\label{d:tStatus}
\end{definition}

The average status and the average vertical status over the whole signed graph is a constant, called the \emph{controversy} of the signed graph. This constant provides the following Conservation of Controversy Law.

\begin{theorem}[Conservation of Controversy Law:]  For a signed graph  $\Sigma = (G,\sigma)$, graph controversy is equal to its status and any vertical status: 
\begin{align*}
    controversy(\Sigma) = status(\Sigma) = status_t(\Sigma), \forall t \in V(G).
\end{align*}
\label{t:LawOfCC}
\end{theorem}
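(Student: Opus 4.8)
The plan is to show that $status_t(\Sigma)$ does not depend on the choice of $t$, and that this common value equals $status(\Sigma)$. Once this independence is established, the Conservation of Controversy Law follows immediately, since we may simply define $controversy(\Sigma)$ to be this constant. The key observation is the one made just after Lemma~\ref{c:statusSum}: the quantity $\lvert \mathcal{T}_G \rvert \sum_{v \in V} status(v)$ equals $\sum_{T \in \mathcal{T}_G} \lvert V(\Sigma'_{W_T}) \rvert$, a sum of integers, and the analogous identity should hold for the vertical status.

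First I would prove the vertical analogue of Lemma~\ref{c:statusSum}: for any designated vertex $t$,
\begin{align*}
\lvert \mathcal{T}_G \rvert \dsum_{v \in V} status_t(v) = \dsum_{T \in \mathcal{T}_G} \lvert V(\Sigma'_{W_T}) \rvert,
\end{align*}
i.e.\ the right-hand side is the \emph{same} sum that appears in Lemma~\ref{c:statusSum}. The argument mirrors the proof of Lemma~\ref{c:statusSum}: summing the definition of $status_t(v)$ over all $v$ and interchanging the order of summation reduces the problem to showing that for each fixed spanning tree $T$, $\sum_{v \in V}\delta^t_{\Sigma'_{W_T}}(v) = \lvert V(\Sigma'_{W_T}) \rvert$. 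When the Harary-cut of $\Sigma'_T$ has a strict majority side $W_T$, every $\delta^t$ value coincides with the corresponding $\delta$ value, so this is immediate. When $\lvert W_T \rvert = \lvert U_T \rvert$, the tie-break rule assigns $1$ to every vertex in $t$'s partition and $0$ to every vertex in the other partition; since the two partitions have equal size $\lvert V(G)\rvert/2$, the sum is $\lvert V(G)\rvert/2 = \lvert V(\Sigma'_{W_T}) \rvert$, matching the tied-majority count exactly as in the original lemma. This holds regardless of which partition $t$ lands in, which is precisely why the choice of $t$ drops out.

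Then, dividing both sides by $\lvert \mathcal{T}_G \rvert \cdot \lvert V(G) \rvert$ and invoking Definitions~\ref{d:graphStatus} and~\ref{d:tStatus} together with Lemma~\ref{c:statusSum} gives $status_t(\Sigma) = status(\Sigma)$ for every $t \in V(G)$. Defining $controversy(\Sigma)$ as this common value completes the proof. The main (and only real) obstacle is the careful bookkeeping in the tied case of the per-tree identity: one must check that the "winner-take-all" tie-break of vertical status still sums to the tied-majority cardinality, so that it agrees numerically with the "split-the-difference" ($0.5$) convention of ordinary status despite distributing the mass differently among vertices. Everything else is a routine re-run of the summation manipulation already carried out in Lemma~\ref{c:statusSum}.
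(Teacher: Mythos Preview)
Your proposal is correct and follows essentially the same route as the paper: sum the definition of $status_t$ over all vertices, interchange the order of summation, observe that for each spanning tree $T$ the inner sum $\sum_{v}\delta^t_{\Sigma'_{W_T}}(v)$ equals $\lvert V(\Sigma'_{W_T})\rvert$, and then invoke Lemma~\ref{c:statusSum} before dividing by $\lvert V\rvert$. Your explicit treatment of the tied case is in fact more detailed than the paper's, which compresses that step into the phrase ``a strict count of the size of the majority.''
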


\begin{proof}
As in Theorem \ref{c:statusSum}, 
\begin{align*}
\dsum_{v \in V} status_t(v) &= \dsum_{v \in V}\frac{1}{\lvert \mathcal{T}_{G} \rvert}\dsum_{T \in \mathcal{T}_{G}} \delta^t_{\Sigma'_{W_T}}(v) = \frac{1}{\lvert \mathcal{T}_{G} \rvert}\dsum_{v \in V}\dsum_{T \in \mathcal{T}_{G}} \delta^t_{\Sigma'_{W_T}}(v) \\
&= \frac{1}{\lvert \mathcal{T}_{G} \rvert}\dsum_{T \in \mathcal{T}_{G}}\dsum_{v \in V} \delta^t_{\Sigma'_{W_T}}(v) = \frac{1}{\lvert \mathcal{T}_{G} \rvert}\dsum_{T \in \mathcal{T}_{G}}\lvert V(\Sigma'_{W_T}) \rvert = \dsum_{v \in V} status(v).
\end{align*} 
The second to last equality is a strict count of the size of the majority, while the last equality is from Lemma \ref{c:statusSum}.
The proof is completed by dividing by $\lvert V \rvert$.
\end{proof}

The \textbf{Conservation of Controversy Law} in Theorem~\ref{t:LawOfCC} states that the average status is equal to any average vertical status. Interpreting average status as controversy, we can conclude that \emph{the level of controversy is independent of vertex preference}, but the individual status values may change. The controversy in Figure~\ref{fig:Status0} (right) is $0.6875$, in Figure~\ref{fig:StatusEx} (middle) is $0.6875$, and in Figure~\ref{fig:StatusEx} (right) is $0.6875$.  The vertical status increase of the chosen vertex in Figure \ref{fig:StatusEx} is \emph{at the cost of status values of other vertices}, so the overall controversy stays the same. Controversy is one of the most important concepts in this paper, as it quantifies the level of controversy in a graph as a whole and does not depend on the tie-breaker decision, as proven by the Conservation of Controversy Law. 

\section{graphB: Spanning Tree-Sampling Balancing Algorithm}
\label{ssec:practical}

The proposed measures of status \ref{d:status}, agreement \ref{d:agreement}, and controversy \ref{t:controversy} of a signed graph require the computation of \emph{all} spanning trees for the underlying unsigned graph $G$. For a real life socio-technical network, this is computationally prohibitive.  In this section, we propose to utilize existing spanning tree graph discovery and sampling methods to accurately model measures derived from {\emph all} spanning trees as outlined in Section~\ref{sec:consensus} with a {\emph sample} of spanning trees \cite{AI2009}. Note that the Conservation of Controversy Law (Theorem \ref{t:LawOfCC}) holds for any fixed subset of spanning trees; while different subsets of trees may produce different controversy values, the conservation is in tie-break scenarios within the sample. 

The upper limit on the number of spanning trees is computed by Cayley's theorem: the complete graph with $v$ vertices has $v^{v-2}$ spanning trees, and a complete bipartite graph with $v,q$ vertices has  $v^{q-1} \cdot q^{v-1}$ spanning trees \cite{OHMTT,BUEKENHOUT1998}. For a small social network such as the Highland Tribes \cite{1954Read} in Figure~\ref{fig:Highland} with $16$ vertices and $29$ positive and $29$ negative edges, that number is quite high $16^{14} = 7.21e+16$. The exact number of spanning trees for any graph $G$ can be calculated in polynomial time as the determinant of a matrix derived from the graph, using Kirchhoff's matrix-tree theorem \cite{Tutte}. The Tutte polynomial of a graph can be defined as a sum, over the spanning trees of the graph, of terms computed from the "internal activity" and "external activity" of the tree. Its value at the arguments (1,1) is the number of spanning trees \cite{Tutte}, and the computed number of spanning trees for the Highland Tribes is $402,506,278,163$. Computing probabilistic consensus measures as outlined in Section~\ref{sec:consensus} will require running Algorithm~\ref{alg:Balance} over 400 billion trees, and that is computationally prohibitive. We examine statistical samples of spanning trees to approximate modeling of balanced state coverage with $k$ sampled spanning trees. We propose scaled adjustment of Alg.~\ref{alg:Balance} termed {\bf graphB}: Spanning Tree-Sampling Balancing Algorithm, and we implement the proof-of-concept \cite{graphB}. The efficiency improvements on the tree-based balancing algorithm are outlined in Alg.~\ref{alg:Sampling}, and its implementation complexity is addressed in Sec.~\ref{ssec:complexity}. We model the frustration cloud and balanced state weights using $n$ spanning trees to compute status, influence, and controversy.

\begin{algorithm}[!ht]
  \caption{graphB: Spanning Tree-Sampling Balancing Algorithm:}
  \label{alg:Sampling}
  \begin{algorithmic}
    \Require Input signed graph  $\Sigma = (G,\sigma)$. 
    \Require Sample $n$ spanning trees $T$ to $\mathcal{T_k}$ set using BFS, random or DFS. 
    \ForAll {$i \in [1,n]$,  $\Sigma_i, T_i \in \mathcal{T_k}$}
     \ForAll  {edges $e$,  $e \in \Sigma \setminus T$}
            \If {fundamental cycle $T_i \cup e$ is negative}
            \State       change edge sign: $e^- -> e^+; e^+ -> e^-$
            \EndIf
            \EndFor 
            \State   Construct new balanced signed graph $\Sigma_i$
            \State Create Harary cutset for $T_i$, $(U_i, W_i)$
        \EndFor
        \Ensure Compute status for each vertex and agreement for each edge
        \end{algorithmic}
\end{algorithm}

Algorithm~\ref{alg:Sampling} includes sampling a tree step; instead of looping over all spanning trees, we select a subset of spanning trees $\mathcal{T_n}$  that contains $n$ spanning trees of signed graph $\Sigma$.  Given a fixed number of vertices and edges, path-like trees have minimal eigenvalues while star-like trees have maximal eigenvalues, per Lovasz' eigenvalue characterization for trees \cite{Lovasz1}. Next, we analyze three strategies for sampling spanning trees in the graphB algorithm (Alg. \ref{alg:Sampling}) w.r.t. eigenvalues and the frustration cloud. A breadth-first sampling search favors star-like spanning trees (max eigenvalues),  a depth-first spanning tree search favors path-like spanning trees (min eigenvalues), and a random tree selection is used as a baseline as it is the most efficient (random). We examine the sensitivity of our proposed method using random, breadth-first, and depth-first spanning tree samplings and demonstrate how status, influence, and controversy can be perceived in different paradigms. 

{\bf Random Sampling Baseline:} A \emph{uniform spanning tree} is a tree chosen randomly from among all the spanning trees with equal probability, and there are multiple known implementation algorithms. We assumed that a random sample will best represent the frustration cloud and multiplicity. The fastest implementation is a \emph{random minimal spanning} tree algorithm; it generates random trees, but cannot guarantee sampling uniformity.  The edges of the graph are assigned random weights, and then the minimum spanning tree of the weighted graph is constructed. We compared the implementation of three different algorithms in NetworkX \cite{NetworkX} for discovery of the \emph{minimal spanning tree}, namely DJP, Boruvka, and Kruskal's algorithms \cite{AI2009,NetworkX}.  They are all greedy algorithms that run in polynomial time, and we have not observed much difference in speed or randomization when selecting a specific one.\\
{\bf Breadth-first search (BFS):} BFS searches for spanning trees by progressively exploring all the neighborhood vertices from the starting vertex (search key) at present depth before moving to next depth level.  As a result, breadth-first spanning trees are the most star-like trees in the graph, and they represent the classes of trees that have maximal eigenvalues \cite{Lovasz1}. The BFS algorithm has been used to find the shortest path between two vertices in a graph measured by number of edges, as it allows for the discovery of the shortest fundamental cycles in a graph. Thus, spanning trees resulting from a breadth-first search have the maximal number of pendant vertices and fundamental cycles of minimal length \cite{AI2009}. We use NetworkX \cite{NetworkX} implementation of BFS for proof-of-concept. \\
{\bf Depth-first search (DFS):} DFS searches for spanning trees by exploring the branch to the highest depth level possible before backtracking and expanding \cite{AI2009}, effectively delaying cycle feedback as long as possible. Trees determined by a depth-first approach are the most path-like trees with the minimal number of pendant vertices. The process maximizes the length of the fundamental cycle. The DFS algorithm has been used in determining the number of connected components in a graph \cite{AI2009}. Also, the DFS spanning tree search algorithm maximizes the sampling of path-like trees, the classes of trees that have minimal eigenvalues \cite{Lovasz1}, and the fundamental cycle length. 

Random spanning tree sampling provides a straightforward way to analyze the frustration cloud, as context-driven algorithms such as breadth- or depth-first alter the resolution of the data. We conjecture that breadth-first tree sub-sampling will provide the greatest data resolution for computing measures in Section~\ref{sec:consensus}, while depth-first tree sub-sampling will produce a noisy interpretation of the same data. We demonstrate the validity of this assumption on a larger Wikipedia administrator dataset in Section~\ref{ssec:Wiki}. 

%Larger eigenvalues are used for power iteration, while smaller eigenvalues are used in spectral clustering. 
 % frustration cloud technique can be adapted to \emph{simultaneously study} both types of problems, and even indicate whether a dataset is prone to clustering or power iteration techniques. 
 
\subsection{graphB Implementation and Complexity Analysis}
\label{ssec:complexity}

The graphB algorithm implementation in Python is released as open source \cite{graphB}. The algorithm implementation has the overall complexity of $O(n \cdot v \cdot e)$ for run time and $O(v^2v)$ for memory consumption, where $kn$ is the number of sampled trees, $v$ is the number of vertices in the signed graph, and $e$ is the number of edges. We keep the adjacency matrix in memory for the entire process ($O(v^2)$). In the pre-process step, we symmetrize the adjacency matrix ($O(v^2)$), find connected components ($O(v+e)$), sort by the number of neighbors in the largest connected component, and write it to a file ($O(v^2)$). In the process step, we find $n$ trees $(O(n \cdot e))$. Then for each tree, we find all fundamental cycles in the graph, as there exists a cycle containing  \emph{e} if, and only if, there exists a fundamental cycle with respect to the selected spanning tree  \emph{T} that contains  \emph{e}. We have adopted a linear time algorithm for finding articulation points \cite{Farina2015} in  $(O(v+e))$ time for a single spanning tree. The complexity of the entire processing step is $O(n \cdot v \cdot e)$. Note that algorithm implementation does not make any assumptions on the signed graph (sparsity, planarity). The only assumption graphB implementation makes is that the graph edge weights are either +1 or -1. 

\section{Proof of Concept} 
\label{ssec:setup}

The computation of frustration cloud-based measures for signed graphs is implemented using Python and Python libraries, and the code used for proof-of-concept is released on GitHub \cite{graphB}.  The analysis of the largest connected component, spanning tree search methods, and statistics is computed using the  NetworkX \cite{NetworkX} package. Experiments are run on the Texas State University LEAP system \cite{LEAP}: Dell PowerEdge C6320 cluster node consists of two (14-core) 2.4 GHz E5-2680v4 processors 128 GB of memory each, and two 1.5TB memory vertices with four (18-core) 2.4 GHz E7-8867v4 Intel Xeon processors \cite{LEAP}. The LEAP system allows  us to scale the data analysis to support the sampling of $n=1000$ spanning tree computations and to demonstrate the feasibility of computation on larger graph datasets \cite{snapnets}.  In the \emph{graphB} Alg.\ref{alg:Sampling} pipeline, spanning trees are generated over the dataset and saved in h5 format; we use the NetworkX \cite{NetworkX} implementation of random minimal tree, breadth-first, and depth-first tree discovery. Next, for each generated spanning tree, the balancing algorithm is executed on the edges not in the spanning tree (Alg.~\ref{alg:Sampling}) for more details. We obtained a list of unique paths encompassing the spanning tree and given edge and checked for fundamental cycles. If the product of the cycles is $-1$, then the given edge completing the cycle by changes signs, resulting in a balanced cycle.  We repeat the process for each edge. Once all edges outside the spanning tree were visited and edge signs persisted or flipped, the resulting state was a balanced state of the graph. Next, we take a Harary-cut and split the graph into two components. We repeat the process for $k=1000$ trees (Alg.~\ref{alg:Sampling}). The final step is computing the status for each vertex and the agreement for each edge as the normalized sum over  \emph{k} sampled trees, per Defn.~\ref{d:status} and Defn.~\ref{d:agreement}. Vertex influence is then computed as the normalized sum over the sampled $1000$ trees per Defn~\ref{d:influence} and the controversy of the entire graph per Theorem~\ref{t:controversy}. 

\subsection{Wikipedia Administratorship Election Data} 
\label{ssec:Wiki} 

Stanford Network Analysis Project's (SNAP) repository of network data provides good proof-of-concept access to attitudinal network graphs \cite{snapnets}. \emph{Wikipedia} administrator election data represents votes by Wikipedia users in elections for promoting individuals to the role of administrators from July 2004 to January 2008. Wikipedia administrators are editors who have been granted the ability to perform special tasks. The dataset contains 7118 users (vertices) and 103,747 votes (edges) over 2794 elections with one election per candidate, and the outcome of the elections. Out of  2794 elections, 1235 resulted in the promotion to administrator ($44.2\%$), and 1,559 elections did not result in the promotion of the candidate. On the editorial side, $3464$ editors cast zero or $1$ vote over all elections, $5506$ editors cast under 10 votes, and $1612$ editors voted 10 times or more.  Administrators are chosen through a \emph{community review process} that seeks \emph{consensus} is not a majority rule, as the \emph{editor in charge} reviews editors' votes and rationale.

The signed graph is constructed  from Wikipedia administrator election data so that each vertex represents an editor or nominee; if both are running for multiple administrator positions (this is possible as there are different Wikipedia sections), we represent one user with multiple vertices, where each vertex has a final outcome (winner, loser, editor). The edges in the graph model the vote of support or initial nomination ($+1$) or a vote of opposition ($-1$).  We ignore the neutral votes as they are equivalent to no-vote or ambivalent votes per \cite{Abelson1958}.  For $k$ spanning trees and balanced states, $k$ Harary cutsets are found from balanced states as in  Alg.~\ref{alg:Sampling}. Sampled status and influence are computed as follows: 

\begin{definition}
For a signed graph $\Sigma = (G,\sigma)$ and subset of size $k$ of spanning trees $\mathcal{T_k} \subseteq \mathcal{T}_{G}$, the sampled status, agreement, influence, and controversy are computed as:
\begin{enumerate}
    \item $status(v)=\frac{1}{k}\dsum_{T \in \mathcal{T_k}\delta_{\Sigma'_{W_i}}(v)}, \text{ where } \delta_{\Sigma'_{W_i}}(v)$ is defined in Defn.~\ref{d:status},
    \item $agreement(e) = \frac{1}{k}\dsum_{T \in \mathcal{T_k}\delta_{\Sigma'_{W_i}}(e)} \text{ where }
 \delta_{\Sigma'_{W_i}}(e)$ is defined in Defn.~\ref{d:agreement},
    \item $influence(v)=\frac{1}{deg(v)}\dsum_{e \sim v}agreement(e)$, 
    \item $controversy(\Sigma) = \frac{1}{\lvert V \rvert}\dsum_{v \in V}status(v)$.
\end{enumerate}
\label{d:computation}
\end{definition}

\paragraph{Connected components: }  $7066$ users (vertices), or  $99.3\%$ of all vertices, and $103663$ of all votes (edges), or $99.97\%$ of all edges, belong to the largest connected component of the constructed attitudinal graph.  Only $52$ users and $26$ votes are not in largest connected component, and they represent unsuccessful nominations that fail to gather significant votes. We examine the Wikipedia dataset using a sample size of $k$ spanning trees, where $k \in \{10,100,1000\}$. 

\subsubsection{Experiment: Spanning Tree Discovery}
\label{sssec:Exp1}

First, we measure the status and influence for three different spanning tree discovery techniques on the Wikipedia dataset, for $k=1000$ spanning trees are compared: \emph{random} trees as determined by minimal spanning trees with random edge weights, \emph{breadth-first} trees with a random initial vertex, and \emph{depth-first} trees with a random initial vertex. We compute status and influence scores for all participants (vertices) in the Wikipedia election data.  First, we analyze the data in id-status and id-influence space, and we color editors as black triangles and nominees as yellow circles. Whether a vertex is an editor or nominee \emph{is not used} to determine status and influence scores; this information is only used in data analysis and the visualization step. 

Results for status and influence for Wikipedia data editors and nominees for three spanning tree discovery techniques are shown in Figures~\ref{fig:WikiAllTree} and \ref{fig:WikiAllTree1}. Both the editor and nominee means coupled with a $1$ standard deviation band are shown as solid and dashed lines, respectively. The computed {\emph status} of the editors (black triangles) and nominees (yellow circles) appear in Figure~\ref{fig:WikiAllTree}. 

\begin{figure}[!ht]
    \centering
    \includegraphics[width=5.5in]{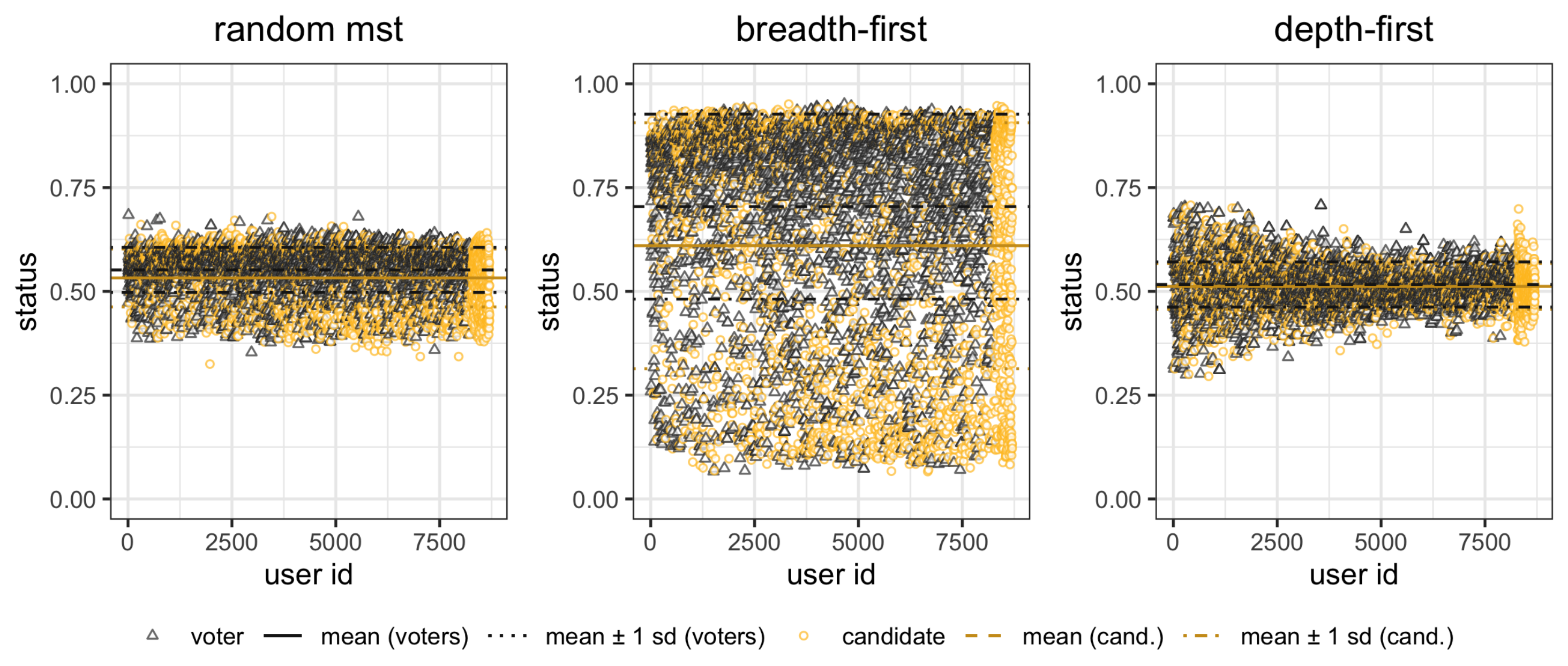}
    \caption{Status of editors (black triangles) and nominees (yellow circles) in Wikipedia administrative election dataset resulting from different tree sampling methods: Random minimal spanning tree (left), breath-first (center), and depth-first (right).}
    \label{fig:WikiAllTree}
\end{figure}

Status score distribution for each sampling methodology in Figure~\ref{fig:WikiAllTree} produces similar score distribution for editors and for nominees, as $1$-SD bands for editors and nominees are close for each sampling method. By observing status measure only, one may conclude the Wikipedia election processes are fair based on the likelihood of landing in a majority (evenly spread out between voters and votees).  Status does not discriminate between voters and nominees in Figure~\ref{fig:WikiAllTree}.  Next, let us examine the influence measure of the editors (grey) and nominees (yellow) in Figure~\ref{fig:WikiAllTree1}. The editors display a much larger influence value than the nominees. The influence score clearly separates high influence individuals (voters) from low influence individuals (votees) in the network.  Note that the conjecture from Section~\ref{ssec:practical} is shown valid in practice, as the breadth-first search provides the highest resolution of status and influence metric analysis in both figures. The depth-first experiment consistently produces a biased sample in terms of balanced states and the frustration cloud, and our experiments on other datasets consistently confirm the conjecture.

% For tables use
\begin{table}[!ht]

% table caption is above the table
\caption{Spanning three discovery methods comparison on Wiki data:}
\label{t:Summary}       % Give a unique label
% For LaTeX tables use
\begin{tabular}{lllll}
\hline\noalign{\smallskip}
Type & Mean Status (Controversy) & St.Dev status & Mean Influence & St. Dev influence   \\
\noalign{\smallskip}\hline\noalign{\smallskip}
breadth-first & \textbf{0.6693} & 0.2564 & 0.5178 & 0.2823\\
random & 0.54446 & 0.06117 & 0.3465239 & 0.15985\\
depth-first & \textbf{0.5149} & 0.0547 & 0.3067 & 0.1528 \\
\noalign{\smallskip}\hline
\end{tabular}
\end{table}

Overall data statistics are summarized in Table~\ref{t:Summary}. Note that by design, depth-first (DPS) and breadth-first (BFS) represent lower and upper bounds of controversy (mean status) that the experiment confirms. Controversy is computed for all three tree discovery strategies, and the breadth-first value is \textbf{0.6693} while the depth-first value is \textbf{0.5159}. Controversy, as defined in Theorem~\ref{t:controversy}, is a constant value when all spanning trees are accounted for, and breadth-first and depth-first give us the range estimate of the actual value if all spanning trees are considered.  Controversy is a relative measure of the attitudinal network graph, and if compared to another graph, the same spanning tree sampling method must be used for valid comparison.

\begin{figure}[!ht]
    \centering
    \includegraphics[width=5.5in]{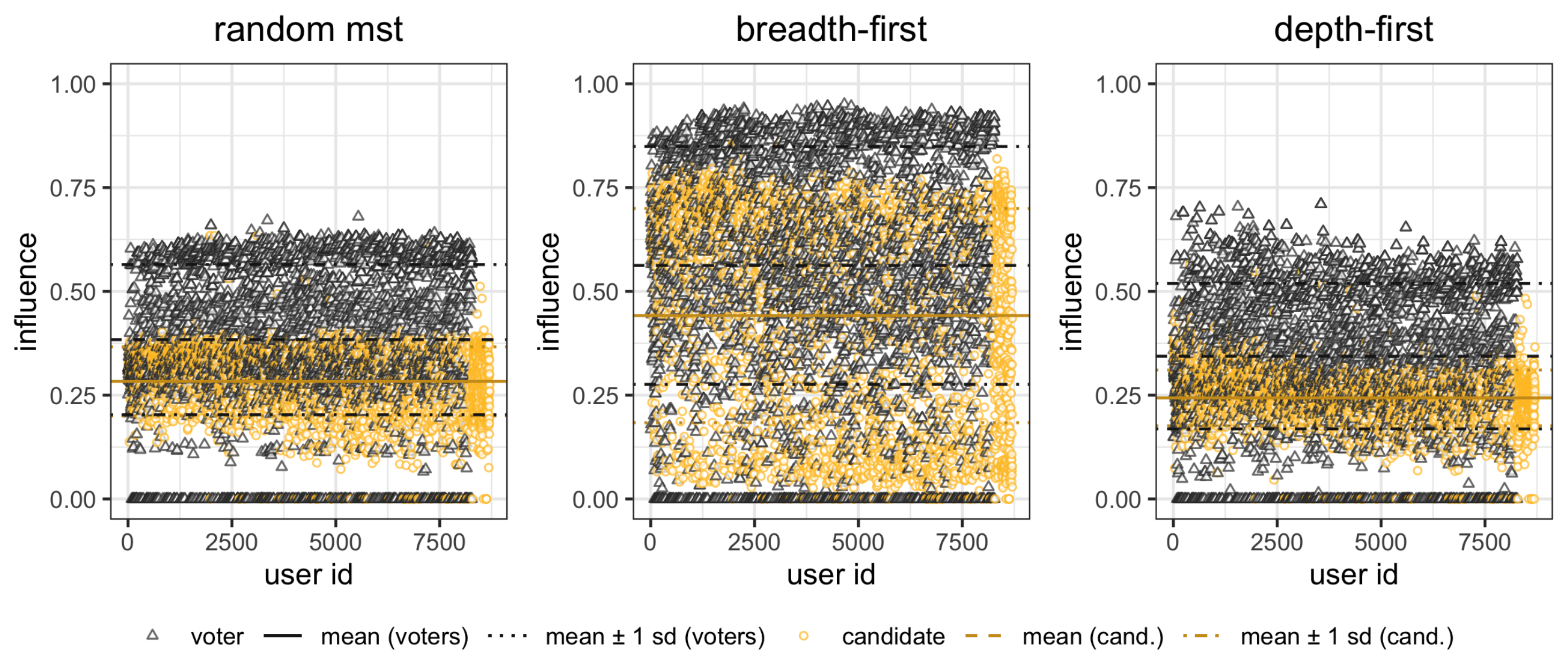}
    \caption{Influence of editors (black triangles) and nominees (yellow circles) in Wikipedia administrative election dataset resulting from different tree sampling methods: Random minimal spanning tree (left), breadth-first (center), and depth-first (right).}
    \label{fig:WikiAllTree1}
\end{figure}

Next, we provide analysis of only the \emph{nominees} (yellow circles). While restricted to the nominees and using the known outcomes of the votes, we examine the efficacy of our method. We re-color the id-status and id-influence graphs with nominee outcomes to illustrate the effectiveness and difference in the metrics as follows: blue triangles represent a positive outcome (promoted to administrator) and red circles represent a negative outcome (not promoted or withdrew its nomination).  This is depicted in Figure~\ref{fig:WikiNoTree} and captures the different measures of influence and status present. High status individuals are mostly the nominees that won the administrator elections, and low status individuals are mostly the nominees that did not win the elections for random and BFS spanning tree sampling. The sensitivity of status measure for nominees and its strong relation to outcome makes status a good predictor of promotability in random and breadth-first spanning tree discoveries. For DFS, the resolution of the status is too low to make any conclusion. 

\begin{figure}[!ht]
    \centering
    \includegraphics[width=5.5in]{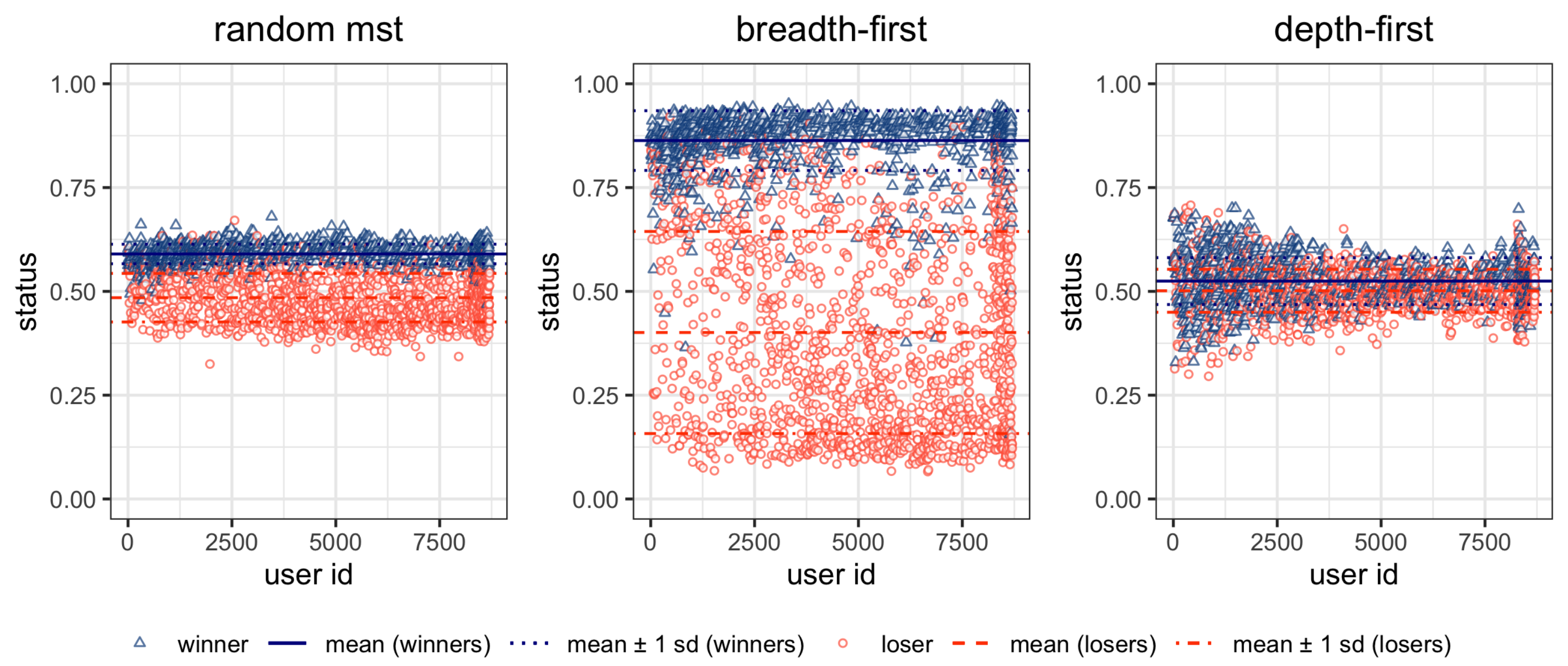}
    \caption{Status of winners (blue triangles) and losers (red circles) in Wikipedia administrative election dataset resulting from different spanning tree discovery methods: Random minimal spanning tree (left), breadth-first (center), and depth-first (right).}
    \label{fig:WikiNoTree}
\end{figure}

Influence score distribution colored by outcome in Figure~\ref{fig:WikiNoTree1} provides a better separation of winners and losers, even for the DFS sampling method. In a promotional network such as a Wikipedia election, status does not distinguish between editors and nominees, but influence does. Nominee-only analysis (people that have been voted on) illustrates a high correlation between status and influence scores prediction. Upon further analysis of Figure~\ref{fig:WikiNoTree} and Figure~\ref{fig:WikiNoTree1}, we identify clear promotional outliers: nominees with high status/influence that did not win the nomination. This deserves further study on anomalous promotion and case-by-case analysis. 

\begin{figure}[!ht]
    \centering
    \includegraphics[width=5.5in]{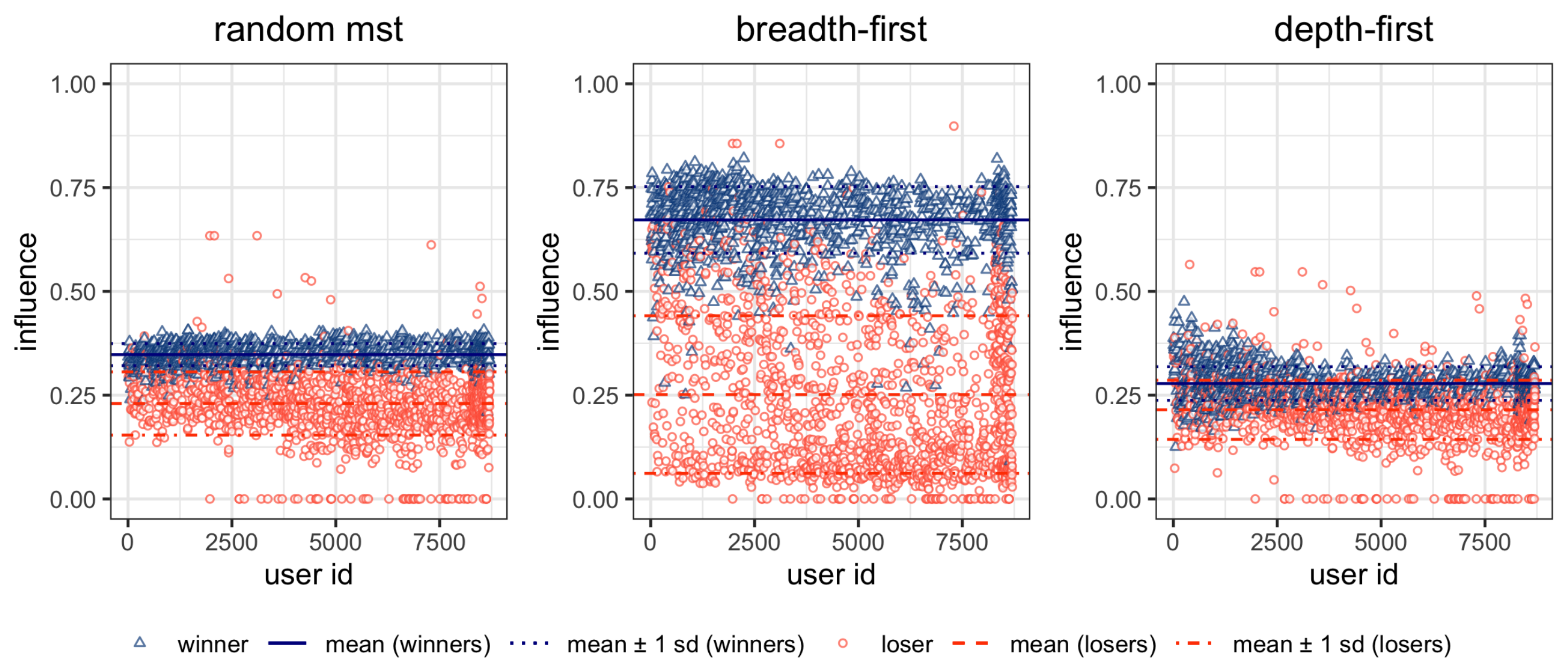}
    \caption{Influence of winners (blue triangles) and losers (red circles) in Wikipedia administrative election dataset resulting from different spanning tree discovery methods: Random minimal spanning tree (left), breadth-first (center), and depth-first (right).}
    \label{fig:WikiNoTree1}
\end{figure}

Depth-first spanning tree discovery strategy does not allow for a reliable representation of the statistical significance of balanced states. We exclude the depth-first spanning tree search in subsequent experiments for status and influence measure and focus on random and breadth-first search spanning tree sampling. 

\subsubsection{Experiment: Sufficient Number of Spanning Trees }
\label{sssec:Exp2}

\begin{figure}[!ht]
    \centering
   \includegraphics[width=5.4in]{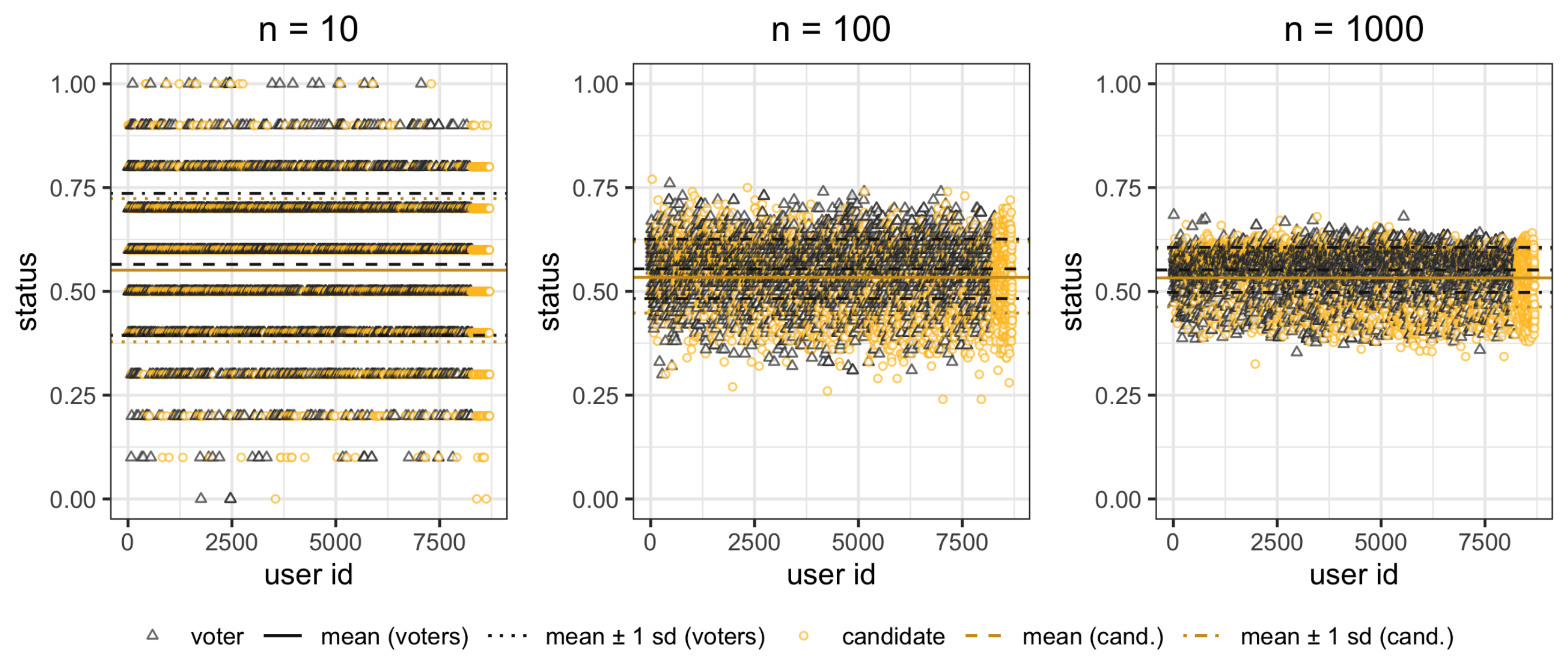} \\
 \includegraphics[width=5.4in]{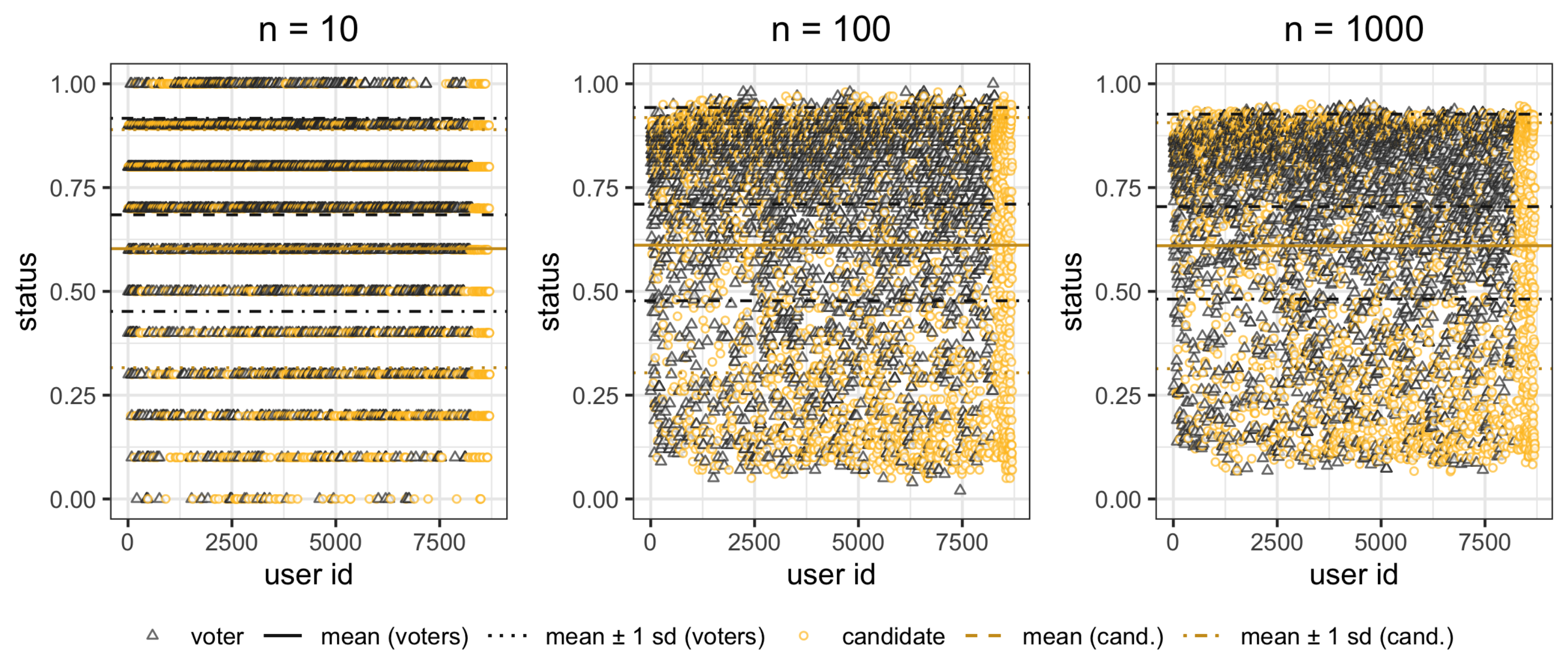}
    \caption{Status score distribution for random (top row) and breadth-first (bottom row) search tree sampling to number of $n$ sampled trees, $n=10$, $n=100$ and $n=1000$ for editors and nominees for status.}
    \label{fig:WikiSamplingStatus}
\end{figure}

In this experiment, we offer a heuristic answer to the complex estimation of the sufficient number of spanning trees ($n$ in Defn.~\ref{d:computation}) that will produce reliable modeling of balanced state representation. We evaluate the sensitivity of status and influence scores to the number of spanning trees sampled for random minimal tree and breadth-first search spanning tree for $n=10$, $n=100$, and $n=1000$ trees using random minimal sampling and breadth-first search techniques. The results for random spanning tree sampling for status is illustrated in Figure~\ref{fig:WikiSamplingStatus} and for influence in Figure~\ref{fig:WikiSamplingInfluence}. For both figures, random samples are on top and the breadth-first samples are on the bottom.

Status for $n=10$ can only take one of 11 different values (as the vertex in majority or not for each of the 10 sampled balanced states), as illustrated in Figure~\ref{fig:WikiSamplingStatus}. A shelving effect is visible due to so few samples. Influence (Defn.~\ref{d:computation}) has a higher resolution, as it is based on the average of agreement for each vertex, and vertex measure differs. A shelving effect is visible in Figure~\ref{fig:WikiSamplingInfluence} for a tight group of editors that behave in a similar fashion. Higher $n$ allows for more diverse samples to contribute to status, and while the overall resolution of both discovery strategies is smaller, it provides better results.  Figures~\ref{fig:WikiSamplingStatus} and \ref{fig:WikiSamplingInfluence} show that the status values have the higher resolution. They also show that $n=100$ of breadth-first and random sampling spanning trees achieves similar separation results in nominee outcome status as $n=1000$ trees. 
\begin{figure}[!ht]
    \centering
    \includegraphics[width=5.5in]{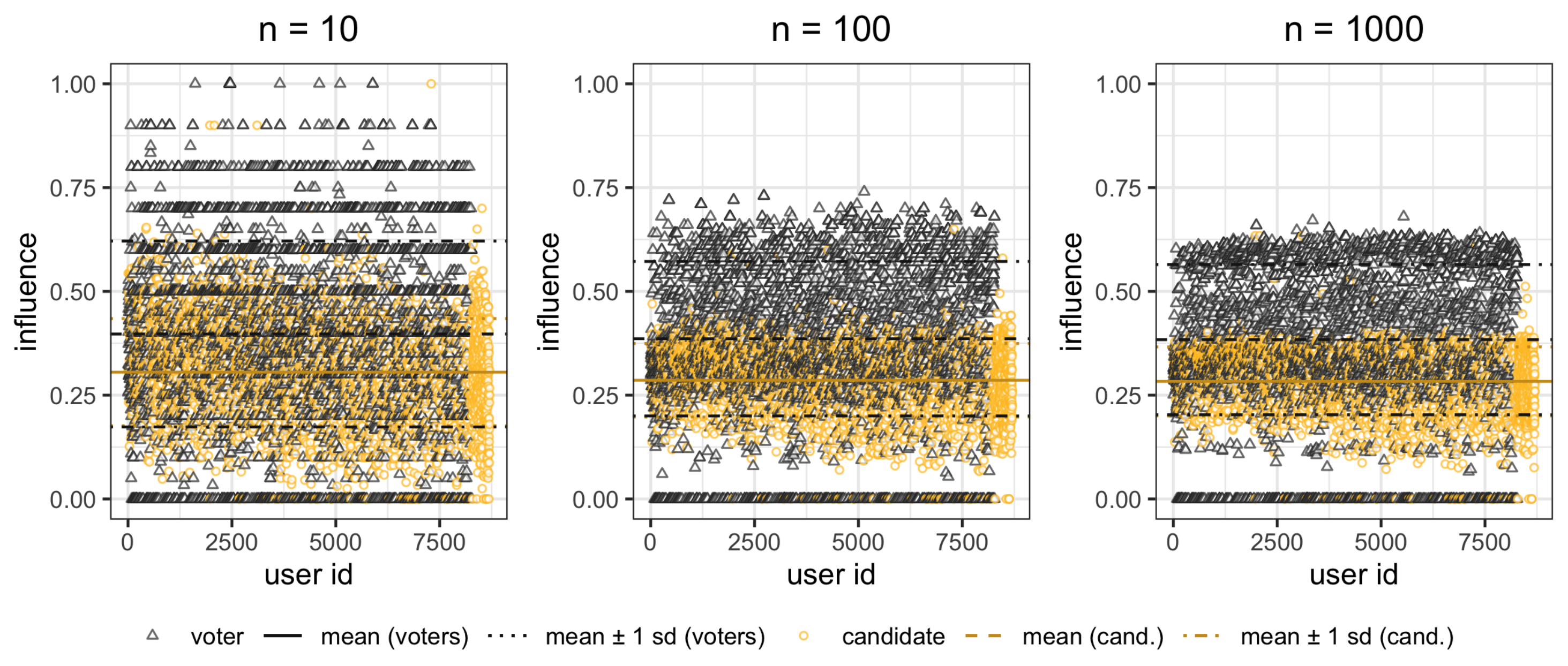} \\
\includegraphics[width=5.5in]{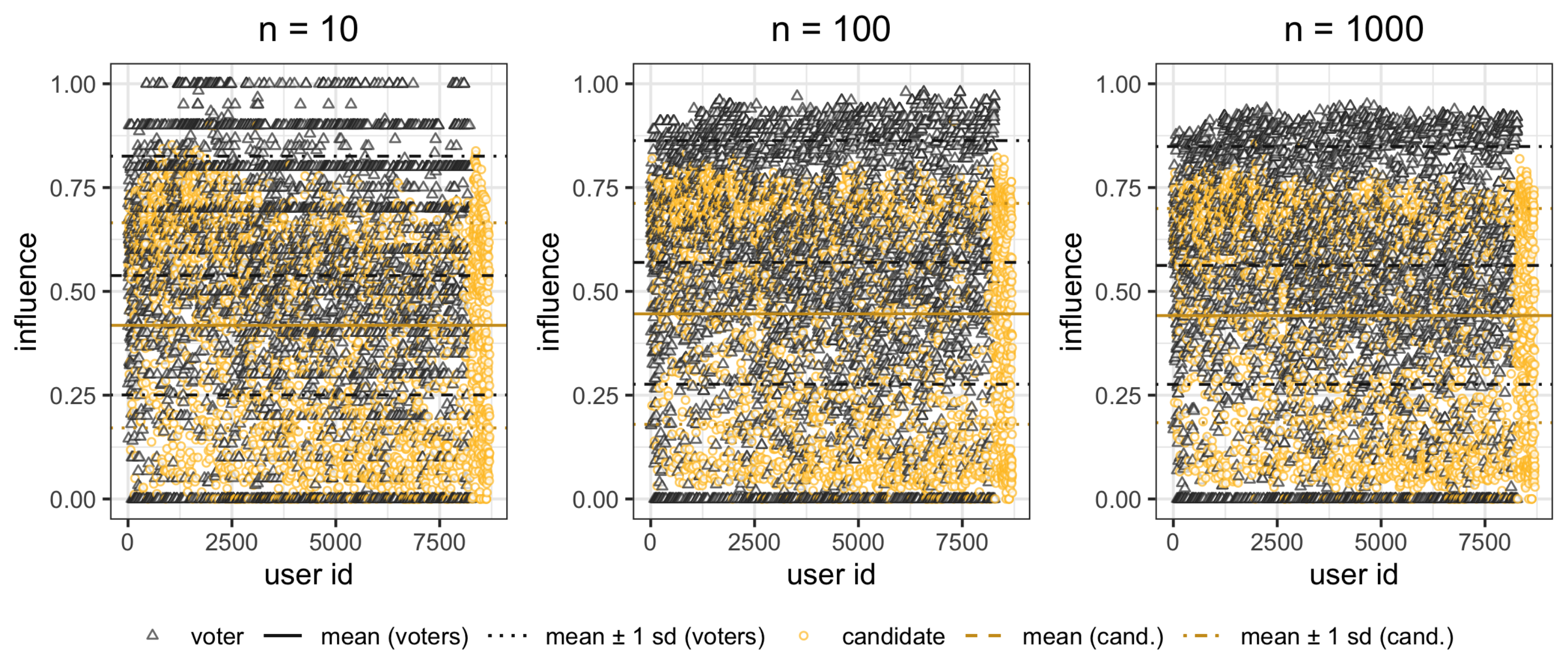}
    \caption{Influence score distribution for random (top row) and breadth-first (bottom row) search tree sampling to number of $N$ sampled trees, $N=10$, $N=100$ and $N=1000$ for editors and nominees for influence.}
    \label{fig:WikiSamplingInfluence}
\end{figure}
 
What is the guiding principle for larger graphs? We have tested the method on a larger signed graph for Slashdot, and $n=1000$ seems to be a good sampling rate for breadth-first spanning tree sampling, as illustrated in Figure~\ref{f:Slashdot}.

\subsubsection{Experiment: Outcome Analysis}
\label{sssec:Exp3}

Measures of status and influence can be used to access the outcome for a vertex in an attitudinal graph.  Requests for adminship (RfA) is the process by which the Wikipedia community decides to promote nominees into administrators \cite{RfA}.  Here, we use \emph{RfA} as the ratio of total votes for the nominee overall votes. Election outcomes are colored by blue triangles representing candidates that won the election and red circles representing candidates that lost the election.  Nominee status is obtained by users submitting their own requests for adminship or being nominated by editors. 

\begin{figure}[!ht]
    \centering
    \includegraphics[width=3.5in]{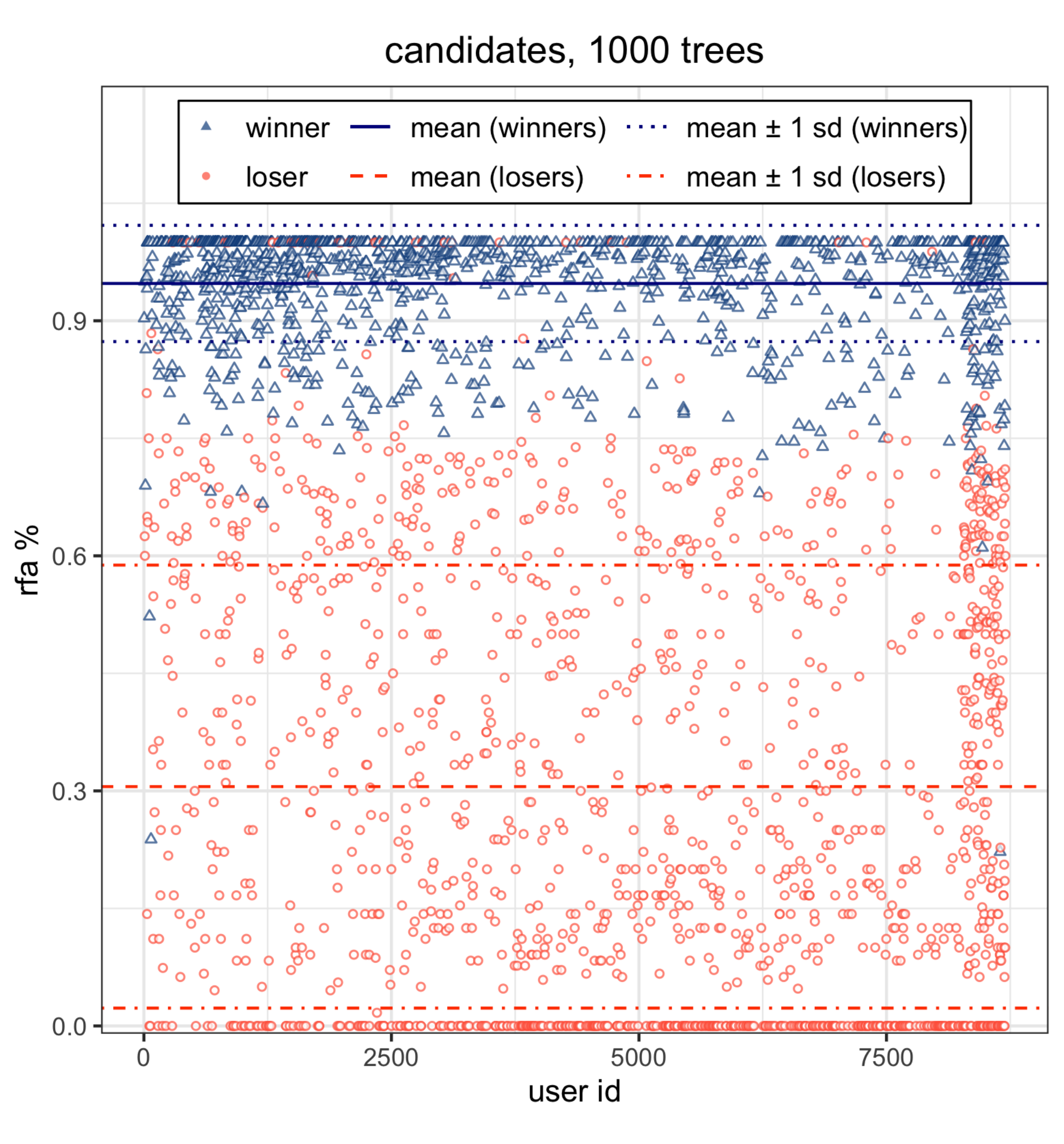}
    \caption{Wikipedia data analysis of the request for adminship (RfA) \cite{RfA}.  Blue triangles are users that won the election, and red circles are users that lost the election.}
    \label{fig:RfAnalysis}
\end{figure}

The final outcome for Wikipedia is a complex process that involves majority voting (RfA in Figure~\ref{fig:RfAnalysis} and a vetting process. In general, if the number of positive votes is under 65\%, the nominee is rejected; if the number of votes is over 75\%, the nominee is selected. A vetting process and discussion determine the final outcome.  Burke proposed a model of the behavior of candidates for promotion to administrator status in Wikipedia \cite{Burke2008}. He analyzes multiple measurable features of the nominee (strong edit history, varied experience, user interaction, helping with scores) and highlights similarities and differences in the community’s stated criteria for promotion decisions to those criteria that actually correlated with promotional success.  
In this experiment, we examine the use of status and influence scores per vertex as vertex features. Status and influence do not consider any of Burke's candidates' features, only their position in the signed graph. The relationship of status and influence to RfA is illustrated in Figure~\ref{fig:RfAnalysis}.

\begin{figure}[!ht]
    \centering
    \includegraphics[width=3in]{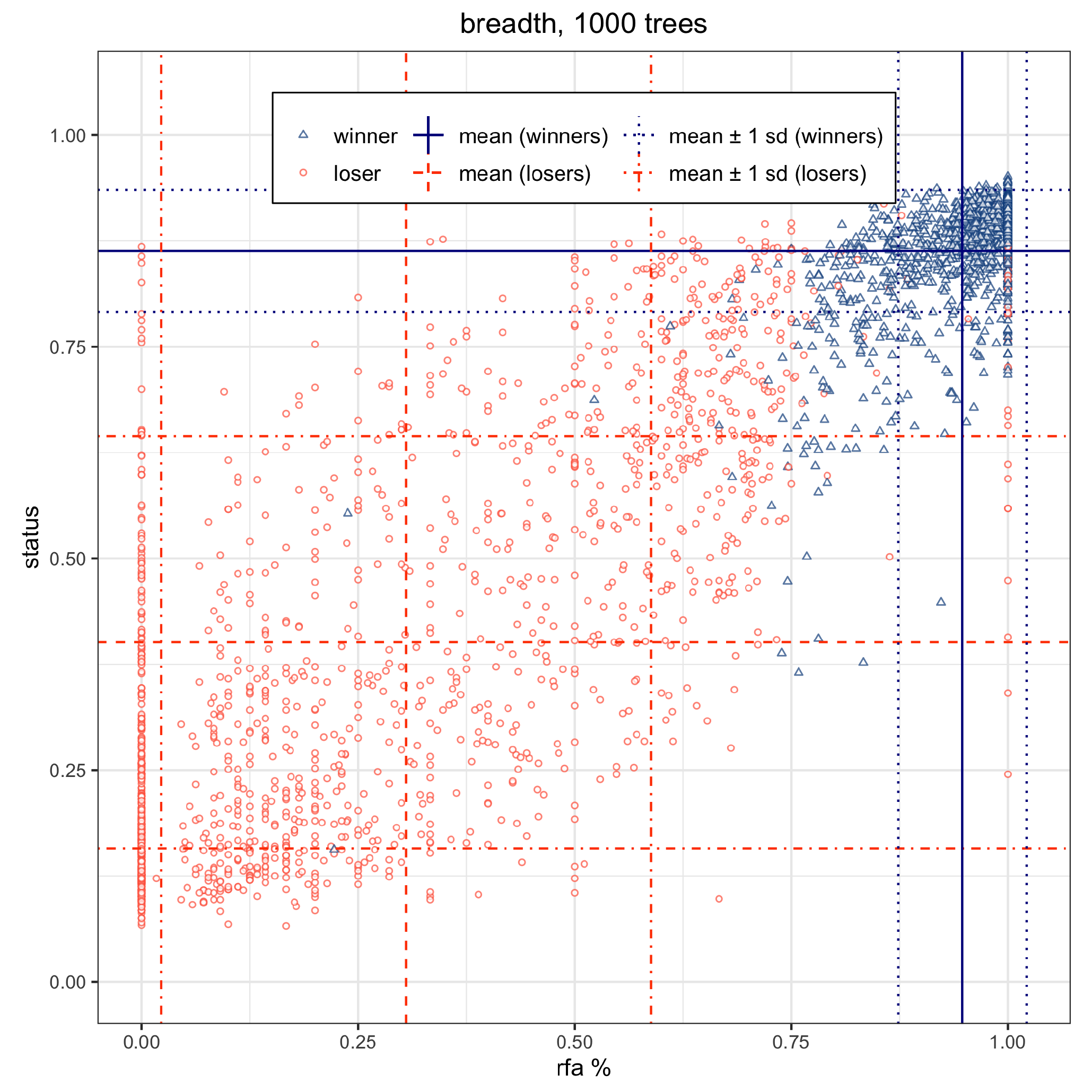}
    \includegraphics[width=3in]{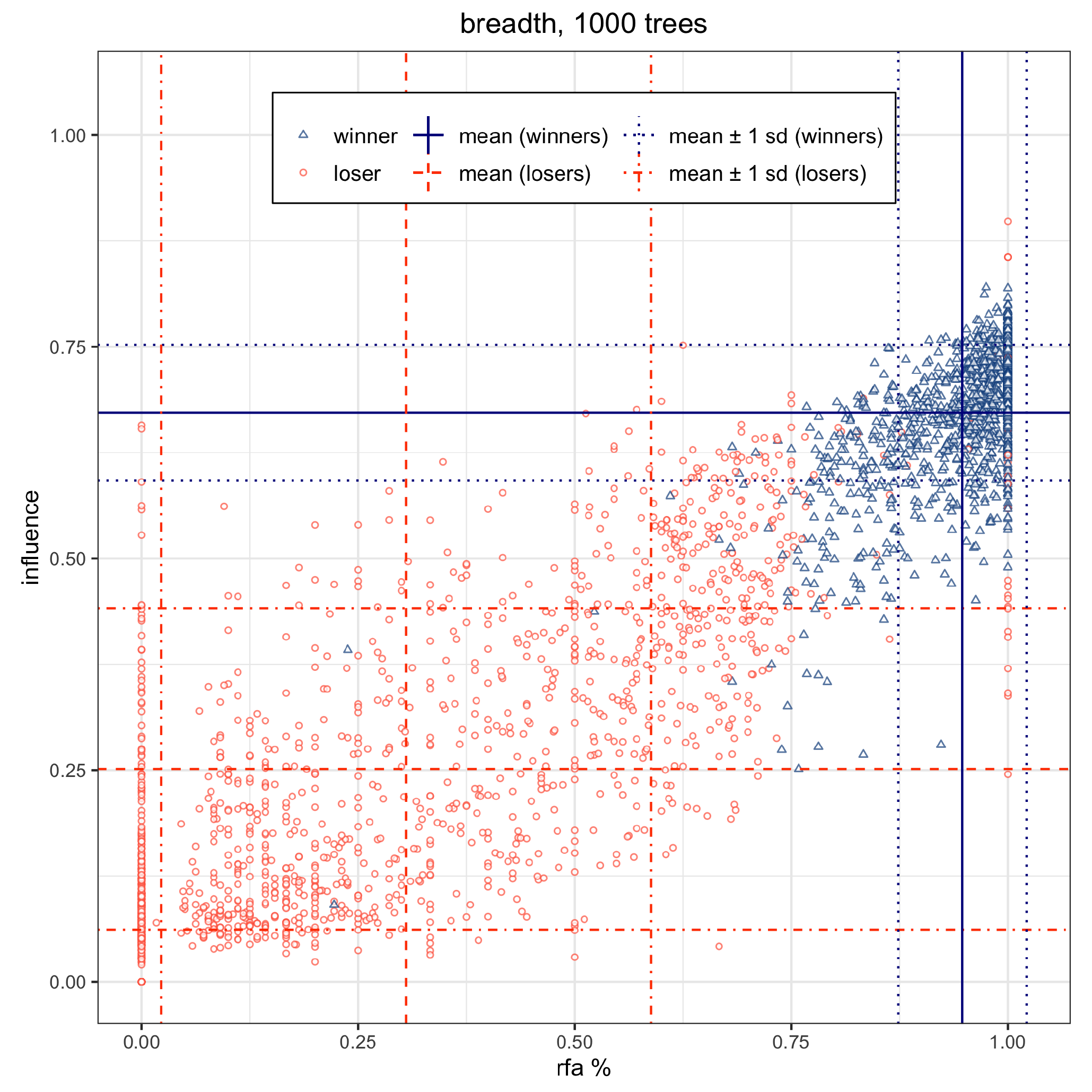}
    \caption{Wikipedia data analysis of (left) status vs RfA and (right) influence vs RfA}
    \label{fig:RfAnalysisStatus}
\end{figure}

Figure~\ref{fig:RfAnalysis} shows that both status and influence are highly positively correlated with RfA, and all measures are highly correlated to the outcome as well. A summary of aggregate findings is in Table \ref{tab:2}. Next, we study the nominees whose status and influence scores are different.  

% For tables use
\begin{table}[!ht]
\centering
% table caption is above the table
\caption{Measure distribution in Wiki adminship dataset for $N=1000$ breadth-first spanning tree discovery:}
\label{t:Election}       % Give a unique label
% For LaTeX tables use
\begin{tabular}{l|lll}
\hline\noalign{\smallskip}
Mean (St.Dev) & \textcolor{olive}{Nominees} & \textcolor{blue}{Promoted} & \textcolor{red}{Not Promoted} \\
\noalign{\smallskip}\hline\noalign{\smallskip}
RfA &  0.9476 (0.0742) & 0.3055 (0.2826)\\
Status &  0.6097 (0.2962) & 0.8632 (0.0719) & 0.4009 (0.2433)\\
Influence  & 0.4414 (0.2579) & 0.6721 (0.0803) & 0.2514 (0.1898)\\
\noalign{\smallskip}\hline\noalign{\smallskip}
Mean (St.Dev) & All & \textcolor{gray}{Editors} & \\
\noalign{\smallskip}\hline\noalign{\smallskip}
RfA & 0.5958 (0.3852) & N/A & \\
Status & 0.6693 (0.2564)  & 0.7041 (0.2226) & \\
Influence & 0.5178 (0.2823) & 0.5624 (0.2864) & \\
\end{tabular}
\label{tab:2}
\end{table}

%(3) status-versus-influence comparisons are well founded for community detection while retaining the ability to detect outliers across status, influence, promotion, and voting.

\subsubsection{Experiment: Status/Influence Cone}
\label{sssec:ExpCone}

We now analyze outcomes and RfA in a status-influence feature space. The vertices only appear under $x=y$ line in the graph, as shown in Figure~\ref{fig:WikiStatusInfluence} and Lemma \ref{l:StatusCone}. Editors, the black triangles in Figure~\ref{fig:WikiStatusInfluence} (left) have the higher status and influence as leaders in swaying opinions. In Figure~\ref{fig:WikiStatusInfluence} (right), we restrict the analysis to nominees and color by outcome only: blue triangles are elected and red circles are rejected. Significant separation between the elected and rejected nominees is apparent. Note they are further away from the $45^{\circ}$ line representing the most influential users, which are almost exclusively editors.

\begin{figure}[!ht]
    \centering
    \includegraphics[width=3in]{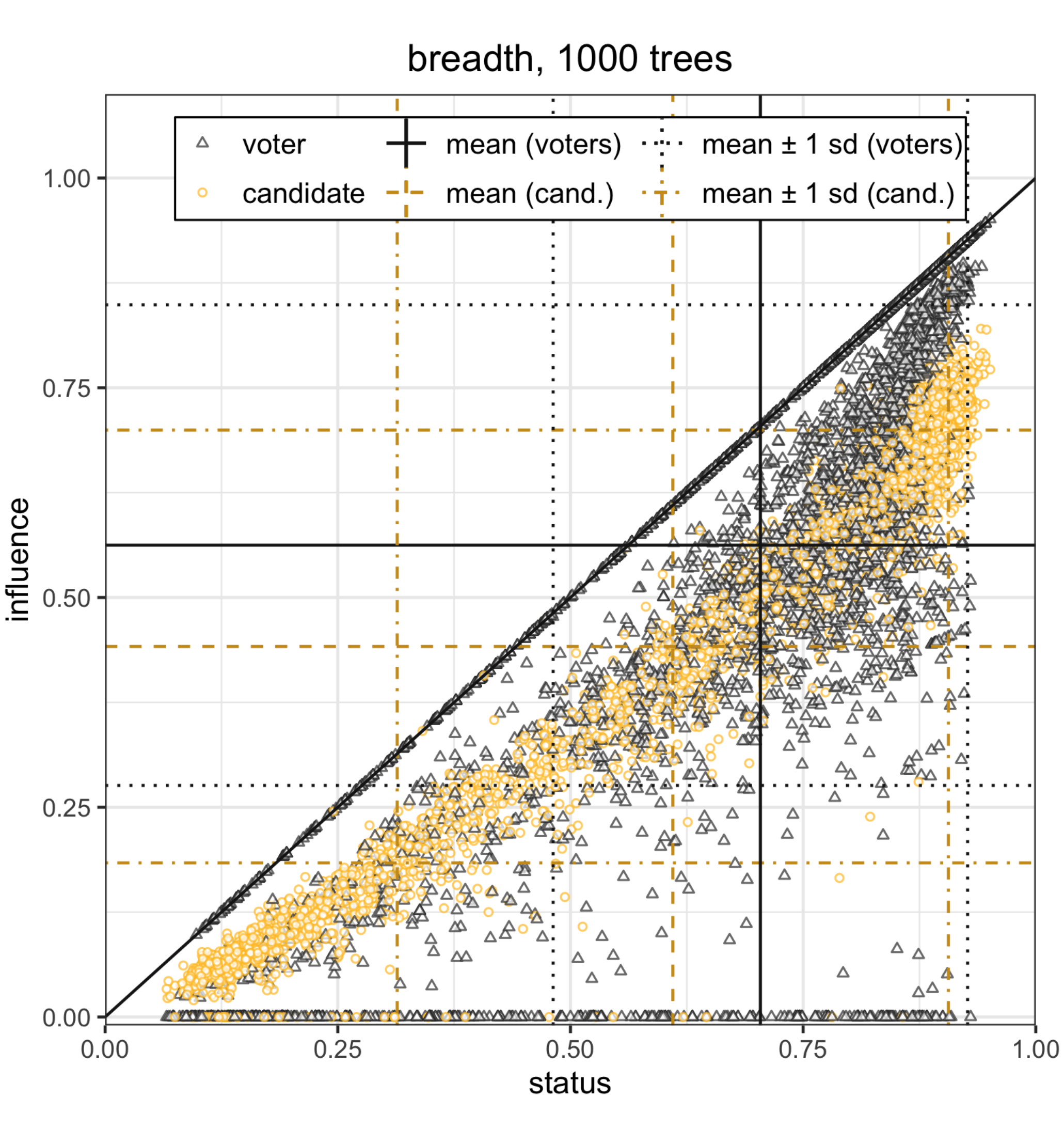}
     \includegraphics[width=3in]{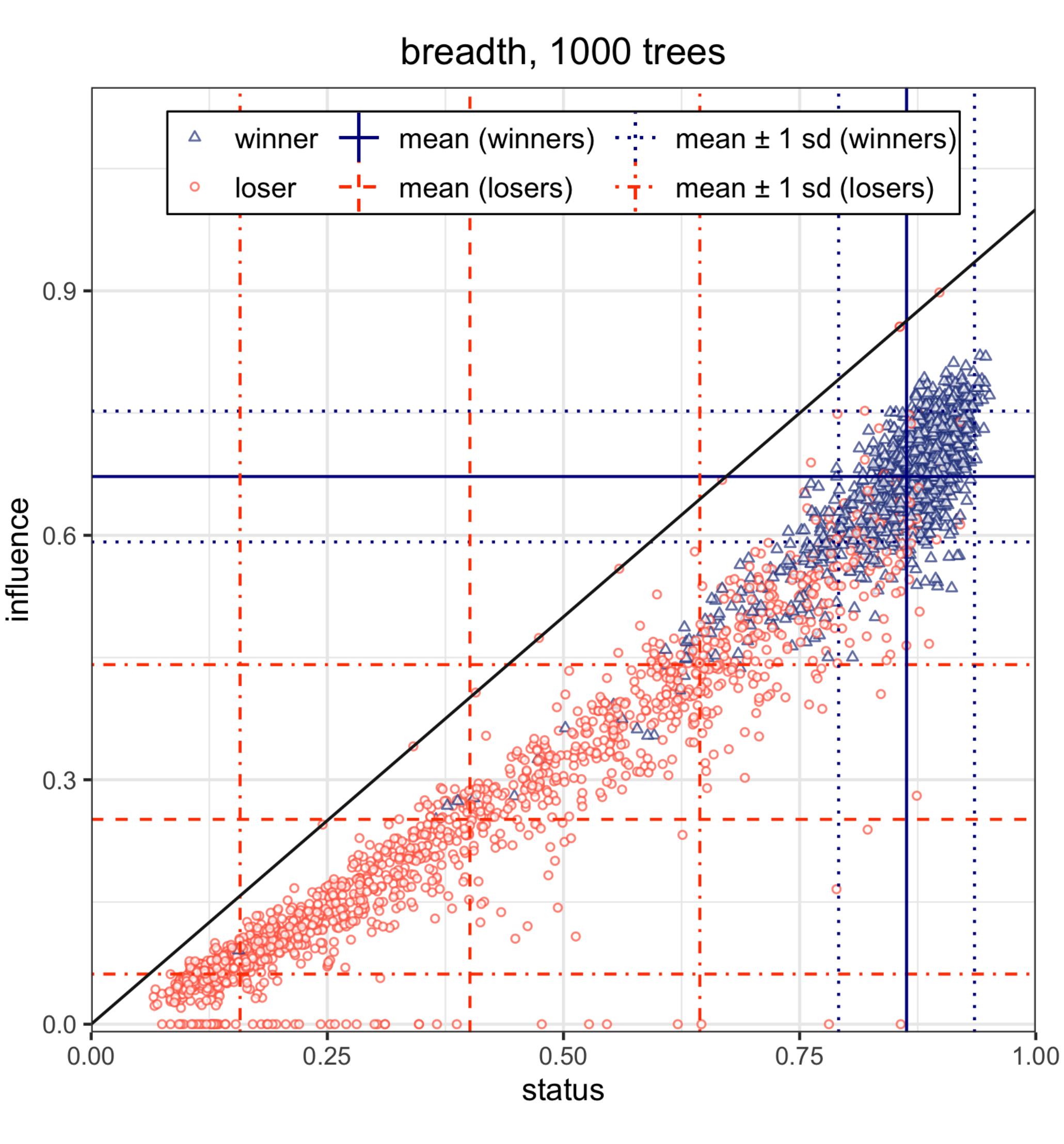}
    \caption{Wikipedia data analysis of the voting process from status vs. influence perspective: (left) editors (black triangles) vs. nominees (yellow circles); (right) by Wikipedia nominee outcome: blue triangles are elected, red circles are rejected.}
    \label{fig:WikiStatusInfluence}
\end{figure}
   
Next, we analyze RfA values in the status-influence space in Figure~\ref{fig:WikiStatusInfluence2}. The left graph shows continuous scores, while the right one uses the Wikipedia admin score scale. Here, we flag spam users, privileged users, narrow domain users and all anomalies by examining red circle distribution, the RfA in [65,75)\% over status-influence graph in Figure~\ref{fig:WikiStatusInfluence2} (right).

\begin{figure}[!ht]
    \centering
    \includegraphics[width=3in]{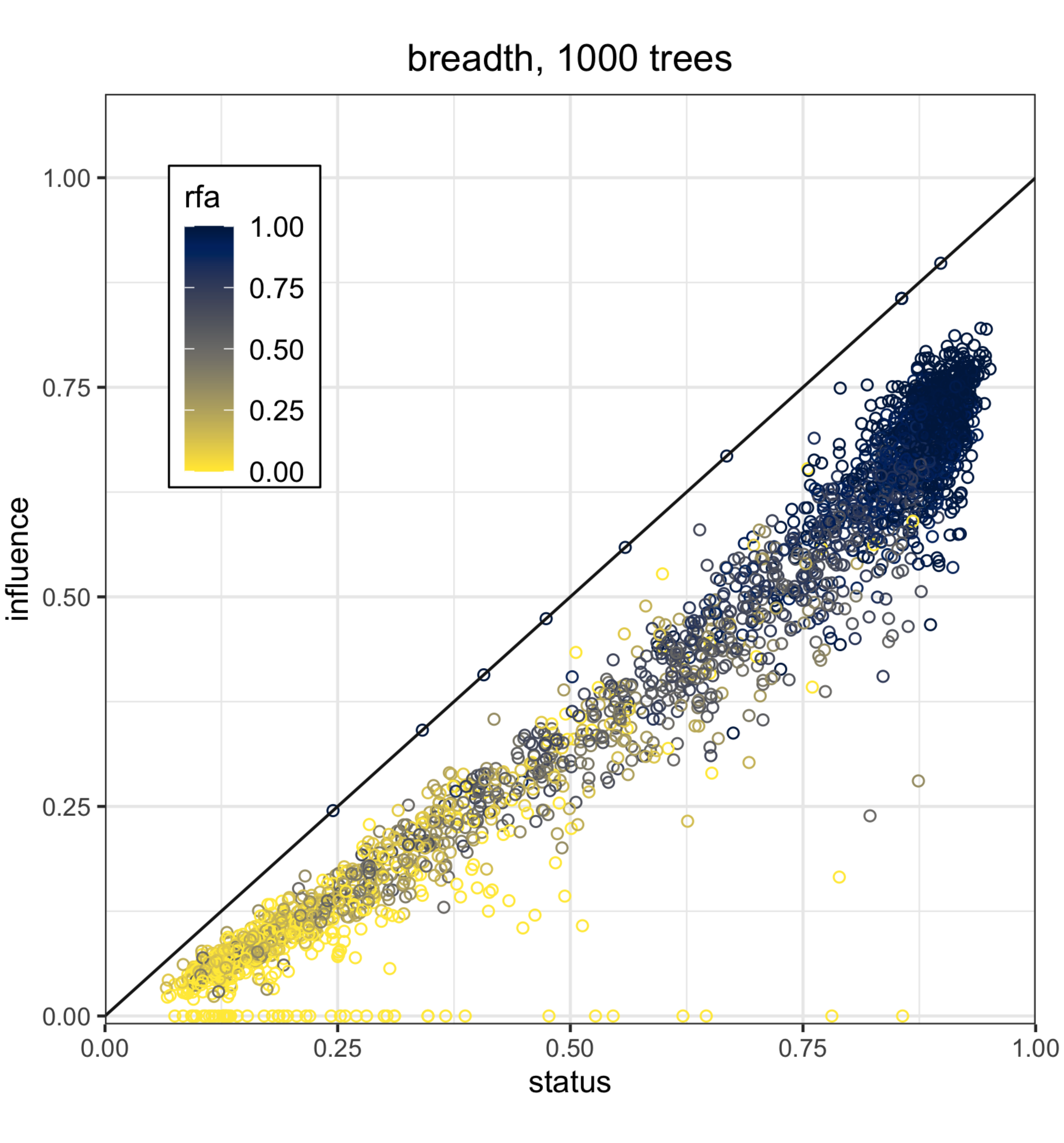}
    \includegraphics[width=3in]{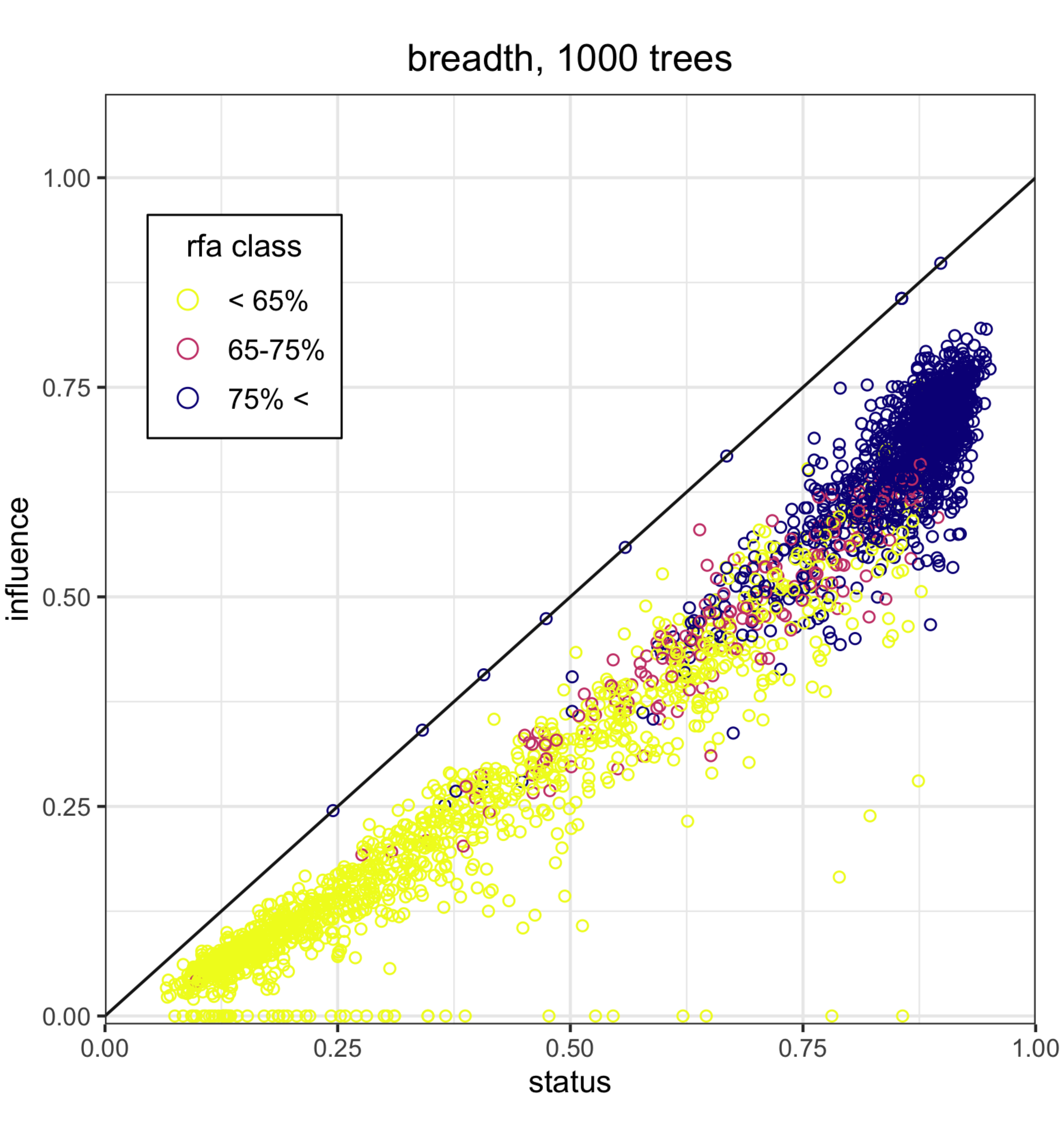}
    \caption{Wikipedia data analysis of the voting process from status vs. influence perspective. Left: RfA score for nominees ramped from yellow to blue (light-to-dark). Right: Binned by $<.65\%$ in yellow, $[.65,.75)$ red, and $[.75,1]$ blue. }
    \label{fig:WikiStatusInfluence2}
\end{figure}

Our algorithm for Wikipedia adminship uncovered several interesting cases. Wiki ID \emph{80-man} had a status of $0.919$, an influence of $0.622$, and lost the elections. The user primarily edited a lot of music pages, and was deemed ``not ready'' for adminship yet based on other criteria. Wiki ID \emph{bozmo} had a status of $0.405$, an influence of $0.277362637$, and won the elections. He self-nominated after around 3 years of contributions, and he received a fair amount of opposition due to being less active around the time of nomination. It is unclear why he won. Wiki ID \emph{tjstrf} had a status of $0.905$, an influence of $0.649$, and lost. Due to some discussion of sensitive topics, he rejected the promotion. Wiki ID \emph{dmn} had a status of $0.448$, an influence of $0.279753623$, and won the election. Further research showed they have been on Wikipedia for over 16 years, made regular edit contributions for a year, and have a history of conflicts and controversial comments. All four cases show that our algorithm uncovered atypical promotion or lack thereof, even if RfA was within guiding limits.  The Wikipedia administrator election outcome analysis using the graphB approach allows for a fast, objective snapshot of the outcome, as it allows users to flag nominees whose outcome is not in balance with the rest of the attitudinal network. If the outcome is known, as it is in Wikipedia adminship, graphB is used to flag unexpected outcomes for editor review. 

\subsection{Slashdot Zoo}
\label{ssec:Slashdot}

\begin{figure}[!ht]
    \centering
     \includegraphics[width=3in]{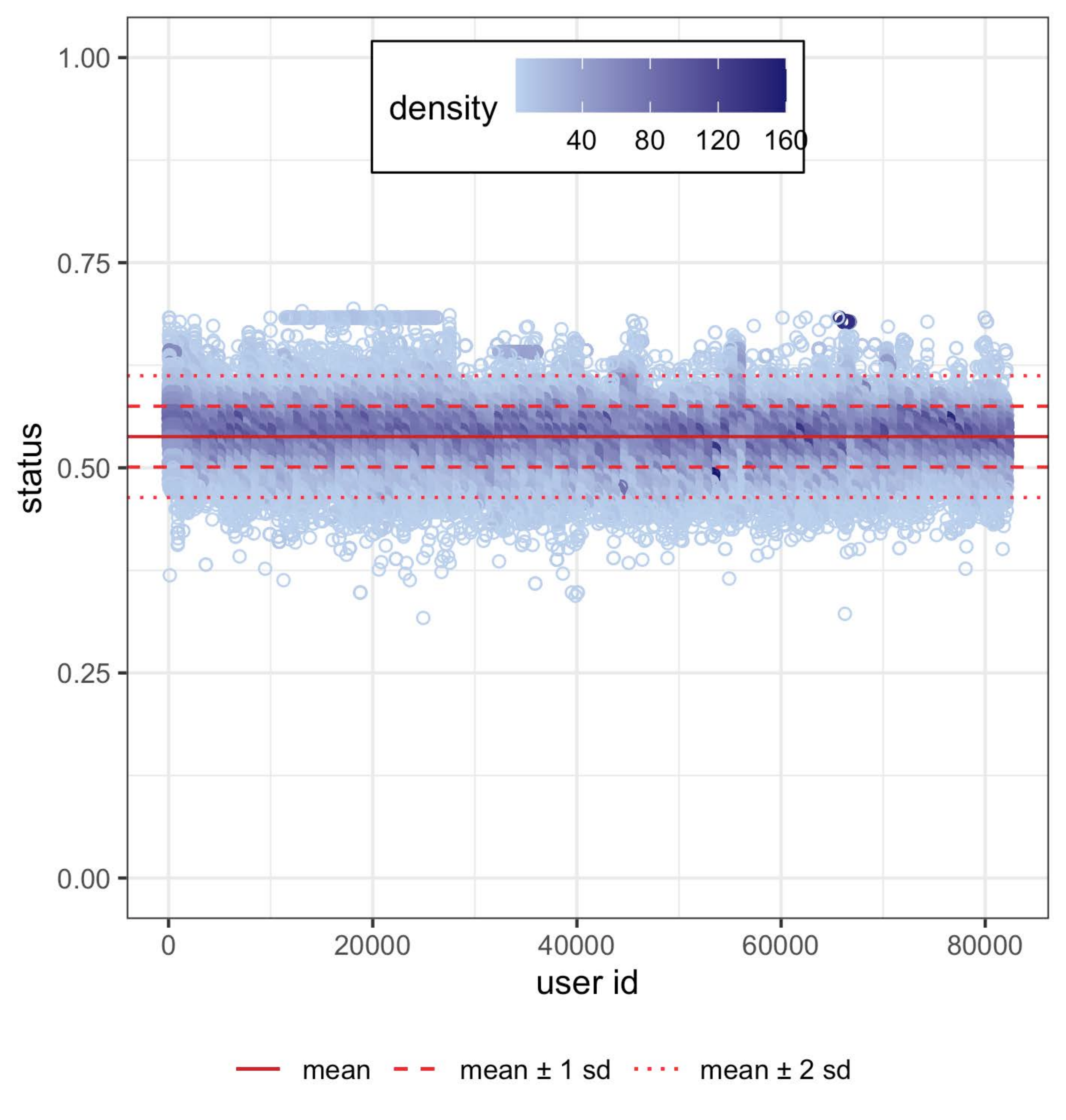}
     \includegraphics[width=3in]{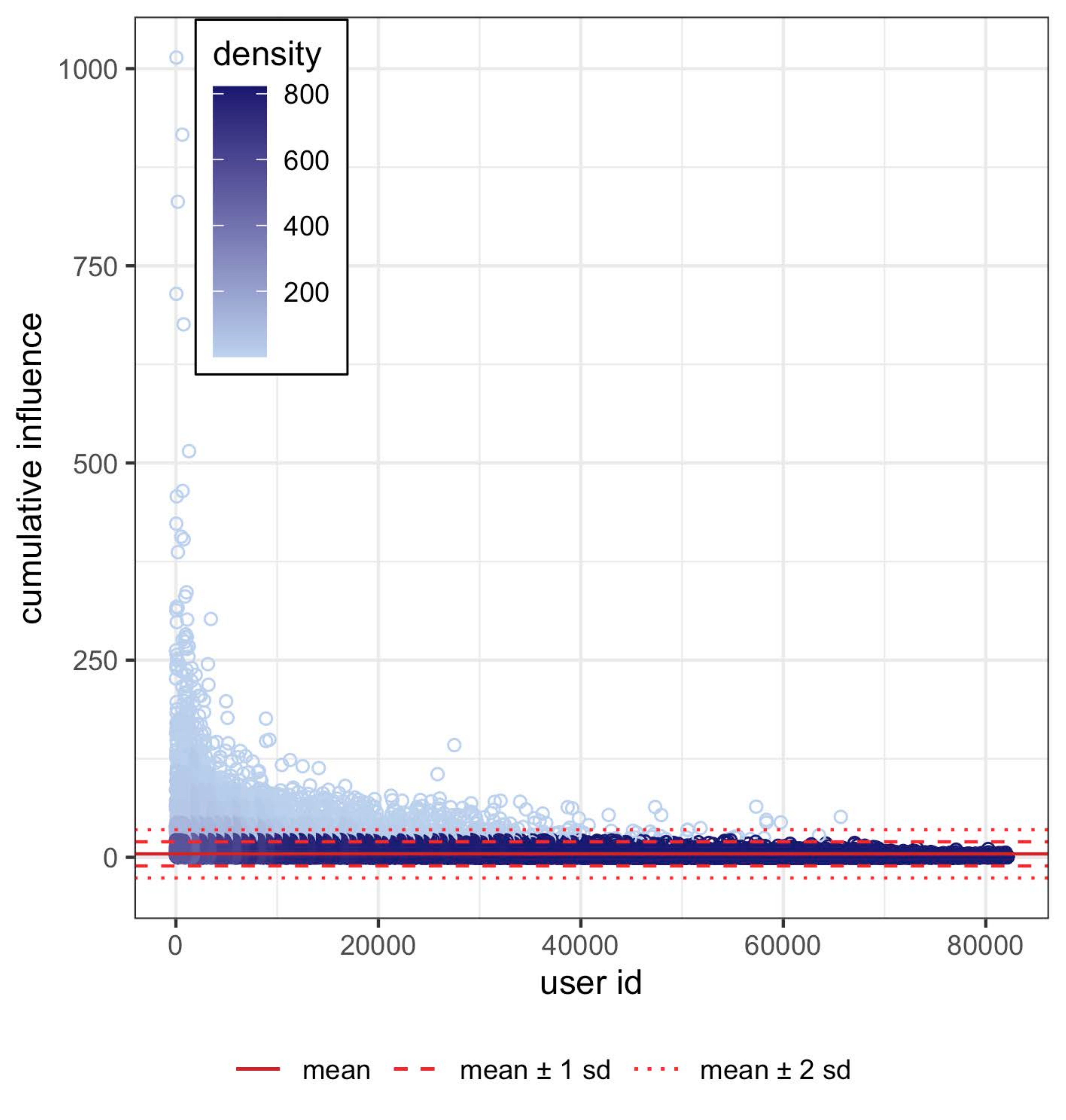}
    \caption{Slashdot friend-foe network analysis using frustration cloud approach and $N=1000$ breadth-first spanning trees: status density and influence density.}
    \label{f:Slashdot}
\end{figure}

Slashdot Zoo is a signed social network with $82,144$ users (vertices) and 549,202 edges; $77.4\%$ edges are positive \cite{snapnets}.  Edge direction and weight annotates that the origin user tagged the target user as a friend (weight +1) or foe (target -1) \cite{snapnets}.  The largest connected component of this data contains $82,052$ ($99.89\%$) users. The type of analysis presented in Section~\ref{ssec:Wiki} can be expanded to any attitudinal dataset in light of status and influence, with or without the outcome. There is no outcome for Slashdot Zoo data.

We construct the attitudinal graph from the friend-foe relationships and analyze the status and influence of users.  Results are presented in Figures~\ref{f:Slashdot} and \ref{f:Slashdot2}, with the frustration cloud and $n=1000$ breadth-first balance tree discovery. In the Slashdot analysis, we consider vertex degree and remove normalization from Defn.~\ref{d:influence} to analyze cumulative influence. Figure~\ref{f:Slashdot} shows the density distribution of the status and cumulative influence over the entire network. The second image in Figure~\ref{f:Slashdot} clearly shows higher influence for early adopters of the network.

Figure~\ref{f:Slashdot2} illustrates Lemma \ref{l:StatusCone} for the influence and status relation. The vertices on the slope status = influence line are a single pendant vertex whose edge is positive (RfA is one, degree is 1); the slope influence = $0$ line is a single pendant vertex whose edge is negative (RfA is 0, degree is 1); the influence is always smaller than the status value by Defn.~\ref{d:status} and Defn.~\ref{d:influence}. The angular outliers corresponding to single-decision outcomes (positive slope $1$, and negative slope $0$) and the radial outliers are the most/least influential nodes in the network. This influence-status cone analysis allows us to analyze the measurements as a function of node degree. See Figure~\ref{f:Slashdot2} colored by node degree. The users with high node degree, overwhelmingly positive votes, mid-range influence, and high status are excellent moderator candidates in this set. 

\begin{figure}[!ht]
    \centering
     \includegraphics[width=3in]{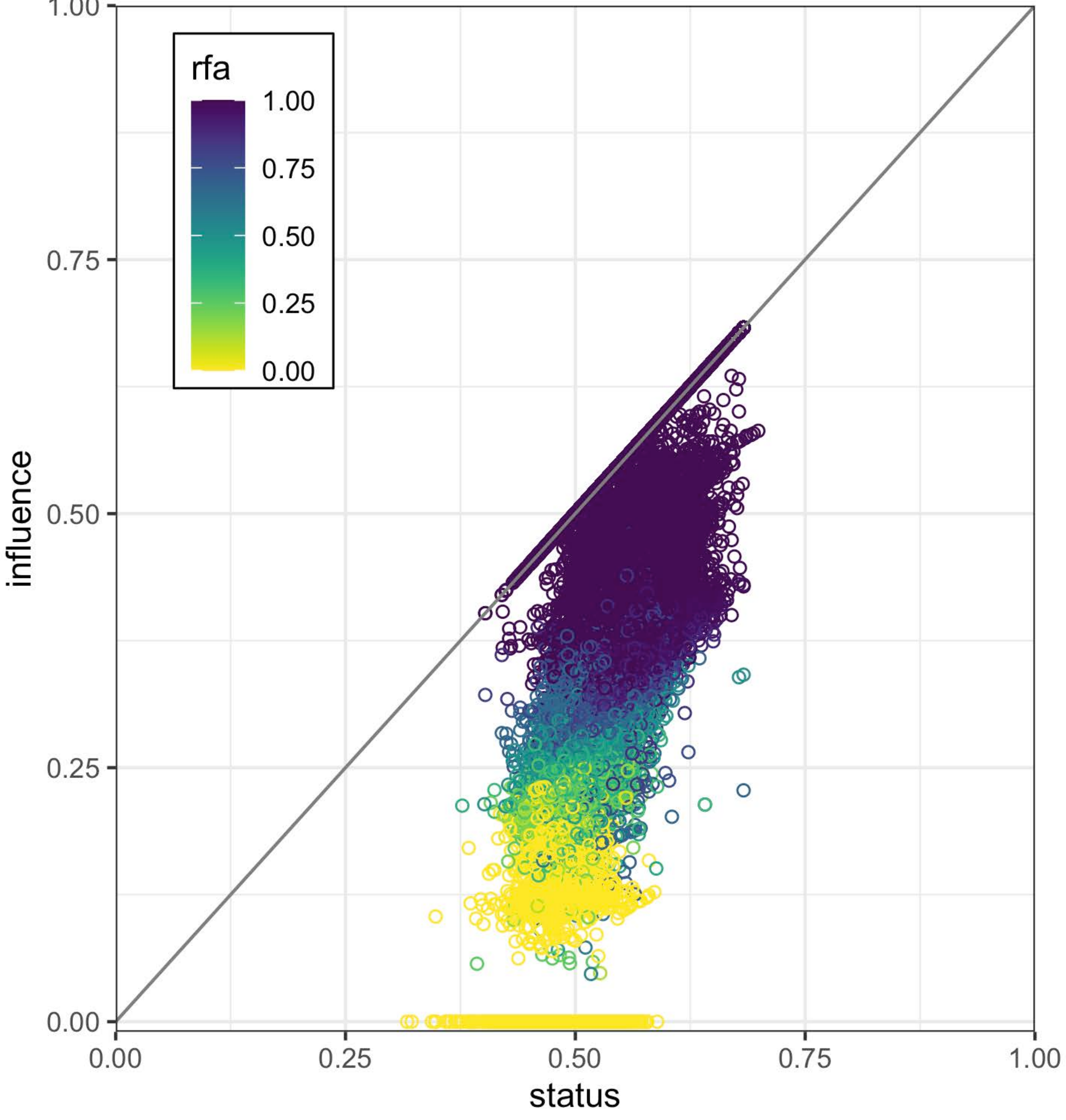}
     \includegraphics[width=3in]{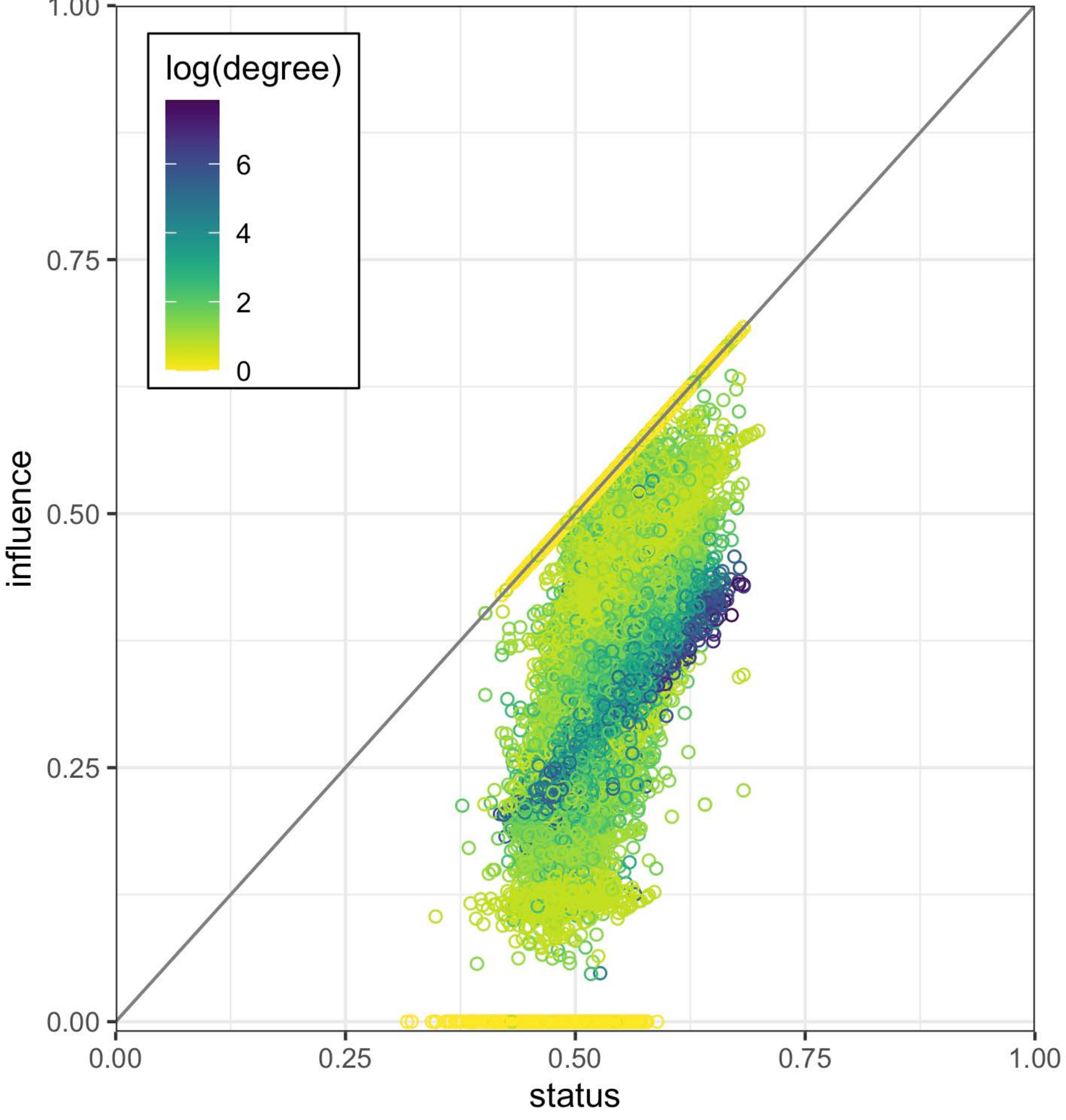}
    \caption{Slashdot friend-foe network analysis using the frustration cloud approach and $N=1000$ breadth-first spanning trees: influence vs. status by RfA, and influence vs. status by log vertex degree.}
    \label{f:Slashdot2}
\end{figure}

\subsubsection{Scaling graphB Implementation to Large Signed Graphs}
\label{ssec:Slashdot2}

The overall complexity of the implemented graphB algorithm ($https://github.com/DataLab12/graphB$) is $O(n \cdot v \cdot e)$ for run time and $O(v^2)$ for memory consumption, where $n$ is number of spanning trees, $v$ is number of vertices, and $e$ is number of edges. The Slashdot dataset \cite{snapnets} is the largest dataset we have processed to date with over 82,000 vertices. The memory requirement to keep and process $O(v^2)$ matrices required us to upgrade to high memory nodes on HPC \cite{LEAP} to run the code for Slashdot data.  

Scaling bottleneck in our graphB implementation \cite{graphB} was the tree discovering and tree balancing step with $O(n \cdot v \cdot e)$ complexity. The number of discovered cycles is linear with the number of edges and vertices, and the time to balance a graph per spanning tree became prohibitively high. We have implemented Apache Spark parallelization for finding spanning trees and fundamental cycles (as the process is independent for each spanning tree) to utilize the computing cluster and overcome computing issues. The released graphB code \cite{graphB} allows the user to measure the timing of each step, and with Apache Spark parallelization we have achieved speedup of 22.3 times, as shown in Figure~\ref{fig:SparkSpeedup}.  

\begin{figure}[!ht]
    \centering
     \includegraphics[width=4in]{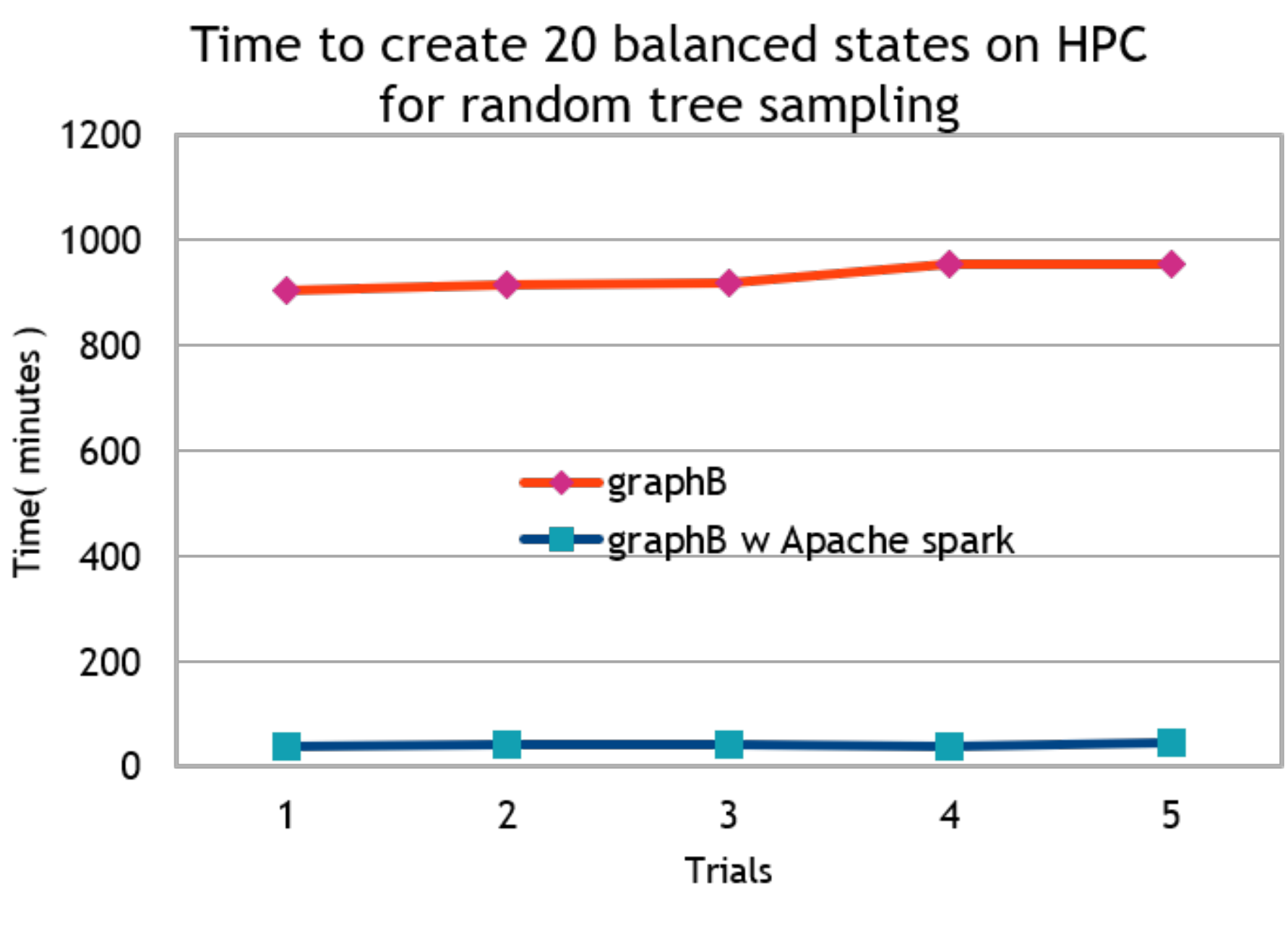}
    \caption{Timing of balancing graphs using 20 spanning trees and Slashdot data on LEAP cluster.}
    \label{fig:SparkSpeedup}
\end{figure}

\subsection{Highland Tribes}
\label{ssec:Highland}

The frustration cloud-based approach allows for a more robust way to analyze the perceived outcome in an attitudinal network graph, as it is based on the mathematical sociology model for a balanced system. The Highland Tribes datasets captures the alliance structure of a network of tribes in the Eastern Central Highlands of New Guinea \cite{1954Read}. The network contains sixteen tribes (vertices), and the edges represent agreement (``rova'') or animosity (``hina'') between two tribes, as illustrated in Figure~\ref{fig:HTiebreak} with solid lines for agreement and dashed lines for animosity.  Read's ethnography portrayed an alliance structure among three tribal groups containing balance as a special case, as the enemy of an enemy can be either a friend or an enemy \cite{1983Hage}. There are 16 vertices (tribes) and 58 signed edges (tribe relations): 29 positive (sign +1) and 29 negative (sign -1). The signed graph $\Sigma$ for the Highland tribes dataset is constructed by adding the two provided matrices.  

\subsubsection{Experiment: Vertical Status}
\label{sssec:Exp4}

\begin{figure}[!ht]
    \centering
  \includegraphics[width=4in]{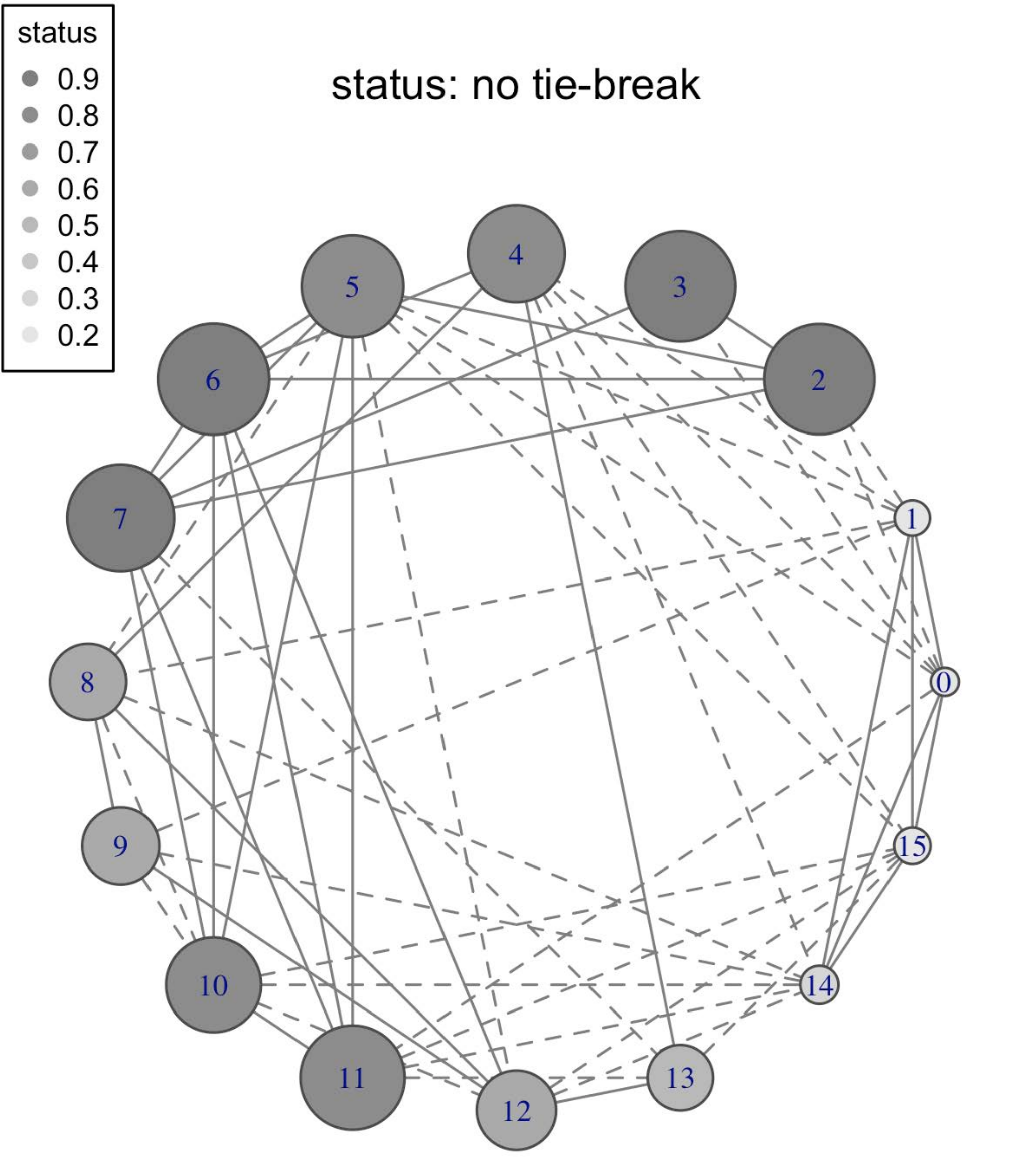}
    \caption{Highland Tribe Status computation for breadth-first sampled $1000$ trees: solid circles are tribes, solid lines are agreeable relations, dashed lines are antagonistic relations between two tribes, the size of the vertex circle illustrates computed status for that circle.}
    \label{fig:HTiebreak}
\end{figure}

The Highland Tribes graph has $402,506,278,163$ spanning trees, and we sample $1000$ spanning trees using the breadth-first approach for this experiment. Highland Tribes relations in Figure~\ref{fig:HTiebreak} separate two groups of tribes. The gray shade and size of the vertex circle correspond to the computed status per Defn.~\ref{d:status} and Defn.~\ref{d:verticalStatus} vertices $0$, $1$, $14$, and $15$ form a smaller group, and it is reflected in the lowest status scores for those $4$ vertices and lower overall status. This \emph{agrees} with the spectral clustering analysis in Section \ref{sssec:Exp5}. We also examine the Conservation Law of Controversy from Section~\ref{ssec:controversy} by examining tie-break rule changes and calculating the status of the vertices: one for vertex $0$, which has minimum status, and one for vertex $6$, which has maximum status. These new status values represent the hypothetical maximum that vertex $0$ or vertex $6$ may achieve. The corresponding temperature graph shows how the vertical status maximizes status for the selected vertex and connected vertices in Figure~\ref{fig:HTiebreak1}.  Figure~\ref{fig:HTiebreak1} (left) illustrates the change in status when the tie-break node is from the smaller cluster. While the status of all 4 nodes in that community grows significantly at the expense of the reduced status of vertices in the majority group, Figure~\ref{fig:HTiebreak} (right) illustrates the maximization of the status in the majority group if vertex 6 is selected as a tie breaker.

\begin{figure}[!ht]
    \centering
  \includegraphics[width=3in]{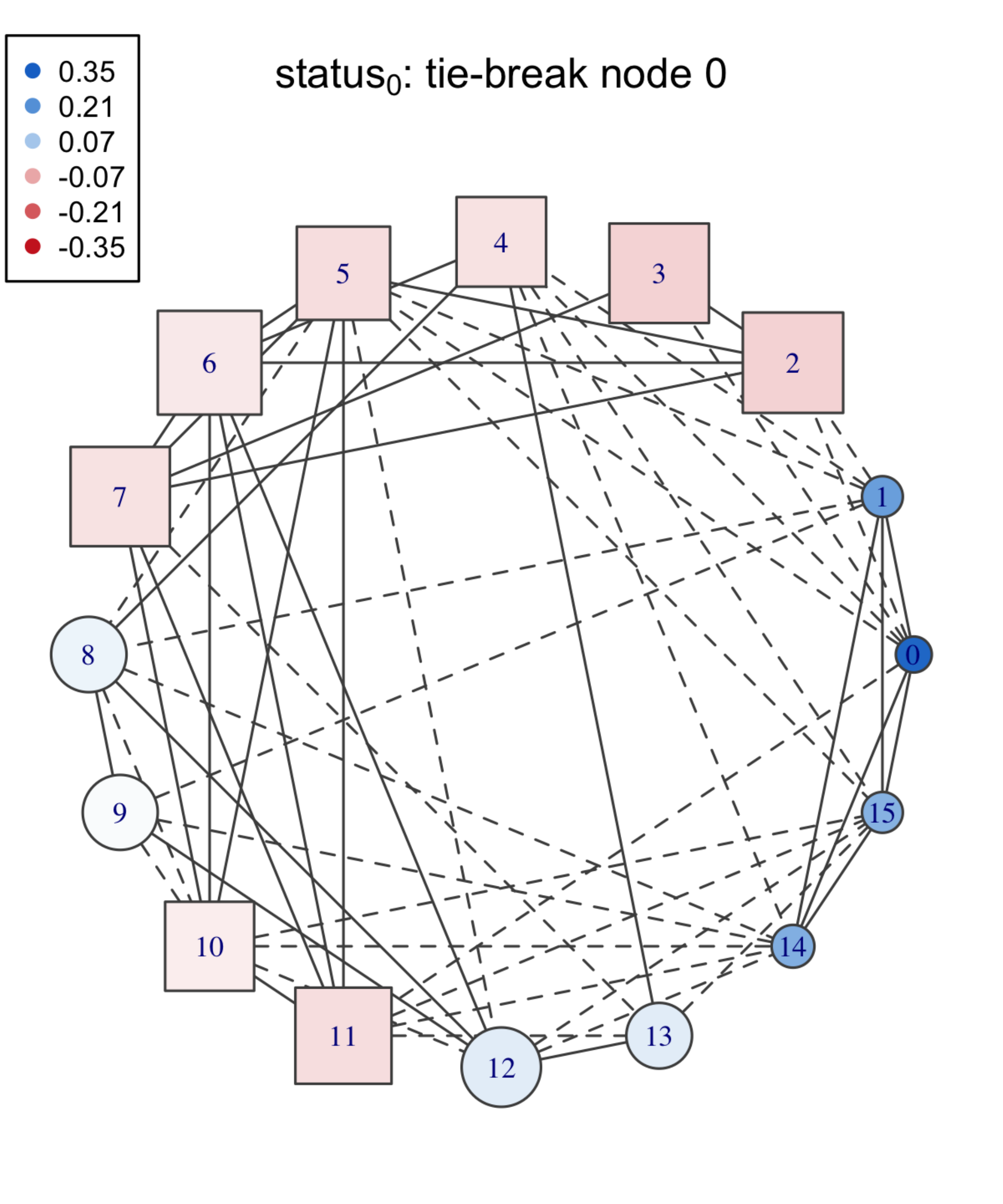}
  \includegraphics[width=3in]{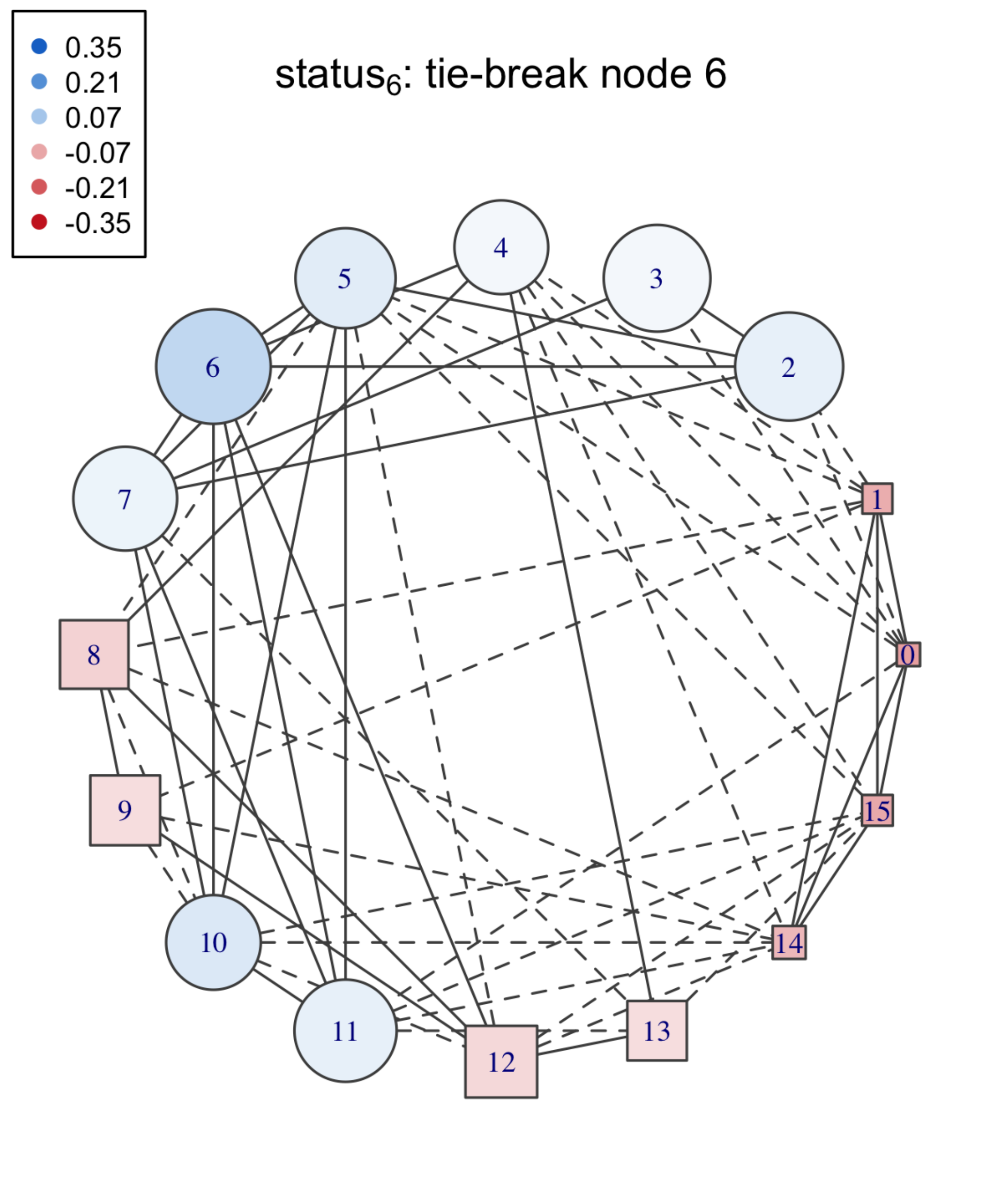}
    \caption{Highland Tribe Verticial Status computation for breadth-first sampled $1000$ trees. Left: Vertex $0$ breaks ties; Right: Vertex $6$ breaks ties. Blue circles represent an increased status and red squares represent a decreased status relative to the original status.}
    \label{fig:HTiebreak1}
\end{figure}

Figure~\ref{fig:Highland} illustrates the status difference per vertex ID for each of $status_6$, $status_0$, and $status$.  The solid black line demonstrates the Conservation Law, as \emph{controversy} for the Highland graph is constant at $10/16$. Also, the status/influence correlation for the Highland dataset has an $R^2 = 0.81$. This may indicate an isolated system, as demonstrated in Figures \ref{fig:WikiAllTree} and \ref{fig:WikiNoTree} on the Wikipedia dataset --- the correlation between status and influence is high when restricted to just the nominees. This means there is no tribe outside of the system acting in a supervisory role.

\begin{figure}[!ht]
  \centering
  \includegraphics[width=3in]{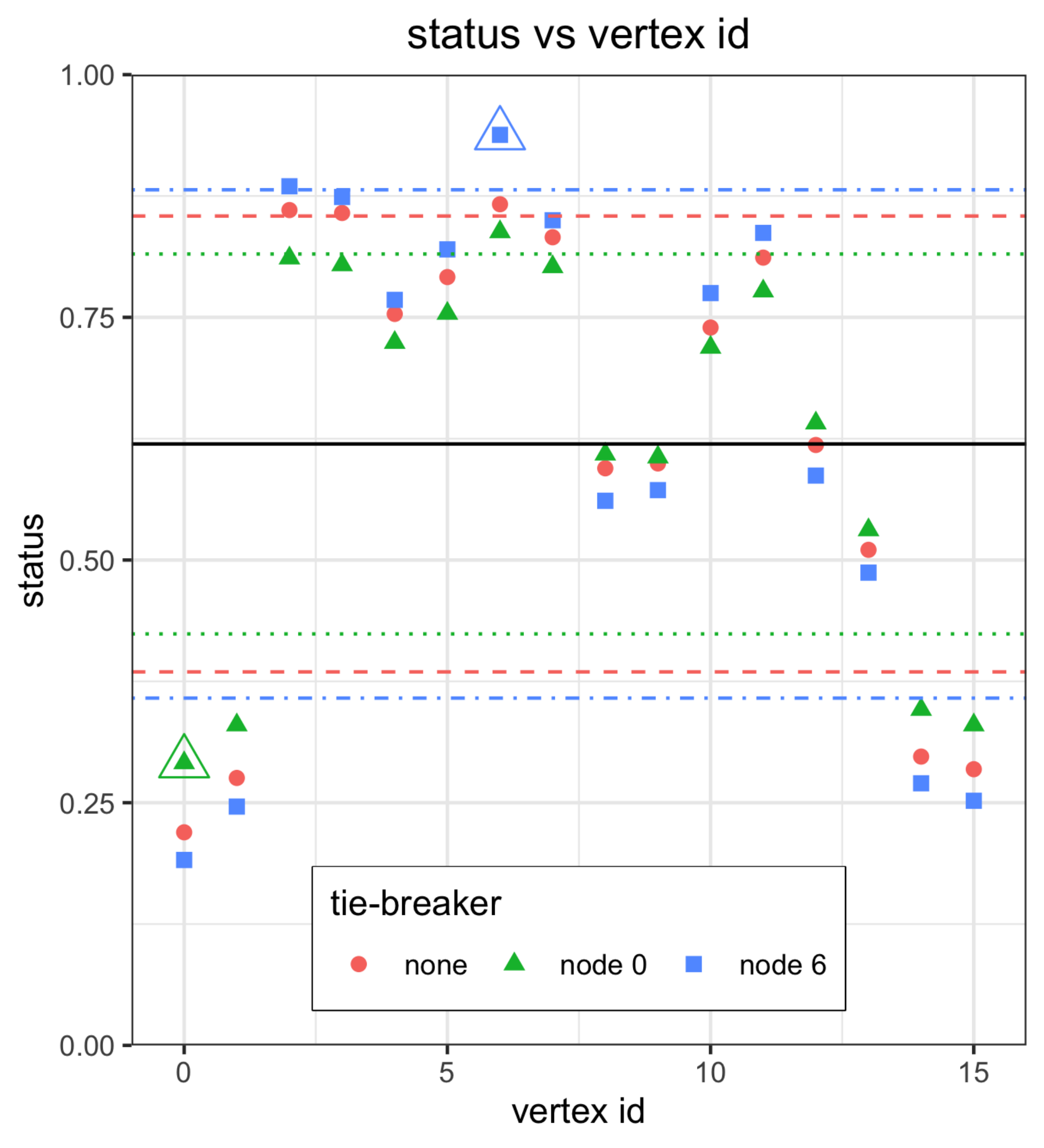}
  \includegraphics[width=3in]{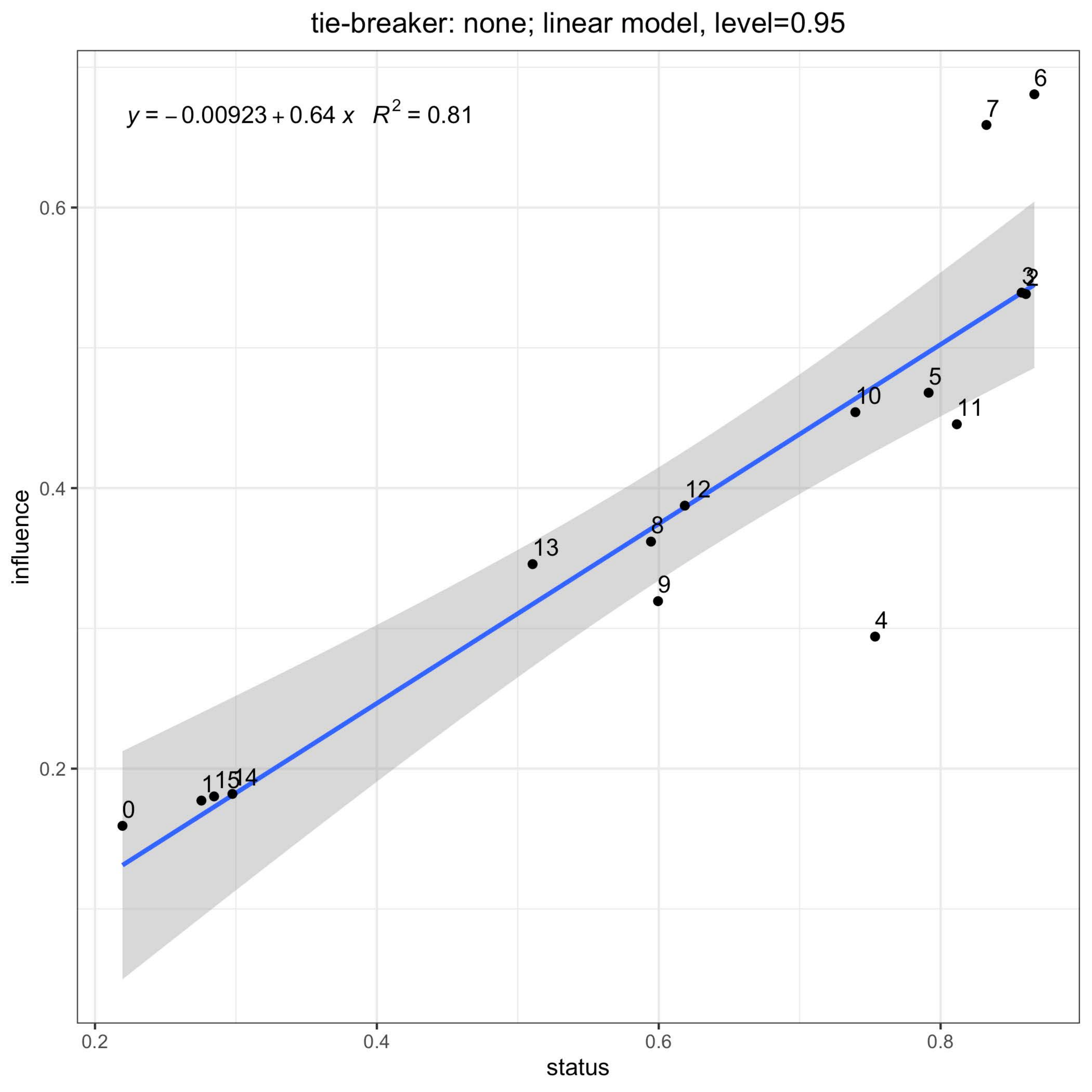}
  \caption{Left: The status of each vertex for no-tie-break (red circle), vertex $0$ breaks ties (green triangle), and vertex $6$ breaks ties (blue square). Tie-break vertices are outlined with an open triangle. Right: Highland Tribes status/influence correlation for no 0 tiebreak.}
  \label{fig:Highland}
\end{figure}

\subsubsection{Experiment: Signed Spectral Clustering vs. Frustration Cloud} 
\label{sssec:Exp5}

Figure~\ref{fig:Highland} shows influence as a function of vertex ID, and the same 4 nodes have the lowest computed influence under each tie-break scenario. An interesting observation on nodes  4, 6, and 7 is that they all have high status; the influence of vertex  4 (\emph{Nagam} tribe) is less than its status in the network, and the influence of vertices 6 and 7 (\emph{Masil} and \emph{Ukudz} tribes) is higher than their status in the network. graphB analysis provides a simple, unbiased view into Highland data and flags 3 out of 16 tribes to be re-examined more deeply by anthropology experts. The clusters in Figure~\ref{f:Spectral} are calculated only using positive edge spectral clustering to detect nearly-connected non-adversarial relationships. We demonstrate that status is a spectrum of spectral clustering, as anticipated in Section \ref{ssec:practical}.  The circled blue cluster is the same as the low status group we originally identified. The inclusion of the negative edge information for meaningful signed spectral clustering must be examined \cite{Kunegis2010}, especially in highly adversarial networks.  For a dataset like Highland, signed spectral clustering  \cite{Kunegis2010} produces the same result for $k=2$ and $k=3$, as illustrated in Figure~\ref{f:Spectral1}.

\begin{figure}[!ht]
  \centering
  \includegraphics[width=2.75in]{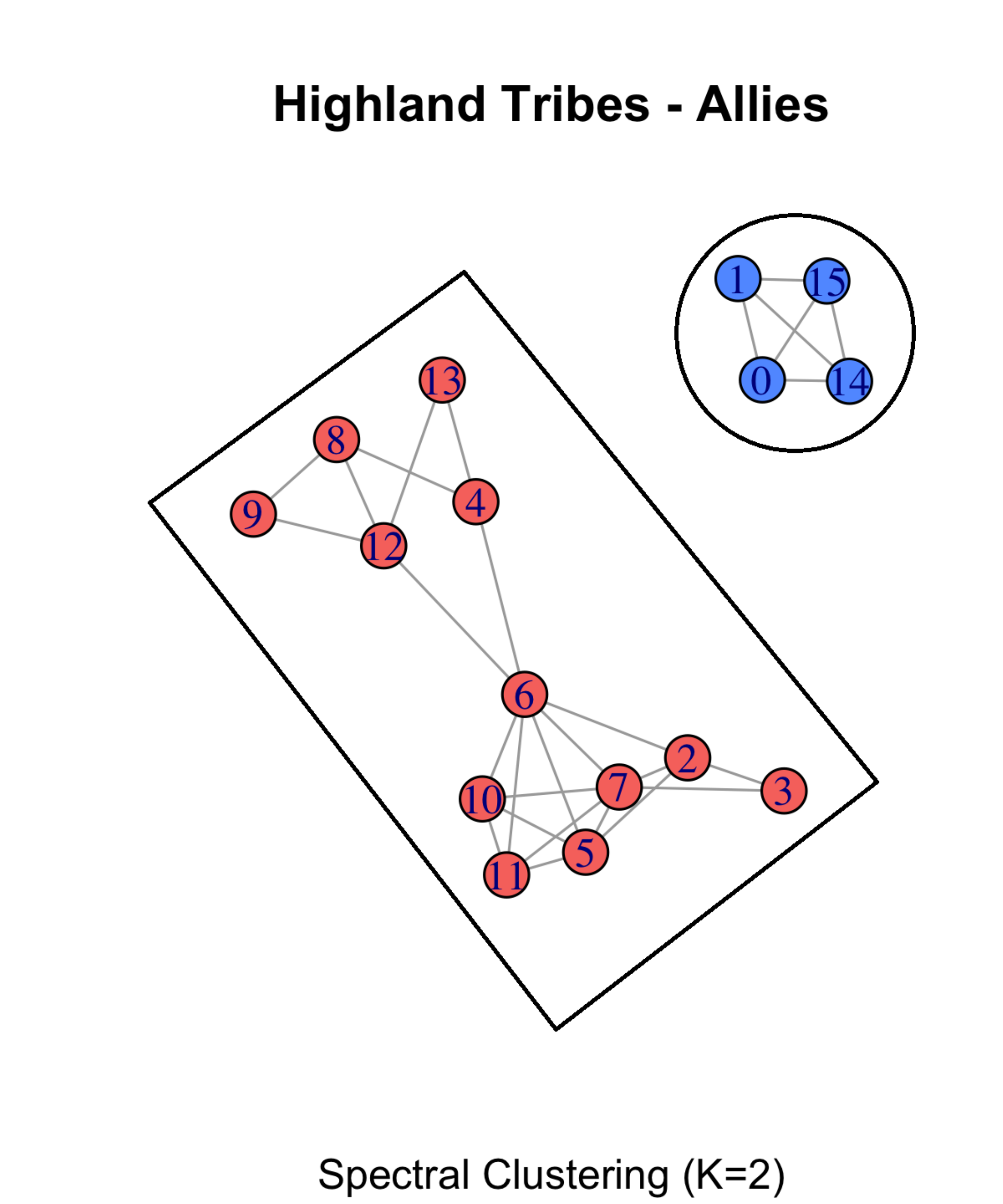}
  \includegraphics[width=2.75in]{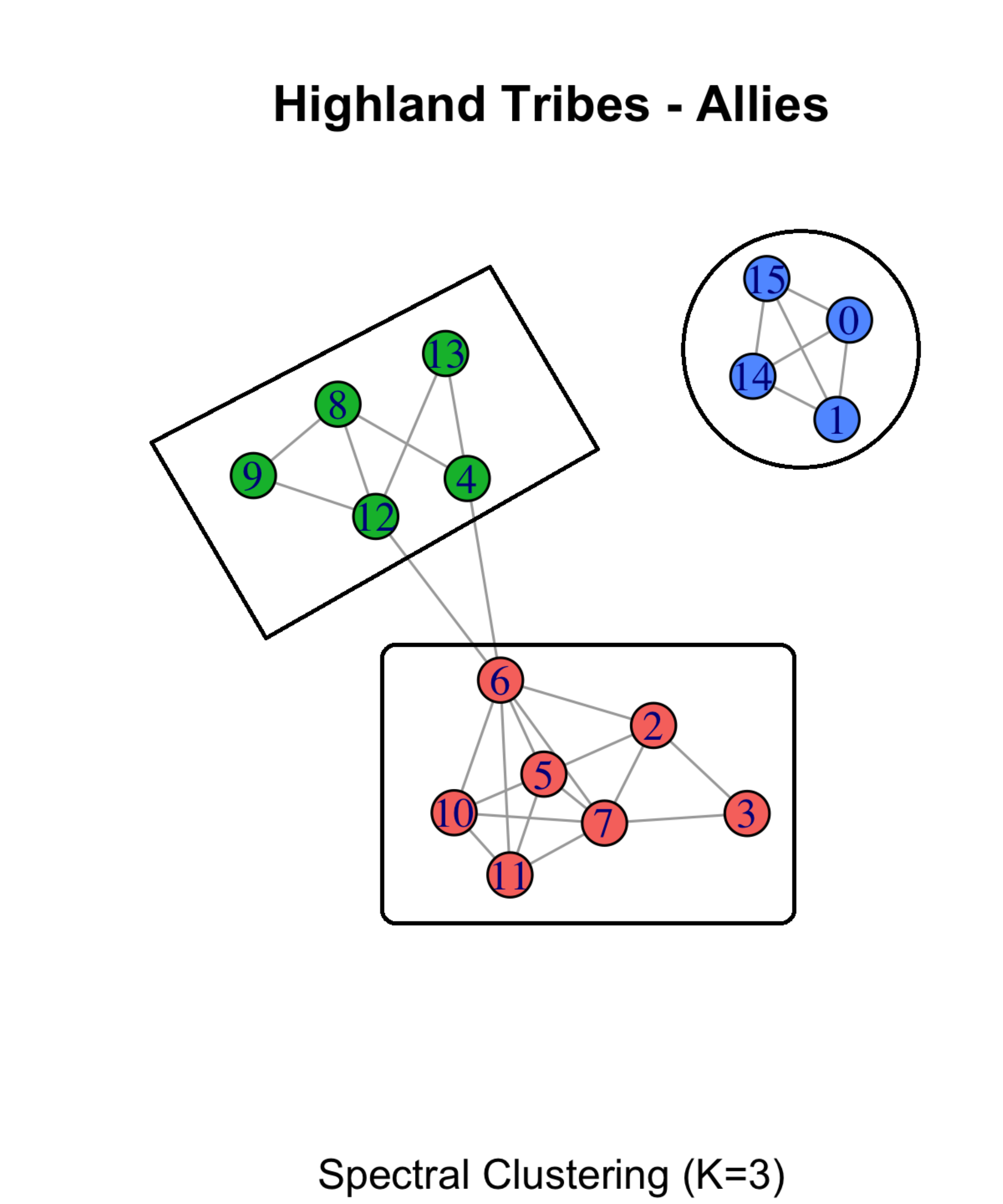}
  \caption{Spectral clustering of Highland Tribes data using positive edges for $k=2$ and $k=3$;}
  \label{f:Spectral}
\end{figure}

Figure~\ref{f:Spectral1} marks vertices by cluster belongings (color and shape) in status/influence space. Clusters are computed using signed spectral clustering implementation, and it is clear that status and influence capture spectrum of spectral clustering, as illustrated in Figure~\ref{f:Spectral1}. This experiment indicates that the robustness of our status/influence model means we do not need the matrix or the eigenvalues to cluster vertices. Moreover, here is no need to specify $k$ for spectral clustering, as the status/influence cone groups nodes in 2-D space. A study on more degenerate, adversarial networks is necessary to determine if status can provide insight into networks where spectral clustering fails.

\begin{figure}[!ht]
  \centering
  \includegraphics[width=2.75in]{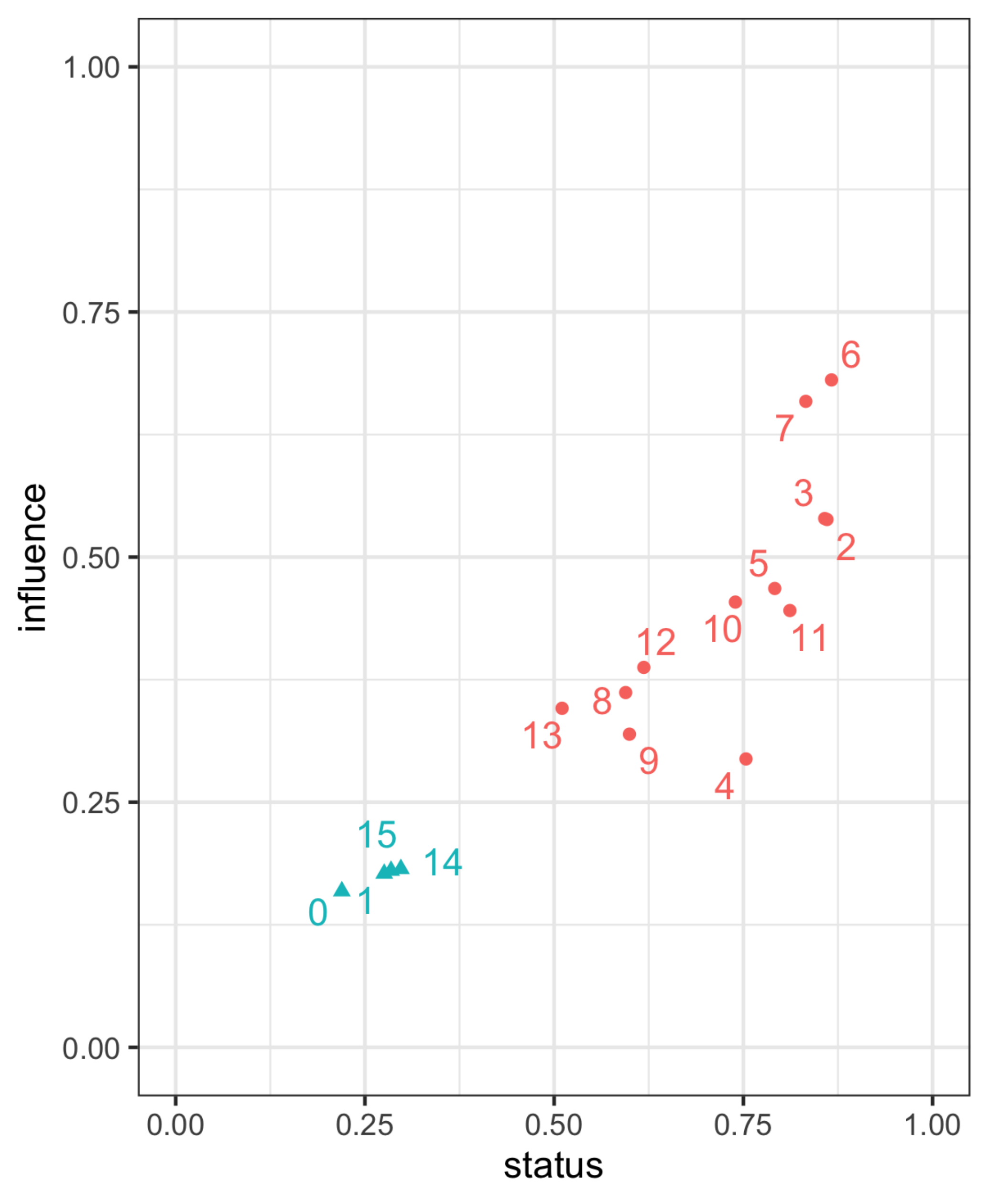}
  \includegraphics[width=2.75in]{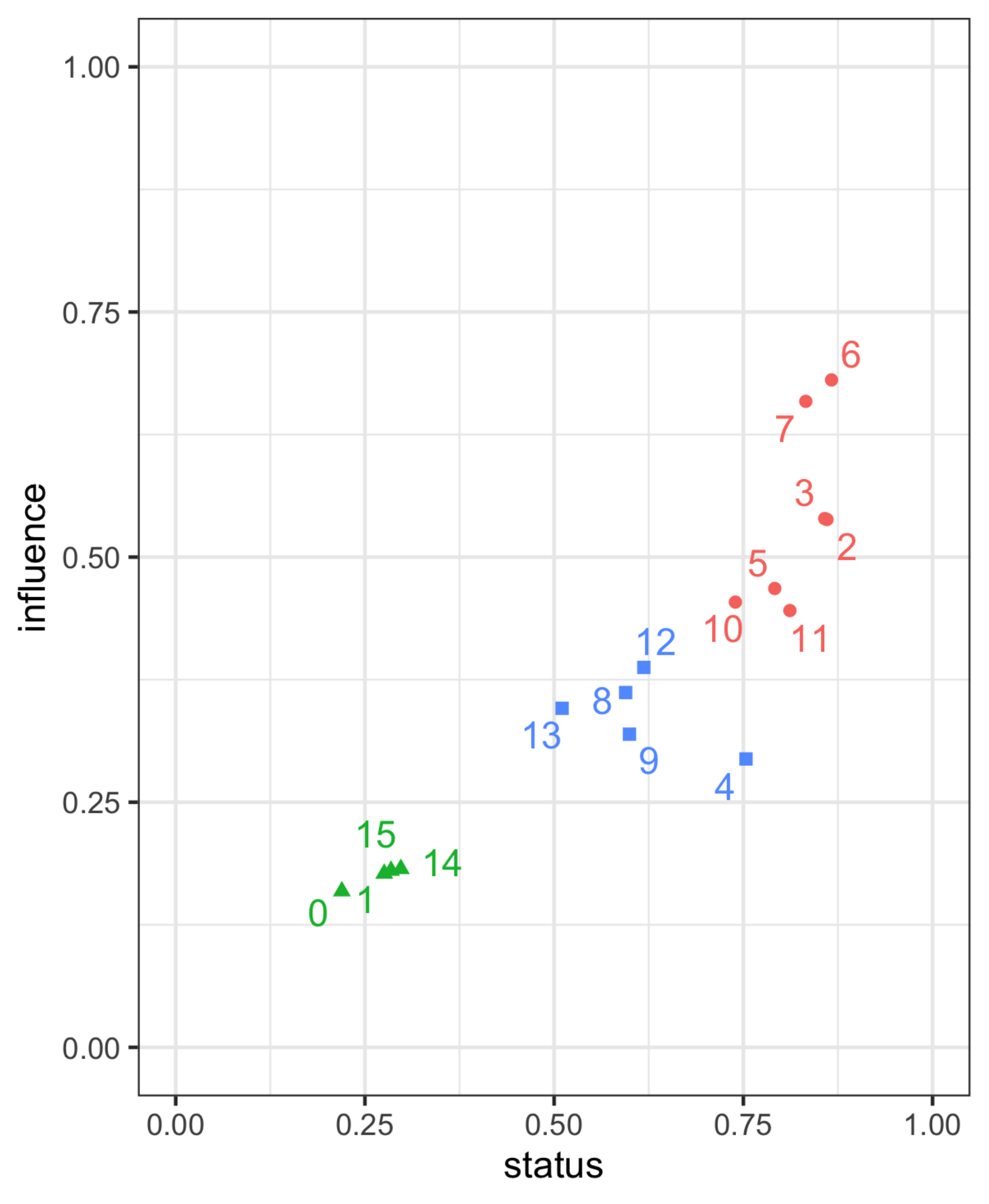}
  \caption{Signed spectral clustering \cite{Kunegis2010} using symmetric Laplacian results for Highland Tribes data for $k=2$ and $k=3$ in status/influence space demonstrate correspondence of radial distance in status/influence space to clustering. Each clusters is represented with a different shape/color.}
  \label{f:Spectral1}
\end{figure}

\section{Conclusion and Future Work}

In this paper, we propose consensus-based quantification of the vertices and edges in a graph. Nearest balanced states of a given network model attitudinal strength and influence in the network through various measures of a network's ability to reach a balanced state with a minimal number of edge sign changes. We introduce the concept of "frustration cloud" and vertex scores that capture vertex relation to the nearest balanced state of the network. The frustration cloud replaces the necessity of determining a balanced state with the least number of sentiment changes by determining a set of nearest states with minimal sentiment disruption. This also trades the NP-hardness of the frustration index for determining fundamental cycles and spanning trees. 

We introduce new metrics that quantify the importance of each balanced state relative to the likelihood it will be become the consensus state. The tree-search methods are shown to provide resolution to the data, while the quality of the resolution appears to be indistinguishable beyond $n=1000$ spanning trees, as seen in Figures \ref{fig:WikiSamplingStatus} and \ref{fig:WikiSamplingInfluence}. The measures of vertex status and influence both demonstrate differences in power dynamics when they are not correlated, as seen in Figure \ref{fig:WikiAllTree} and Figure \ref{fig:WikiNoTree}. These vertex scores provide an alternative to examine existing promotional practices, as indicated in Figure~\ref{fig:RfAnalysis}, and flag anomalous users. 

The status/influence cone provides a view of fairness and power for the data as shown in Figure~\ref{fig:WikiStatusInfluence}. The social network at scale produces a different shaped status/influence cone in Figure~\ref{f:Slashdot2} that indicates a large separation of power as status remains relatively constant when compared to influence. The smaller dataset of Highland Tribes demonstrates quickly reproducible results for the Conservation Law of Controversy (average status is constant regardless of tie-break node) in Figure~\ref{fig:Highland}, as well as the efficacy of our proposed method for spectral clustering in Figure~\ref{f:Spectral1}. 

Consensus modeling has gained traction as a way to model agreement in multi-agent networks in the presence of antagonistic interactions \cite{2013Altafini,She2020}. We plan to apply proposed balancing theory to a multi-agent network agreement to see if we can identify and tune existing policies and detect biased agents (AI modules) in the network. Future research will also focus on comparison of proposed approach to signed and weighted sign graph spectral clustering \cite{Aref2016, Mercado}. We are developing metrics to measure the strength of vertex and edge interactions, a measure that utilizes known promotional outcomes to detect and quantify bias for the majority/minority, and a toolkit for deeper analysis of the proposed metrics.   

\paragraph{Acknowledgements:} 
We would like to thank Data Lab alumni students Joshua Mitchell for initial implementation and graphB 1.0 proof-of-concept and Eric Hull for graphB 2.0 proof-of-concept and code release. We would like to acknowledge Ph.D. student Maria Tomasso and credit her for preliminary data analysis and visualizations.  We would like to thank Texas State for its support through startup funding, computational facilities, and faculty development programs.

\end{document}